\documentclass[11pt,letterpaper,english,onecolumn,draftclsnofoot]{IEEEtran}
\usepackage[T1]{fontenc}
\usepackage[latin9]{inputenc}
\usepackage{color}
\usepackage{babel}

\usepackage{verbatim}
\usepackage{amsmath}
\usepackage{graphicx}
\usepackage{amssymb}
\usepackage[unicode=true, pdfusetitle,
 bookmarks=true,bookmarksnumbered=false,bookmarksopen=false,
 breaklinks=false,pdfborder={0 0 1},backref=false,colorlinks=false]
 {hyperref}

\makeatletter

\providecommand{\tabularnewline}{\\}

\let\oldforeign@language\foreign@language
\DeclareRobustCommand{\foreign@language}[1]{%
  \lowercase{\oldforeign@language{#1}}}
\newtheorem{definitn}{Definition}
\newtheorem{lemma}{Lemma}
\newtheorem{thm}{Theorem}
\newtheorem{cor}{Corollary}
\newtheorem{remrk}{Remark}
\newtheorem{example}{Example}

\usepackage{cite}
\usepackage{times}
\usepackage{hyperref}

\makeatother

\begin{document}

\title{\textcolor{black}{Technical Report: MIMO B-MAC Interference Network
Optimization under Rate Constraints by Polite Water-filling and Duality}}

\author{\textcolor{black}{\normalsize An Liu, Youjian (Eugene) Liu, Haige
Xiang, Wu Luo}%
\thanks{\textcolor{black}{The work was supported in part by NSFC Grant No.60972008,
and in part by US-NSF Grant CCF-0728955 and ECCS-0725915. An Liu (Email:
wendaol@pku.edu.cn), Haige Xiang, and Wu Luo are with the State Key
Laboratory of Advanced Optical Communication Systems \& Networks,
School of EECS, Peking University, China. Youjian Liu is with the
Department of Electrical, Computer, and Energy Engineering, University
of Colorado at Boulder, USA. The corresponding author is Wu Luo.}%
}}

\maketitle
\vspace{-0.5in}

\begin{abstract}
\textcolor{black}{We take two new approaches to design efficient algorithms
for transmitter optimization under rate constraints to guarantee the
Quality of Service in }general MIMO interference networks, named B-MAC
Networks, which is a combination of multiple interfering broadcast
channels (BC) and multiaccess channels (MAC). Two related optimization
problems, maximizing the minimum of weighted rates under a sum-power
constraint and minimizing the sum-power under rate constraints, are
considered. The first approach takes advantage of existing efficient
algorithms for SINR problems by building a bridge between rate and
SINR through the design of optimal mappings between them so that the
problems can be converted to SINR constraint problems. The approach
can be applied to other optimization problems as well. The second
approach employs polite water-filling, which is the optimal network
version of water-filling that we recently found. It replaces almost
all generic optimization algorithms currently used for networks and
reduces the complexity while demonstrating superior performance even
in non-convex cases. Both centralized and distributed algorithms are
designed and the performance is analyzed in addition to numeric examples.\end{abstract}
\begin{keywords}
\textcolor{black}{Polite Water-filling}, MIMO, Interference Network,
\textcolor{black}{Duality, Quality of Service}
\end{keywords}

\markboth{Peking University and University of Colorado at Boulder Joint Technical
Report, June $28^{\textrm{TH}}$, 2010}{and this is for right pages}

\newpage{}

\section{\textcolor{black}{Introduction}}

\subsection{System Setup and Problem Statement}

\textcolor{black}{We study t}he optimization under rate constraints
for general multiple-input multiple-output (MIMO) interference networks,\textcolor{black}{{}
named MIMO B-MAC networks \cite{Liu_IT10s_Duality_BMAC},} where each
transmitter may send data to multiple receivers and each receiver
may collect data from multiple transmitters. Consequently, the network
is a combination of multiple interfering broadcast channels (BC) and
multiaccess channels (MAC).\textcolor{black}{{} It includes BC, MAC,
interference channels, X channels }\cite{Maddah-Al_IT_MMKforX,Jafar_IT08_DOFMIMOX}\textcolor{black}{,
X networks }\cite{Jafar_09IT_DOFXchannel}\textcolor{black}{{} and most
practical wireless networks, such as cellular networks, WiFi networks,
DSL,} as special cases\textcolor{black}{.} We assume Gaussian input
and that each interference is either completely cancelled or treated
as noise. A wide range of interference cancellation is allowed, from
no cancellation to any cancellation specified by a valid binary \textit{coupling
matrix} of the data links. For example, simple linear receivers, dirty
paper coding (DPC) \cite{Costa_IT83_Dirty_paper} at transmitters,
and/or successive interference cancellation (SIC) at receivers may
be employed.

Two optimization problems are considered for \textcolor{black}{guaranteeing
the Quality of Service (QoS), where each data link has a target rate}.
\textcolor{black}{The feasibility of the target rates can be solved
by a feasibility optimization problem (}\textbf{FOP}\textcolor{black}{)}
which maximizes the minimum of scaled rates of all links, where the
scale factors are the inverse of the target rates. A\textcolor{black}{ll
target rates can be achieved simultaneously if and only if the optimum
of }\textbf{\textcolor{black}{FOP}}\textcolor{black}{{} is greater than
or equal to one.}\textbf{\textcolor{black}{{} FOP}}\textcolor{black}{{}
can be used in admission control. If the target rates are feasible,
the system tries to operate at minimum total transmission power in
order to prolong total battery life and to reduce the total interference
to other networks by }solving the \textcolor{black}{sum power minimization
problem }(\textbf{SPMP}) under the rate constraints\textcolor{black}{.}

\textcolor{black}{We study both centralized and distributed optimizations.
The centralized optimization with global channel state information
(CSI) provides an upper bound of the performance and a stepping stone
}to the design of the \textcolor{black}{distributed optimization algorithms.}
In some cases such as cooperative cellular networks, it is possible
to obtain global CSI if the base stations are allowed to exchange
CSI, making the \textcolor{black}{centralized optimization }relevant.
In ad hoc or large networks, we have to design distributed optimization
algorithms with local CSI.

\subsection{Related Works\label{sub:Soltuions-for-MAC}}

\textcolor{black}{The SINR version of }\textbf{\textcolor{black}{FOP}}\textcolor{black}{{}
and }\textbf{\textcolor{black}{SPMP}}\textcolor{black}{{} under SINR
constraints have been well studied for various cases, e.g., \cite{Madhow_VTC99_LOSsinrdual,Chang_TWC02_BFalgduality,Martin_ITV_04_BFdual,Boche_05TSP_MISOSINR,Rao_TOC07_netduality}
using SINR duality }\cite{Rashid:98,Visotski__VTC99_SIMODual,Boche_VTC02_GenDual_BF,Tse_IT03_MIMO_broadcast}\textcolor{black}{,
}which means that if a set of SINRs is achievable in the forward links,
then the same SINRs can be achieved in the reverse links when the
set of transmit and receive beamforming vectors are fixed. Thus, optimizing
the transmit vectors of the forward links is equivalent to the much
simpler problem of optimizing the receive vectors in the reverse links.\textcolor{black}{{}
However, these algorithms lack the following. 1) They cannot be directly
used to solve }\textbf{\textcolor{black}{FOP}}\textcolor{black}{{} and
}\textbf{\textcolor{black}{SPMP}}\textcolor{black}{{} under rate constraints
because the optimal number of beams for each link and the power/rate
allocation over these beams are unknown; 2) Except for \cite{Boche_05TSP_MISOSINR},
interference cancellation is not considered; 3) The optimal encoding
and decoding order when interference cancellation is employed is not
solved. }

Considering interference cancellation and encoding/decoding order,
the \textbf{FOP} and \textbf{SPMP} for MIMO BC/MAC have been completely
solved in \cite{Nihar_06_ISIT_SymCapMIMODown} by converting them
to convex weighted sum-rate maximization problems for MAC. %
{}\textcolor{black}{The complexity is very high because the }steepest
ascent algorithm for the weighted sum-rate maximization\textcolor{black}{{}
needs to be solved repeatedly for each weight vector searched by the
ellipsoid algorithm.} A high complexity algorithm that can find the
optimal encoding/decoding order for MISO BC/SIMO MAC is proposed in
\cite{Yu_07TOC_MACorder} that needs several inner and outer iterations.
A heuristic low-complexity algorithm in \cite{Yu_07TOC_MACorder}
finds the near-optimal encoding/decoding order for \textbf{SPMP} by
observing that the optimal solution of \textbf{SPMP} must be the optimal
solution of some weighted sum-rate maximization problem, in which
the weight vector can be found and used to determine the decoding
order.

{}

\subsection{Contribution}

In summary, the \textbf{FOP} and \textbf{SPMP} for MIMO B-MAC networks
have been open problems. The contribution of the paper is as follows.
\begin{itemize}
\item \textit{Rate-SINR }\emph{Conversion}: One of the difficulties of solving
the problems is the joint optimization of beamforming matrices of
all links. One approach is to decompose a link to multiple single-input
single-output (SISO) streams and optimize the beamforming vectors
through SINR duality, if a bridge between rate and SINR can be built
to determine the optimal number of streams and rate/power allocation
among the streams. In Section \ref{sec:Rate-SINR}, we show that any
Pareto rate point of an achievable rate region can be mapped to a
Pareto SINR point of the achievable SINR region through two optimal
and simple mappings that produce equal rate and equal power streams
respectively. The significance of this result is that it offers a
method to convert the \textcolor{black}{rate problems to SINR problems.}
\item \textit{SINR based Algorithms}: Using the above result, we take advantage
of existing algorithms for SINR problems to solve\textcolor{black}{{}
}\textbf{\textcolor{black}{FOP}}\textcolor{black}{{} and }\textbf{\textcolor{black}{SPMP}}\textcolor{black}{{}
under rate constraints in Section \ref{sub:Algorithms-to-Solve} and
provide optimality analysis in Section \ref{sub:Optimality-Analysis}.}
\item \textcolor{black}{\emph{Polite Water-filling based Algorithms}}\textcolor{black}{:
}Another approach is to directly solve for the beamforming matrices.
For the convex problem of MIMO MAC, steepest ascent algorithm is used
except for the special case of sum-rate optimal points, where iterative
water-filling can be employed \cite{Yu_IT04_MIMO_MAC_waterfilling_alg,Jindal_IT05_IFBC,Weiyu_IT06_DualIWF}.
The B-MAC network problems are non-convex in general and thus, better
algorithms, like water-filling, than the steepest ascent algorithm
is highly desirable. However, directly applying traditional water-filling
is far from optimal \cite{Yu_JSAC02_Distributed_power_control_DSL,Popescu_Globecom03_Water_filling_not_good,Lai_IT08_water_filling_game_MAC}.
In \cite{Liu_IT10s_Duality_BMAC}, we recently found the long sought
optimal network version of water-filling, polite water-filling, which
is the optimal input structure of any Pareto rate point, not only
the sum-rate optimal point, of the achievable region of a MIMO B-MAC
network. This network version of water-filling is polite because it
optimally balances between reducing interference to others and maximizing
a link's own rate. The superiority of the polite water-filling is
demonstrated for weighted sum-rate maximization in \cite{Liu_IT10s_Duality_BMAC}
and the superiority is because it is hard not to obtain good results
when the optimal input structure is imposed to the solution at each
iteration. In Section \ref{sub:itree}, using polite water-filling,
we design an algorithm to monotonically improve the output of the
SINR based algorithms for iTree networks defined later, if the output
does not satisfy the KKT condition. Furthermore, in Section \ref{sub:PR-PR1},
purely polite water-filling based algorithms are designed that have
faster convergence speed.
\item \textit{Distributed Algorithm}: In a network, it is highly desirable
to use distributed algorithms. The polite water-filling based algorithm
is well suited for distributed implementation, which is shown in Section
\ref{sub:Distributed-Implementation}, where each node only needs
to estimate/exchange the local CSI but the performance of each iteration
is the same as that of the centralized algorithm.
\item \textit{Optimization of Encoding and Decoding Order}s: Another difficulty
is to find the optimal encoding/decoding order when interference cancellation
techniques like DPC/SIC are employed. Again, polite water-filling
proves useful in Section \ref{sub:OrderOptimization} because the
water-filling levels of the links can be used to identify the optimal
encoding/decoding order for BC/MAC and pseudo-BC/MAC defined later.
\end{itemize}

The rest of the paper is organized as follows. Section \ref{sec:System Model}
defines the achievable rate region and formulates the problems. Section
\ref{sec:Preliminary} summarizes the preliminaries on SINR duality
and polite water-filling. Section \ref{sec:Algorithms} presents the
efficient centralized and distributed algorithms. The performance
of the algorithms is verified by simulation in Section \ref{sec:Simulation-Results}.
The conclusion is given in Section \ref{sec:Conclusion}.

\section{\textcolor{black}{System Model and Problem Formulation\label{sec:System Model}}}

\subsection{Definition of the Achievable Rate Region}

\textcolor{black}{We consider a MIMO B-MAC interference network, consisting
of multiple interfering BCs and MACs. There are $L$ data links.}
Let $T_{l}$ and $R_{l}$ denote the virtual transmitter and receiver
of link $l$ equipped with $L_{T_{l}}$ transmit antennas and $L_{R_{l}}$
receive antennas respectively. The received signal at $R_{l}$ is

\begin{eqnarray}
\mathbf{y}_{l} & = & \sum_{k=1}^{L}\mathbf{H}_{l,k}\mathbf{x}_{k}+\mathbf{w}_{l},\label{eq:recvsignal}\end{eqnarray}
where $\mathbf{x}_{k}\in\mathbb{C}^{L_{T_{k}}\times1}$ is the transmit
signal of link $k$ and is assumed to be circularly symmetric complex
Gaussian; $\mathbf{H}_{l,k}\in\mathbb{C}^{L_{R_{l}}\times L_{T_{k}}}$
is the channel matrix between $T_{k}$ and $R_{l}$; and $\mathbf{w}_{l}\in\mathbb{C}^{L_{R_{l}}\times1}$
is a circularly symmetric complex Gaussian noise vector with zero
mean and identity covariance matrix.

To handle a wide range of interference cancellation possibilities,
we define a coupling matrix $\mathbf{\Phi}\in\mathbb{R}_{+}^{L\times L}$
as a function of the interference cancellation scheme \cite{Liu_IT10s_Duality_BMAC}.
It specifies whether interference is completely cancelled or treated
as noise: if $\mathbf{x}_{k}$, after interference cancellation, still
causes interference to $\mathbf{x}_{l}$, $\mathbf{\Phi}_{l,k}=1$
and otherwise, $\mathbf{\Phi}_{l,k}=0$. For example, if the virtual
transmitters (receivers) of several links are associated with the
same physical transmitter (receiver), interference cancellation techniques
such as dirty paper coding (successive decoding and cancellation)
can be applied at this physical transmitter (receiver) to improve
the performance.

The coupling matrices valid for the results of this paper are those
for which there exists a transmission and receiving scheme such that
each signal is decoded and possibly cancelled by no more than one
receiver. Possible extension to the Han-Kobayashi scheme, where a
common message is decoded by more than one receiver, is discussed
in \cite{Liu_IT10s_Duality_BMAC}. We give some examples of valid
coupling matrices. For a BC (MAC) employing DPC (SIC) where the $l^{\textrm{th}}$
link is the $l^{\textrm{th}}$ one to be encoded (decoded), the coupling
matrix is given by $\mathbf{\Phi}_{l,k}=0,\forall k\leq l$ and $\mathbf{\Phi}_{l,k}=1,\forall k>l$.
In Fig. \ref{fig:sysFig1}, we give an example of a B-MAC network
employing DPC and SIC. When no data is transmitted over link 4 and
5, the following $\mathbf{\Phi}^{a},\mathbf{\Phi}^{b},\mathbf{\Phi}^{c},\mathbf{\Phi}^{d}$
are valid coupling matrices for link $1,2,3$ under the corresponding
encoding and decoding orders: \emph{a}. $\mathbf{x}_{1}$ is encoded
after $\mathbf{x}_{2}$ and $\mathbf{x}_{2}$ is decoded after $\mathbf{x}_{3}$;
\emph{b}. $\mathbf{x}_{2}$ is encoded after $\mathbf{x}_{1}$ and
$\mathbf{x}_{2}$ is decoded after $\mathbf{x}_{3}$; \emph{c}. $\mathbf{x}_{1}$
is encoded after $\mathbf{x}_{2}$ and $\mathbf{x}_{3}$ is decoded
after $\mathbf{x}_{2}$; \emph{d}. There is no interference cancellation.\begin{align*}
\mathbf{\Phi}^{a}=\left[\begin{array}{ccc}
0 & 0 & 1\\
1 & 0 & 0\\
1 & 1 & 0\end{array}\right], & \:\mathbf{\Phi}^{b}=\left[\begin{array}{ccc}
0 & 1 & 1\\
0 & 0 & 0\\
1 & 1 & 0\end{array}\right],\\
\mathbf{\Phi}^{c}=\left[\begin{array}{ccc}
0 & 0 & 1\\
1 & 0 & 1\\
1 & 0 & 0\end{array}\right], & \:\mathbf{\Phi}^{d}=\left[\begin{array}{ccc}
0 & 1 & 1\\
1 & 0 & 1\\
1 & 1 & 0\end{array}\right].\end{align*}

\begin{figure}
\begin{centering}
\textsf{\includegraphics[clip,scale=0.3]{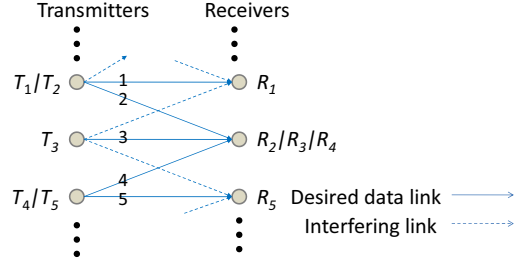}}
\par\end{centering}

\caption{\label{fig:sysFig1}Example of a B-MAC network}

\end{figure}

Note that when DPC and SIC are combined, an interference may not be
fully cancelled under a specific encoding and decoding order. Such
case cannot be described by the coupling matrix of 0's and 1's defined
above. But a valid coupling matrix can serve for an upper or lower
bound. See more discussion in \cite{Liu_IT10s_Duality_BMAC}.

If not explicitly stated otherwise, achievable regions in this paper
refer to the following. Note that $\mathbf{\Phi}_{l,l}=0$ by definition.
The interference-plus-noise covariance matrix of the $l^{\text{th}}$
link is \begin{eqnarray}
\mathbf{\Omega}_{l} & = & \mathbf{I}+\sum_{k=1}^{L}\mathbf{\Phi}_{l,k}\mathbf{H}_{l,k}\mathbf{\Sigma}_{k}\mathbf{H}_{l,k}^{\dagger},\label{eq:whiteMG}\end{eqnarray}
where $\mathbf{\Sigma}_{k}$ is the covariance matrix of $\mathbf{x}_{k}$.
We denote all the covariance matrices as \begin{eqnarray}
\mathbf{\Sigma}_{1:L} & = & \left(\mathbf{\Sigma}_{1},\mathbf{\Sigma}_{2},...,\mathbf{\Sigma}_{L}\right).\label{eq:TQdefG}\end{eqnarray}
Then the achievable mutual information (rate) of link $l$ is given
by a function of $\mathbf{\Sigma}_{1:L}$ and $\mathbf{\Phi}$\begin{eqnarray}
\mathcal{I}_{l}\left(\mathbf{\Sigma}_{1:L},\mathbf{\Phi}\right) & = & \textrm{log}\left|\mathbf{I}+\mathbf{H}_{l,l}\mathbf{\Sigma}_{l}\mathbf{H}_{l,l}^{\dagger}\mathbf{\Omega}_{l}^{-1}\right|.\label{eq:linkkMIG}\end{eqnarray}

\begin{definitn}
The\emph{ Achievable Rate Region }with \textit{\emph{a fixed coupling
matrix $\mathbf{\Phi}$}} and sum power constraint $P_{T}$ is defined
as
\end{definitn}
\begin{eqnarray}
\mathcal{R}_{\mathbf{\Phi}}\left(P_{T}\right) & \triangleq & \underset{\mathbf{\Sigma}_{1:L}:\sum_{l=1}^{L}\textrm{Tr}\left(\mathbf{\mathbf{\Sigma}}_{l}\right)\leq P_{T}}{\bigcup}\left\{ \mathbf{r}\in\mathbb{R}_{+}^{L}:\right.\label{eq:Ratereg1G}\\
 &  & \left.r_{l}\le\mathcal{I}_{l}\left(\mathbf{\Sigma}_{1:L},\mathbf{\Phi}\right),1\leq l\leq L\right\} .\nonumber \end{eqnarray}

A bigger achievable rate region can be defined by the convex closure
of $\bigcup_{\mathbf{\Phi}\in\mathbf{\Xi}}\mathcal{R}_{\mathbf{\Phi}}\left(P_{T}\right)$,
where $\mathbf{\Xi}$ is a set of valid coupling matrices. For example,
if DPC and/or SIC are employed, $\mathbf{\Xi}$ can be a set of valid
coupling matrices corresponding to various valid encoding and/or decoding
orders.

The algorithms rely on the duality between the forward and reverse
links of a B-MAC network. The reverse links are obtained by reversing
the transmission direction and replacing the channel matrices by their
conjugate transposes. \textcolor{black}{The coupling matrix for the
reverse links is the transpose of that for the forward links}. We
use the notation $\hat{}$ to denote the corresponding terms in the
reverse links. For example, in the reverse links of the B-MAC network
in Fig. \ref{fig:sysFig1}, $T_{1}/T_{2}$ ($R_{2}/R_{3}$) becomes
the receiver (transmitter), and $\mathbf{\hat{x}}_{2}$ is decoded
after $\hat{\mathbf{x}}_{1}$ and $\mathbf{\hat{x}}_{3}$ is encoded
after $\hat{\mathbf{x}}_{2}$, if in the forward links, $\mathbf{x}_{1}$
is encoded after $\mathbf{x}_{2}$ and $\mathbf{x}_{2}$ is decoded
after $\mathbf{x}_{3}$. The interference-plus-noise covariance matrix
of reverse link $l$ is \begin{eqnarray}
\hat{\mathbf{\Omega}}_{l} & = & \mathbf{I}+\sum_{k=1}^{L}\mathbf{\Phi}_{k,l}\mathbf{H}_{k,l}^{\dagger}\mathbf{\hat{\mathbf{\Sigma}}}_{k}\mathbf{H}_{k,l},\label{eq:WhiteMRV}\end{eqnarray}
and the rate of reverse link $l$ is given by $\mathcal{\hat{I}}_{l}\left(\hat{\mathbf{\Sigma}}_{1:L},\mathbf{\Phi}^{T}\right)=\textrm{log}\left|\mathbf{I}+\mathbf{H}_{l,l}^{\dagger}\hat{\mathbf{\Sigma}}_{l}\mathbf{H}_{l,l}\hat{\mathbf{\Omega}}_{l}^{-1}\right|.$

\subsection{Problem Formulation}

\textcolor{black}{This paper concerns the feasibility optimization
problem }\textbf{(FOP)}\textcolor{black}{{} and the sum power minimization
problem (}\textbf{\textcolor{black}{SPMP}}\textcolor{black}{) under
Quality of Service (QoS) constraints in terms of target rates $\left[\mathcal{I}_{l}^{0}\right]_{l=1,...,L}$
for a B-MAC network with a given valid} coupling matrix $\mathbf{\Phi}$\textcolor{black}{:\begin{eqnarray}
\textrm{\textbf{FOP}:} & \underset{\mathbf{\Sigma}_{1:L}}{\textrm{max}} & \left(\min_{1\leq l\leq L}\frac{\mathcal{I}_{l}\left(\mathbf{\Sigma}_{1:L},\mathbf{\Phi}\right)}{\mathcal{I}_{l}^{0}}\right)\label{eq:P1}\\
 & \textrm{s.t.} & \mathbf{\Sigma}_{l}\succeq0,l=1,...,L\:\textrm{and}\:\sum_{l=1}^{L}\textrm{Tr}\left(\mathbf{\Sigma}_{l}\right)\leq P_{T},\nonumber \end{eqnarray}
where $P_{T}$ is the total power constraint; \begin{eqnarray}
\textrm{\textbf{SPMP}:} & \underset{\mathbf{\Sigma}_{1:L}}{\textrm{min}} & \sum_{l=1}^{L}\textrm{Tr}\left(\mathbf{\Sigma}_{l}\right)\label{eq:P2}\\
 & \textrm{s.t.} & \mathcal{I}_{l}\left(\mathbf{\Sigma}_{1:L},\mathbf{\Phi}\right)\geq\mathcal{I}_{l}^{0},\mathbf{\Sigma}_{l}\succeq0,l=1,...,L.\nonumber \end{eqnarray}
In }\textbf{\textcolor{black}{FOP}}\textcolor{black}{, if the optimum
of the objective function is $\alpha$ and is greater than one, the
target rates are feasible. The optimum input covariance matrices achieves
a point on the boundary of the achievable region along the direction
of vector $\left[\mathcal{I}_{l}^{0}\right]_{l=1,...,L}$, i.e., the
optimal rate vector satisfies $\left[\mathcal{I}_{l}\right]_{l=1,...,L}=\alpha\left[\mathcal{I}_{l}^{0}\right]_{l=1,...,L}$.
If the target rates is feasible for some power, }\textbf{\textcolor{black}{SPMP}}\textcolor{black}{{}
finds the minimum total power needed.} For the special case of DPC
and SIC, the optimal \textit{\emph{coupling matrix}} $\mathbf{\Phi}$,
or equivalently, the optimal encoding and/or decoding order of \textbf{\textcolor{black}{FOP}}\textcolor{black}{{}
and }\textbf{\textcolor{black}{SPMP}} is partially solved in Section\textcolor{black}{{}
\ref{sub:OrderOptimization}. We first focus on centralized algorithms
under total power constraints. Then we give a distributed implementation
of the algorithm for }\textbf{\textcolor{black}{SPMP}}\textcolor{black}{{}
under additional individual maximum power constraints.}

Although we focus on the sum power and white noise in this paper for
simplicity, the results can be directly applied to a much larger class
of problems with a single linear constraint $\sum_{l=1}^{L}\textrm{Tr}\left(\mathbf{\Sigma}_{l}\hat{\mathbf{W}}_{l}\right)\leq P_{T}$
in \textbf{FOP} (or objective function $\sum_{l=1}^{L}\textrm{Tr}\left(\mathbf{\Sigma}_{l}\hat{\mathbf{W}}_{l}\right)$
in \textbf{SPMP}) and/or colored noise with covariance $\text{E}\left[\mathbf{w}_{l}\mathbf{\mathbf{w}}_{l}^{\dagger}\right]=\mathbf{W}_{l}$,
which includes the weighted sum power minimization problem in \cite{Rao_TOC07_netduality}
as a special case. Only variable changes $\mathbf{\Sigma}_{l}'=\hat{\mathbf{W}}_{l}^{\frac{1}{2}}\mathbf{\Sigma}_{l}\hat{\mathbf{W}}_{l}^{\frac{1}{2}}$
and $\mathbf{W}_{k}^{-\frac{1}{2}}\mathbf{H}_{k,l}\hat{\mathbf{W}}_{l}^{-\frac{1}{2}}$
are needed, where $\hat{\mathbf{W}}_{l}$ and $\mathbf{W}_{k}$ are
positive definite for meaningful cases%
\footnote{For random channels, singular $\hat{\mathbf{W}}_{l}$ or $\mathbf{W}_{l}$
will result in infinite power and rate with probability one.%
}. The single linear constraint appears in Lagrange functions for problems
with multiple linear constraints \cite{Yu_IT06_Minimax_duality,Zhang_IT08_MACBC_LC},
and thus, the results in this paper serve as the basis to solve them
\cite{Liu_10Allerton_MLC}. Special cases of multiple linear constraints
include individual power constraints, per-antenna power constraints,
interference constraints in cognitive radios, etc..

\section{\textcolor{black}{Preliminaries\label{sec:Preliminary}}}

The algorithms are based on SINR duality, e.g., \cite{Rao_TOC07_netduality},
rate duality, and polite water-filling developed earlier \cite{Liu_IT10s_Duality_BMAC}.
They are reviewed below.

\subsection{\textcolor{black}{SINR Duality for MIMO B-MAC Networks\label{sub:precoding}}}

\textcolor{black}{The achievable rate region defined in (\ref{eq:Ratereg1G})
can be achieved by a spatial multiplexing scheme as follows. }
\begin{definitn}
\textcolor{black}{\label{def:Decomposition}The }\textcolor{black}{\emph{Decomposition
of a MIMO Link into Multiple SISO Data Streams}}\textcolor{black}{{}
is defined as, for link $l$ and $M_{l}\ge\text{Rank}(\mathbf{\Sigma}_{l})$,
finding a precoding matrix $\dot{\mathbf{T}}_{l}=\left[\sqrt{p_{l,1}}\mathbf{t}_{l,1},...,\sqrt{p_{l,M_{l}}}\mathbf{t}_{l,M_{l}}\right]$
satisfying \begin{equation}
\mathbf{\Sigma}_{l}=\dot{\mathbf{T}}_{l}\dot{\mathbf{T}}_{l}^{\dagger}=\sum_{m=1}^{M_{l}}p_{l,m}\mathbf{t}_{l,m}\mathbf{t}_{l,m}^{\dagger},\label{eq:DecomSig}\end{equation}
where $\mathbf{t}_{l,m}\in\mathbb{C}^{L_{T_{l}}\times1}$ is a transmit
vector with $\left\Vert \mathbf{t}_{l,m}\right\Vert =1$; and $\mathbf{p}=\left[p_{1,1},...,p_{1,M_{1}},...,p_{L,1},...,p_{L,M_{L}}\right]^{T}$
are the transmit powers. }
\end{definitn}

\textcolor{black}{Note that the precoding matrix is not unique because
$\dot{\mathbf{T}}_{l}^{'}=\dot{\mathbf{T}}_{l}\mathbf{V}$ with unitary
$\mathbf{V}\in\mathbb{C}^{M_{l}\times M_{l}}$ also gives the same
covariance matrix in (\ref{eq:DecomSig}). Without loss of generality,
we assume the intra-signal decoding order is that the $m^{\text{th}}$
stream is the $m^{\text{th}}$ to be decoded and cancelled. The receive
vector for the $m^{th}$ stream of link $l$ is obtained by the MMSE
filtering as \begin{equation}
\mathbf{r}_{l,m}=\alpha_{l,m}\left(\sum_{i=m+1}^{M_{l}}\mathbf{H}_{l,l}p_{l,i}\mathbf{t}_{l,i}\mathbf{t}_{l,i}^{\dagger}\mathbf{H}_{l,l}^{\dagger}+\mathbf{\Omega}_{l}\right)^{-1}\mathbf{H}_{l,l}\mathbf{t}_{l,m},\label{eq:MMSErev1G}\end{equation}
where $\alpha_{l,m}$ is chosen such that $\left\Vert \mathbf{r}_{l,m}\right\Vert =1$.
This is referred to as MMSE-SIC receiver in this paper. }

\textcolor{black}{For each stream, one can calculate its SINR. Let
the collections of transmit and receive vectors be \begin{eqnarray}
\mathbf{T} & = & \left[\mathbf{t}_{l,m}\right]_{m=1,...,M_{l},l=1,...,L},\label{eq:udef}\\
\mathbf{R} & = & \left[\mathbf{r}_{l,m}\right]_{m=1,...,M_{l},l=1,...,L}.\label{eq:TRkdef}\end{eqnarray}
The cross-talk matrix $\mathbf{\Psi}\left(\mathbf{T},\mathbf{R}\right)\in\mathbb{R}_{+}^{\sum_{l}M_{l}\times\sum_{l}M_{l}}$
between different streams \cite{Martin_ITV_04_BFdual} is a function
of $\mathbf{T},\mathbf{R}$, and, assuming unit transmit power, the
element of the $\left(\sum_{i=1}^{l-1}M_{i}+m\right)^{\textrm{th}}$
row and $\left(\sum_{i=1}^{k-1}M_{i}+n\right)^{\textrm{th}}$ column
of $\mathbf{\Psi}$ is the interference power from the $k^{\text{th}}$
link's $n^{\text{th}}$ stream to the $l^{\text{th}}$ link's $m^{\text{th}}$
stream and is given by \begin{align}
\mathbf{\Psi}_{l,m}^{k,n}= & \begin{cases}
0 & k=l\:\textrm{and}\: m\geq n,\\
\left|\mathbf{r}_{l,m}^{\dagger}\mathbf{H}_{l,l}\mathbf{t}_{l,n}\right|^{2} & k=l,\:\textrm{and}\: m<n,\\
\Phi_{l,k}\left|\mathbf{r}_{l,m}^{\dagger}\mathbf{H}_{l,k}\mathbf{t}_{k,n}\right|^{2} & \textrm{otherwise}.\end{cases}\label{eq:faiG}\end{align}
Then the SINR for the $m^{th}$ stream of link $l$ is \begin{eqnarray}
\gamma_{l,m}\left(\mathbf{T},\mathbf{R},\mathbf{p}\right) & = & \frac{p_{l,m}\left|\mathbf{r}_{l,m}^{\dagger}\mathbf{H}_{l,l}\mathbf{t}_{l,m}\right|^{2}}{1+{\displaystyle \sum_{k=1}^{L}}{\displaystyle \sum_{n=1}^{M_{k}}}p_{k,n}\mathbf{\Psi}_{l,m}^{k,n}}.\label{eq:SINR1G}\end{eqnarray}
Such decomposition of data to streams with MMSE-SIC receiver is information
lossless \cite{Varanasi_Asilomar97_MMSE_is_optimal}, i.e., the sum-rate
of all streams of link $l$ is equal to the mutual information in
(\ref{eq:linkkMIG}).}

\textcolor{black}{In the reverse links, we can obtain SINRs using
$\mathbf{R}$ as transmit vectors and $\mathbf{T}$ as receive vectors.
The transmit powers are denoted as $\mathbf{q}=\left[q_{1,1},...,q_{1,M_{1}},...,q_{L,1},...,q_{L,M_{L}}\right]^{T}$.
The intra-signal decoding order is the opposite to that of the forward
link, i.e., the $m^{\text{th}}$ stream is the $m^{\text{th}}$ last
to be decoded and cancelled. Then the SINR for the $m^{th}$ stream
of reverse link $l$ is\begin{eqnarray}
\hat{\gamma}_{l,m}\left(\mathbf{R},\mathbf{T},\mathbf{q}\right) & = & \frac{q_{l,m}\left|\mathbf{t}_{l,m}^{\dagger}\mathbf{H}_{l,l}^{\dagger}\mathbf{r}_{l,m}\right|^{2}}{1+{\displaystyle \sum_{k=1}^{L}}{\displaystyle \sum_{n=1}^{M_{k}}}q_{k,n}\mathbf{\Psi}_{k,n}^{l,m}}.\label{eq:SINR2G}\end{eqnarray}
For simplicity, we will use $\left\{ \mathbf{T},\mathbf{R},\mathbf{p}\right\} $
($\left\{ \mathbf{R},\mathbf{T},\mathbf{q}\right\} $) to denote the
transmission and reception strategy described above in the forward
(reverse) links. }

\textcolor{black}{The achievable SINR regions of the forward and reverse
links are the same. Define the achievable SINR regions }$\mathcal{T}_{\mathbf{\Phi}}\left(P_{T}\right)$
and $\mathcal{\hat{T}}_{\mathbf{\Phi}^{T}}\left(P_{T}\right)$\textcolor{black}{{}
as the set of all SINRs that can be achieved under the sum power constraint
$P_{T}$ in the forward and reverse links respectively. For a given
set of SINR values $\mathbf{\gamma}^{0}=\left[\gamma_{l,m}^{0}\right]_{m=1,...,M_{l},l=1,...,L}$,
define a diagonal matrix $\mathbf{D}\left(\mathbf{T},\mathbf{R},\mathbf{\gamma}^{0}\right)\in\mathbb{R}_{+}^{\sum_{l}M_{l}\times\sum_{l}M_{l}}$
where the $\left(\sum_{i=1}^{l-1}M_{i}+m\right)^{\text{th}}$ diagonal
element is }

\textcolor{black}{\begin{eqnarray}
\mathbf{D}_{\sum_{i=1}^{l-1}M_{i}+m,\sum_{i=1}^{l-1}M_{i}+m} & = & \gamma_{l,m}^{0}/\left|\mathbf{r}_{l,m}^{\dagger}\mathbf{H}_{l,l}\mathbf{t}_{l,m}\right|^{2}.\label{eq:DG}\end{eqnarray}
We restate the SINR duality, e.g. \cite{Rao_TOC07_netduality}, as
follows.}
\begin{lemma}
\textcolor{black}{\label{lem:lem1G}If a set of SINRs $\mathbf{\gamma}^{0}$
is achieved by the transmission and reception strategy $\left\{ \mathbf{T},\mathbf{R},\mathbf{p}\right\} $
with $\left\Vert \mathbf{p}\right\Vert _{1}=P_{T}$ in the forward
links, then $\mathbf{\gamma}^{0}$ is also achievable in the reverse
links with $\left\{ \mathbf{R},\mathbf{T},\mathbf{q}\right\} $, where
$\mathbf{q}$ satisfies $\left\Vert \mathbf{q}\right\Vert _{1}=P_{T}$
and is given by\begin{eqnarray}
\mathbf{q} & = & \left(\mathbf{D}^{-1}\left(\mathbf{T},\mathbf{R},\mathbf{\gamma}^{0}\right)-\mathbf{\Psi}^{T}\left(\mathbf{T},\mathbf{R}\right)\right)^{-1}\mathbf{1}.\label{eq:qpower}\end{eqnarray}
And thus, one has }$\mathcal{T}_{\mathbf{\Phi}}\left(P_{T}\right)=\mathcal{\hat{T}}_{\mathbf{\Phi}^{T}}\left(P_{T}\right)$.
\end{lemma}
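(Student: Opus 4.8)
The plan is to turn both the forward and the reverse SINR conditions into linear systems in the power vectors and then exploit a transpose symmetry between them. First I would collapse the double index $(l,m)$ into a single stream index $j=\sum_{i=1}^{l-1}M_{i}+m$ and set $g_{j}=\left|\mathbf{r}_{l,m}^{\dagger}\mathbf{H}_{l,l}\mathbf{t}_{l,m}\right|^{2}$ for the direct gain of stream $j$. Since $\mathbf{t}_{l,m}^{\dagger}\mathbf{H}_{l,l}^{\dagger}\mathbf{r}_{l,m}=\bigl(\mathbf{r}_{l,m}^{\dagger}\mathbf{H}_{l,l}\mathbf{t}_{l,m}\bigr)^{*}$, the reverse direct gain in (\ref{eq:SINR2G}) equals the same $g_{j}$, so the single diagonal matrix $\mathbf{D}=\mathbf{D}\left(\mathbf{T},\mathbf{R},\mathbf{\gamma}^{0}\right)$ of (\ref{eq:DG}), with $\mathbf{D}_{jj}=\gamma_{j}^{0}/g_{j}$, governs both directions. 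The key structural observation I would then establish from (\ref{eq:faiG}) is that the cross-talk matrix of the reverse network --- built from $\mathbf{R}$ as transmit vectors, $\mathbf{T}$ as receive vectors, the transposed coupling matrix $\mathbf{\Phi}^{T}$, and the reversed intra-signal order of (\ref{eq:SINR2G}) --- is exactly $\mathbf{\Psi}^{T}\left(\mathbf{T},\mathbf{R}\right)$: the coupling transpose $\mathbf{\Phi}^{T}$ and the conjugate-transpose channels $\mathbf{H}_{l,k}^{\dagger}$ swap the roles of source and destination in the inter-link entries, while reversing the intra-link decoding order turns the strictly-upper-triangular intra-link blocks of $\mathbf{\Psi}$ into strictly-lower-triangular ones, which is precisely how $\mathbf{\Psi}^{T}$ differs from $\mathbf{\Psi}$.

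With that, assuming all streams active so $\mathbf{p}>0$, the forward condition $\gamma_{l,m}\left(\mathbf{T},\mathbf{R},\mathbf{p}\right)=\gamma_{l,m}^{0}$ for all $(l,m)$ reads componentwise $p_{j}=\mathbf{D}_{jj}\bigl(1+(\mathbf{\Psi}\mathbf{p})_{j}\bigr)$, i.e. $\left(\mathbf{D}^{-1}-\mathbf{\Psi}\right)\mathbf{p}=\mathbf{1}$, and the reverse condition $\hat{\gamma}_{l,m}\left(\mathbf{R},\mathbf{T},\mathbf{q}\right)=\gamma_{l,m}^{0}$ reads $\left(\mathbf{D}^{-1}-\mathbf{\Psi}^{T}\right)\mathbf{q}=\mathbf{1}$, which is exactly (\ref{eq:qpower}). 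To make this rigorous I would first prove invertibility: rewriting the forward identity as $\left(\mathbf{I}-\mathbf{D}\mathbf{\Psi}\right)\mathbf{p}=\mathbf{D}\mathbf{1}$ with $\mathbf{D}\mathbf{\Psi}\ge 0$, $\mathbf{p}>0$ and $\mathbf{D}\mathbf{1}>0$, a standard Perron--Frobenius argument (test against the nonnegative left Perron eigenvector) gives $\rho(\mathbf{D}\mathbf{\Psi})<1$; since $\mathbf{D}^{1/2}\mathbf{\Psi}\mathbf{D}^{1/2}$ and $\mathbf{D}^{1/2}\mathbf{\Psi}^{T}\mathbf{D}^{1/2}$ are transposes of each other and are similar to $\mathbf{D}\mathbf{\Psi}$ and $\mathbf{D}\mathbf{\Psi}^{T}$ respectively, $\rho(\mathbf{D}\mathbf{\Psi}^{T})=\rho(\mathbf{D}\mathbf{\Psi})<1$ too, so $\mathbf{D}^{-1}-\mathbf{\Psi}^{T}$ is invertible with nonnegative inverse and $\mathbf{q}=\left(\mathbf{D}^{-1}-\mathbf{\Psi}^{T}\right)^{-1}\mathbf{1}\ge 0$ is a genuine power vector achieving $\mathbf{\gamma}^{0}$ in the reverse links.

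Power conservation then follows from a one-line transpose identity: because $\mathbf{D}^{-1}$ is real and diagonal, $\left(\mathbf{D}^{-1}-\mathbf{\Psi}^{T}\right)^{-1}=\bigl[\left(\mathbf{D}^{-1}-\mathbf{\Psi}\right)^{-1}\bigr]^{T}$, hence $\left\Vert \mathbf{q}\right\Vert _{1}=\mathbf{1}^{T}\left(\mathbf{D}^{-1}-\mathbf{\Psi}^{T}\right)^{-1}\mathbf{1}=\mathbf{1}^{T}\left(\mathbf{D}^{-1}-\mathbf{\Psi}\right)^{-1}\mathbf{1}=\left\Vert \mathbf{p}\right\Vert _{1}=P_{T}$, using that a scalar equals its own transpose. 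This proves the displayed claims of the lemma. For the region identity $\mathcal{T}_{\mathbf{\Phi}}\left(P_{T}\right)=\hat{\mathcal{T}}_{\mathbf{\Phi}^{T}}\left(P_{T}\right)$ I would add a short monotonicity step: if a point of $\mathcal{T}_{\mathbf{\Phi}}\left(P_{T}\right)$ is realized with SINRs $\ge\mathbf{\gamma}^{0}$ and $\left\Vert \mathbf{p}\right\Vert _{1}\le P_{T}$, then the fixed point $\left(\mathbf{D}^{-1}-\mathbf{\Psi}\right)^{-1}\mathbf{1}$ realizes exactly $\mathbf{\gamma}^{0}$ with no larger total power (the standard-interference-function monotonicity $\mathbf{p}\ge I(\mathbf{p})\Rightarrow\mathbf{p}\ge\mathbf{p}^{\star}$), and the construction above transfers it to the reverse links; the symmetric statement holds because the reverse of the reverse network is the forward network and $(\mathbf{\Phi}^{T})^{T}=\mathbf{\Phi}$, so both inclusions follow.

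I expect the main obstacle to be the bookkeeping in the second half of the first paragraph --- verifying precisely, against the index conventions of (\ref{eq:faiG})--(\ref{eq:SINR2G}), that reversing transmit/receive vectors, transposing $\mathbf{\Phi}$, and flipping the intra-signal order produces exactly $\mathbf{\Psi}^{T}$ --- together with making the invertibility/nonnegativity of $\mathbf{D}^{-1}-\mathbf{\Psi}^{T}$ airtight; once those are in place the power-conservation identity and the region equality are essentially immediate. I also note that the lemma uses nothing about the MMSE-SIC structure of $\mathbf{R}$: the duality holds for arbitrary unit-norm $\mathbf{T},\mathbf{R}$, and the MMSE property is needed only later to identify the per-stream SINRs with the mutual informations in (\ref{eq:linkkMIG}).
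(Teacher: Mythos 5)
Your proof is correct. Note first that the paper itself does not prove this lemma --- it explicitly ``restates'' the SINR duality and defers to the cited literature (e.g.\ Song--Cruz--Rao, who obtain it via Lagrange duality of a linear program) --- so there is no in-paper argument to match against; what you have written is a complete, self-contained proof along the classical Schubert--Boche line. All the pieces check out: the reverse cross-talk matrix is indeed $\mathbf{\Psi}^{T}$ (this is essentially built into the definition in (\ref{eq:SINR2G}), and your verification that the transposed coupling matrix, conjugate-transposed channels, and flipped intra-link order are what make that definition physically consistent is the right sanity check); the forward SINR equalities give $\left(\mathbf{D}^{-1}-\mathbf{\Psi}\right)\mathbf{p}=\mathbf{1}$; your Perron--Frobenius test against a nonnegative left eigenvector correctly yields $\rho\left(\mathbf{D}\mathbf{\Psi}\right)<1$ from $\mathbf{p}>0$, and the similarity/transpose step transfers this to $\mathbf{D}\mathbf{\Psi}^{T}$, so $\mathbf{q}\geq0$ is a legitimate power vector; and the scalar-transpose identity $\mathbf{1}^{T}\left(\mathbf{D}^{-1}-\mathbf{\Psi}^{T}\right)^{-1}\mathbf{1}=\mathbf{1}^{T}\left(\mathbf{D}^{-1}-\mathbf{\Psi}\right)^{-1}\mathbf{1}$ is exactly the standard power-conservation argument. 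Your closing observations are also correct: the MMSE-SIC structure of $\mathbf{R}$ plays no role here, and the region equality needs only the construction applied in both directions (for a point achieved with $\left\Vert \mathbf{p}\right\Vert _{1}<P_{T}$ one can even skip the monotonicity step, since the same computation gives $\left\Vert \mathbf{q}\right\Vert _{1}=\left\Vert \mathbf{p}\right\Vert _{1}\leq P_{T}$ directly). The only implicit hypothesis worth stating explicitly is that all target SINRs in $\mathbf{\gamma}^{0}$ are strictly positive so that $\mathbf{D}$ is invertible, which is consistent with the paper's initialization requirement that all streams have positive SINR.
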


\subsection{\textcolor{black}{Rate Duality\label{sub:Main-Results}}}

\textcolor{black}{The rate duality of the forward and reverse links
of the B-MAC networks is a simple consequence of the SINR duality
\cite{Liu_IT10s_Duality_BMAC}}. The reverse link input covariance
matrices are obtained by the following transformation.

\begin{definitn}
Let $\mathbf{\Sigma}_{l}=\sum_{m=1}^{M_{l}}p_{l,m}\mathbf{t}_{l,m}\mathbf{t}_{l,m}^{\dagger},l=1,...,L$
be a decomposition of $\mathbf{\Sigma}_{1:L}$. Compute the MMSE-SIC
receive vectors $\mathbf{R}$ from (\ref{eq:MMSErev1G}) and the reverse
transmit powers $\mathbf{q}$ from (\ref{eq:qpower}). The \emph{Covariance
Transformation} from $\mathbf{\Sigma}_{1:L}$ to $\hat{\mathbf{\Sigma}}_{1:L}$
is

\begin{eqnarray}
\hat{\mathbf{\Sigma}}_{l} & = & \sum_{m=1}^{M_{l}}q_{l,m}\mathbf{r}_{l,m}\mathbf{r}_{l,m}^{\dagger},l=1,...,L.\label{eq:CovTrans}\end{eqnarray}

\end{definitn}

We give the rate duality under a linear constraint $\sum_{l=1}^{L}\textrm{Tr}\left(\mathbf{\Sigma}_{l}\hat{\mathbf{W}}_{l}\right)\leq P_{T}$
and/or colored noise with covariance $\text{E}\left[\mathbf{w}_{l}\mathbf{\mathbf{w}}_{l}^{\dagger}\right]=\mathbf{W}_{l}$.
The covariance transformation for this case is also calculated from
the MMSE receive beams and power allocation that makes SINRs of the
forward and reverse links equal\textcolor{black}{, as in (\ref{eq:CovTrans}).
The only difference }is that the identity noise covariance in $\mathbf{\Omega}_{l}$
is replaced by $\mathbf{W}_{l}$ and the all-one vector $\mathbf{1}$
in (\ref{eq:qpower}) is replaced by the vector $\left[\mathbf{t}_{l,m}^{\dagger}\hat{\mathbf{W}}_{l}\mathbf{t}_{l,m}\right]_{m=1,...,M_{l},l=1,...,L}$.%
{} For convenience, let \begin{equation}
\left(\left[\mathbf{H}_{l,k}\right],\sum_{l=1}^{L}\textrm{Tr}\left(\mathbf{\Sigma}_{l}\hat{\mathbf{W}}_{l}\right)\leq P_{T},\left[\mathbf{W}_{l}\right]\right),\label{eq:net-color-linear-constraint}\end{equation}
denote a network where the channel matrices are $\left[\mathbf{H}_{l,k}\right]$;
the input covariance matrices must satisfy the linear constraint $\sum_{l=1}^{L}\textrm{Tr}\left(\mathbf{\Sigma}_{l}\hat{\mathbf{W}}_{l}\right)\leq P_{T}$;
and the covariance matrix of the noise at the receiver of link $l$
is $\mathbf{W}_{l}$. \textcolor{black}{Then the rate duality is restated
in the theorem below.}
\begin{thm}
\label{thm:linear-color-dual}The dual of the network (\ref{eq:net-color-linear-constraint})
is\begin{equation}
\left(\left[\mathbf{H}_{k,l}^{\dagger}\right],\sum_{l=1}^{L}\textrm{Tr}\left(\hat{\mathbf{\Sigma}}_{l}\mathbf{W}_{l}\right)\leq P_{T},\left[\hat{\mathbf{W}}_{l}\right]\right)\label{eq:net-forward-color-dual}\end{equation}
in the sense that 1) they have the same achievable rate region; 2)
if $\mathbf{\Sigma}_{1:L}$ achieves certain rates and satisfies the
linear constraint in network (\ref{eq:net-color-linear-constraint}),
its covariance transformation $\hat{\mathbf{\Sigma}}_{1:L}$ achieves
better rates in network (\ref{eq:net-forward-color-dual}) under the
linear constraint $\sum_{l=1}^{L}\textrm{Tr}\left(\hat{\mathbf{\Sigma}}_{l}\mathbf{W}_{l}\right)=\sum_{l=1}^{L}\textrm{Tr}\left(\mathbf{\Sigma}_{l}\hat{\mathbf{W}}_{l}\right)\leq P_{T}$.
\end{thm}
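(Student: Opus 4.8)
The plan is to reduce the colored-noise, linear-constraint statement to the white-noise, sum-power case and then assemble it from the SINR duality of Lemma \ref{lem:lem1G} and the information-losslessness of the MMSE-SIC decomposition. First I would perform the change of variables already indicated after the problem formulation: put $\mathbf{G}_{l,k}=\mathbf{W}_{l}^{-\frac{1}{2}}\mathbf{H}_{l,k}\hat{\mathbf{W}}_{k}^{-\frac{1}{2}}$ and $\mathbf{\Sigma}_{l}'=\hat{\mathbf{W}}_{l}^{\frac{1}{2}}\mathbf{\Sigma}_{l}\hat{\mathbf{W}}_{l}^{\frac{1}{2}}$. Factoring $\mathbf{W}_{l}^{\frac{1}{2}}$ out of each determinant in (\ref{eq:linkkMIG}) and using $\left|\mathbf{I}+\mathbf{AB}\right|=\left|\mathbf{I}+\mathbf{BA}\right|$, one gets $\mathbf{W}_{l}^{-\frac{1}{2}}\mathbf{\Omega}_{l}\mathbf{W}_{l}^{-\frac{1}{2}}=\mathbf{I}+\sum_{k}\mathbf{\Phi}_{l,k}\mathbf{G}_{l,k}\mathbf{\Sigma}_{k}'\mathbf{G}_{l,k}^{\dagger}$, hence $\mathcal{I}_{l}\left(\mathbf{\Sigma}_{1:L},\mathbf{\Phi}\right)$ in the colored-noise network equals the mutual information of link $l$ in the white-noise network $\left(\left[\mathbf{G}_{l,k}\right],\sum_{l}\textrm{Tr}\left(\mathbf{\Sigma}_{l}'\right)\leq P_{T},\left[\mathbf{I}\right]\right)$, while the linear constraint $\sum_{l}\textrm{Tr}\left(\mathbf{\Sigma}_{l}\hat{\mathbf{W}}_{l}\right)\leq P_{T}$ becomes the plain sum-power constraint $\sum_{l}\textrm{Tr}\left(\mathbf{\Sigma}_{l}'\right)\leq P_{T}$. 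The dual network (\ref{eq:net-forward-color-dual}) is carried by the symmetric substitution $\hat{\mathbf{\Sigma}}_{l}'=\mathbf{W}_{l}^{\frac{1}{2}}\hat{\mathbf{\Sigma}}_{l}\mathbf{W}_{l}^{\frac{1}{2}}$ into $\left(\left[\mathbf{G}_{k,l}^{\dagger}\right],\sum_{l}\textrm{Tr}\left(\hat{\mathbf{\Sigma}}_{l}'\right)\leq P_{T},\left[\mathbf{I}\right]\right)$, which is exactly the reverse of $\left(\left[\mathbf{G}_{l,k}\right],\dots\right)$ in the sense of Section \ref{sec:System Model}. So it suffices to prove the theorem for white noise and a sum-power constraint.

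In that reduced setting I would argue as follows. Given $\mathbf{\Sigma}_{1:L}'$ feasible and achieving rates $r_{l}\leq\mathcal{I}_{l}\left(\mathbf{\Sigma}_{1:L}',\mathbf{\Phi}\right)$, decompose each $\mathbf{\Sigma}_{l}'=\sum_{m=1}^{M_{l}}p_{l,m}\mathbf{t}_{l,m}\mathbf{t}_{l,m}^{\dagger}$ as in Definition \ref{def:Decomposition}, form the MMSE-SIC receive vectors $\mathbf{R}$ from (\ref{eq:MMSErev1G}) and the stream SINRs $\gamma_{l,m}$ from (\ref{eq:SINR1G}); by the information-losslessness of MMSE-SIC \cite{Varanasi_Asilomar97_MMSE_is_optimal}, $\sum_{m}\log\left(1+\gamma_{l,m}\right)=\mathcal{I}_{l}\left(\mathbf{\Sigma}_{1:L}',\mathbf{\Phi}\right)$. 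Apply Lemma \ref{lem:lem1G}: the same SINRs are attained in the reverse links by $\left\{\mathbf{R},\mathbf{T},\mathbf{q}\right\}$ with $\mathbf{q}$ from (\ref{eq:qpower}) and $\left\Vert\mathbf{q}\right\Vert_{1}=\left\Vert\mathbf{p}\right\Vert_{1}=P_{T}$, noting that Lemma \ref{lem:lem1G} already encodes the reversal of the intra-signal decoding order and the transpose $\mathbf{\Phi}^{T}$ through the appearance of $\mathbf{\Psi}^{T}$. Setting $\hat{\mathbf{\Sigma}}_{l}'=\sum_{m}q_{l,m}\mathbf{r}_{l,m}\mathbf{r}_{l,m}^{\dagger}$ (the covariance transformation (\ref{eq:CovTrans})), the strategy $\left\{\mathbf{R},\mathbf{T},\mathbf{q}\right\}$ is one particular decomposition-plus-receiver for $\hat{\mathbf{\Sigma}}_{l}'$ in the reverse network, so the reverse mutual information obeys $\hat{\mathcal{I}}_{l}\left(\hat{\mathbf{\Sigma}}_{1:L}',\mathbf{\Phi}^{T}\right)\geq\sum_{m}\log\left(1+\hat{\gamma}_{l,m}\right)=\sum_{m}\log\left(1+\gamma_{l,m}\right)=\mathcal{I}_{l}\left(\mathbf{\Sigma}_{1:L}',\mathbf{\Phi}\right)\geq r_{l}$, with $\sum_{l}\textrm{Tr}\left(\hat{\mathbf{\Sigma}}_{l}'\right)=\left\Vert\mathbf{q}\right\Vert_{1}=P_{T}$. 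Translating back through $\hat{\mathbf{\Sigma}}_{l}=\mathbf{W}_{l}^{-\frac{1}{2}}\hat{\mathbf{\Sigma}}_{l}'\mathbf{W}_{l}^{-\frac{1}{2}}$ recovers exactly the covariance transformation in its colored-noise/linear-constraint form — the replacement of $\mathbf{1}$ in (\ref{eq:qpower}) by $\left[\mathbf{t}_{l,m}^{\dagger}\hat{\mathbf{W}}_{l}\mathbf{t}_{l,m}\right]_{l,m}$ being precisely the renormalization of the transmit vectors after conjugating by $\hat{\mathbf{W}}_{l}^{\frac{1}{2}}$ — and the constraint reads $\sum_{l}\textrm{Tr}\left(\hat{\mathbf{\Sigma}}_{l}\mathbf{W}_{l}\right)=\sum_{l}\textrm{Tr}\left(\mathbf{\Sigma}_{l}\hat{\mathbf{W}}_{l}\right)\leq P_{T}$. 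This establishes claim 2).

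Claim 1) then comes for free: the map from network (\ref{eq:net-color-linear-constraint}) to its dual (\ref{eq:net-forward-color-dual}) is an involution — the roles of $\mathbf{W}_{l}$ and $\hat{\mathbf{W}}_{l}$ and the channel conjugate-transpose merely swap back — so applying claim 2) in both directions shows that every rate vector achievable in one network lies in the achievable region of the other, using that the region (\ref{eq:Ratereg1G}) is defined by $r_{l}\leq\mathcal{I}_{l}$ so that dominating a rate vector is enough to contain it; hence the two regions coincide. I expect the only real work to be the bookkeeping in the second paragraph: checking that the transmit-vector renormalization turns the plain SINR duality (\ref{eq:qpower}) into exactly the stated colored-noise version, that the power budget is conserved with equality at every step, and that the reverse intra-signal order built into $\mathbf{\Psi}^{T}$ is the one for which $\left\{\mathbf{R},\mathbf{T},\mathbf{q}\right\}$ is a valid decomposition of $\hat{\mathbf{\Sigma}}_{1:L}'$. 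There is no new inequality to discover beyond the ``one particular receiver is no better than the optimal one'' step that produces the $\geq$ in claim 2).
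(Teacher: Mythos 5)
Your proposal is correct and follows essentially the route the paper itself indicates: the theorem is a restatement of the rate duality from \cite{Liu_IT10s_Duality_BMAC}, obtained by combining the SINR duality of Lemma \ref{lem:lem1G} with the information-losslessness of the MMSE-SIC decomposition and the ``any particular receiver is no better than the optimal one'' inequality, exactly as you do. Your explicit pre-/post-whitening reduction $\mathbf{\Sigma}_{l}'=\hat{\mathbf{W}}_{l}^{1/2}\mathbf{\Sigma}_{l}\hat{\mathbf{W}}_{l}^{1/2}$, $\mathbf{G}_{l,k}=\mathbf{W}_{l}^{-1/2}\mathbf{H}_{l,k}\hat{\mathbf{W}}_{k}^{-1/2}$ is just a cleaner way of organizing the bookkeeping that the paper instead absorbs directly into the modified covariance transformation (replacing $\mathbf{I}$ by $\mathbf{W}_{l}$ and $\mathbf{1}$ in (\ref{eq:qpower}) by $\left[\mathbf{t}_{l,m}^{\dagger}\hat{\mathbf{W}}_{l}\mathbf{t}_{l,m}\right]$), so no substantive difference.
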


\subsection{Polite Water-filling}

\textcolor{black}{In \cite{Liu_IT10s_Duality_BMAC}, we showed that
the Pareto optimal input covariance matrices have a }\textit{\textcolor{black}{polite
water-filling structure}}\textcolor{black}{, which is defined below.
It generalizes the well known optimal single user water-filling structure
to networks.}
\begin{definitn}
\textcolor{black}{}%
{}\textcolor{black}{\label{def:DefGWF}Given input covariance matrices
$\mathbf{\Sigma}_{1:L}$, obtain }its covariance transformation $\hat{\mathbf{\Sigma}}_{1:L}$
as \textcolor{black}{in (\ref{eq:CovTrans})}. \textcolor{black}{Let
$\mathbf{\Omega}_{l}$'s and }$\hat{\mathbf{\Omega}}_{l}$'s\textcolor{black}{{}
respectively be the corresponding interference-plus-noise covariance
matrices. For each link $l$, pre- and post- whiten the channel $\mathbf{H}_{l,l}$
to produce an equivalent single user channel }$\bar{\mathbf{H}}_{l}=\mathbf{\Omega}_{l}^{-1/2}\mathbf{H}_{l,l}\hat{\mathbf{\Omega}}_{l}^{-1/2}$\textcolor{black}{.
Define }$\mathbf{Q}_{l}\triangleq\hat{\mathbf{\Omega}}_{l}^{1/2}\mathbf{\Sigma}_{l}\hat{\mathbf{\Omega}}_{l}^{1/2}$
\textcolor{black}{as the equivalent input covariance matrix of link
$l$. }The input covariance matrix $\mathbf{\Sigma}_{l}$ is said
to \textcolor{black}{possess a }\textcolor{black}{\emph{polite water-filling
structure}}\textcolor{black}{{} if} $\mathbf{Q}_{l}$ satisfies the
structure of water-filling over $\bar{\mathbf{H}}_{l}$, i.e.,\textcolor{black}{{}
\begin{eqnarray}
\mathbf{Q}_{l} & = & \mathbf{G}_{l}\mathbf{D}_{l}\mathbf{G}_{l}^{\dagger},\label{eq:WFFar}\\
\mathbf{D}_{l} & = & \left(\nu_{l}\mathbf{I}-\mathbf{\Delta}_{l}^{-2}\right)^{+}.\nonumber \end{eqnarray}
where $\nu_{l}\geq0$ is called the }\textit{\textcolor{black}{polite
water-filling level}}\textcolor{black}{; the equivalent channel $\bar{\mathbf{H}}_{l}$'s
thin singular value decomposition (SVD) is $\bar{\mathbf{H}}_{l}=\mathbf{F}_{l}\mathbf{\Delta}_{l}\mathbf{G}_{l}^{\dagger}$
with $\mathbf{F}_{l}\in\mathbb{C}^{L_{R_{l}}\times N_{l}},\ \mathbf{G}_{l}\in\mathbb{C}^{L_{T_{l}}\times N_{l}},\ \mathbf{\Delta}_{l}\in\mathbb{R}_{++}^{N_{l}\times N_{l}}$,
and $N_{l}=\textrm{Rank}\left(\mathbf{H}_{l,l}\right)$. If all }$\mathbf{\Sigma}_{l}$'s
\textcolor{black}{possess the polite water-filling structure, then
$\mathbf{\Sigma}_{1:L}$ is said to possess the polite water-filling
structure.}\end{definitn}
\begin{thm}
\label{thm:WFST}The input covariance matrices $\mathbf{\Sigma}_{1:L}$
of a Pareto rate point of the achievable region and its covariance
transformation $\hat{\mathbf{\Sigma}}_{1:L}$ \textcolor{black}{possess}
the polite water-filling structure.
\end{thm}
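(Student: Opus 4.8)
The plan is to show that a Pareto rate point satisfies a set of first-order (KKT) conditions which, after the pre/post-whitening of Definition~\ref{def:DefGWF}, coincide verbatim with the KKT conditions of a single-user MIMO water-filling problem; the solution of that problem is classical water-filling, which is exactly the polite water-filling structure (\ref{eq:WFFar}). Fix a Pareto rate point $\mathbf{r}^{\star}$ achieved by $\mathbf{\Sigma}_{1:L}$. Since the achievable region need not be convex I would not invoke a weighted-sum-rate supporting hyperplane; instead, for a fixed index $l_{0}$, $\mathbf{\Sigma}_{1:L}$ must solve $\max\mathcal{I}_{l_{0}}(\mathbf{\Sigma}_{1:L},\mathbf{\Phi})$ subject to $\mathcal{I}_{l}(\mathbf{\Sigma}_{1:L},\mathbf{\Phi})\ge r_{l}^{\star}$ for $l\neq l_{0}$, $\mathbf{\Sigma}_{l}\succeq 0$, and $\sum_{l}\textrm{Tr}(\mathbf{\Sigma}_{l})\le P_{T}$. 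Under a constraint qualification (slightly perturbed targets are achievable with strictly smaller power), writing the Lagrangian with multipliers $\lambda_{l}\ge0$ for the rate constraints ($\lambda_{l_{0}}=1$), $\mu\ge0$ for the power constraint, and $\mathbf{M}_{l}\succeq0$ for $\mathbf{\Sigma}_{l}\succeq0$, and using $\mathcal{I}_{k}=\log|\mathbf{\Omega}_{k}+\mathbf{H}_{k,k}\mathbf{\Sigma}_{k}\mathbf{H}_{k,k}^{\dagger}|-\log|\mathbf{\Omega}_{k}|$ together with the fact that $\mathbf{\Sigma}_{l}$ enters $\mathbf{\Omega}_{k}$ only when $\mathbf{\Phi}_{k,l}=1$, the stationarity condition in $\mathbf{\Sigma}_{l}$ (plus complementary slackness $\mathbf{M}_{l}\mathbf{\Sigma}_{l}=0$) reads
\begin{equation}
\lambda_{l}\,\mathbf{H}_{l,l}^{\dagger}\big(\mathbf{\Omega}_{l}+\mathbf{H}_{l,l}\mathbf{\Sigma}_{l}\mathbf{H}_{l,l}^{\dagger}\big)^{-1}\mathbf{H}_{l,l}+\mathbf{M}_{l}=\mu\mathbf{I}+\sum_{k:\,\mathbf{\Phi}_{k,l}=1}\lambda_{k}\,\mathbf{H}_{k,l}^{\dagger}\Big(\mathbf{\Omega}_{k}^{-1}-\big(\mathbf{\Omega}_{k}+\mathbf{H}_{k,k}\mathbf{\Sigma}_{k}\mathbf{H}_{k,k}^{\dagger}\big)^{-1}\Big)\mathbf{H}_{k,l}.
\end{equation}
The left side is the marginal rate benefit of link $l$; the right side is its marginal power-plus-interference cost to the network.

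The crux is the identity that the right-hand cost matrix equals $\mu\hat{\mathbf{\Omega}}_{l}$, where $\hat{\mathbf{\Omega}}_{l}=\mathbf{I}+\sum_{k}\mathbf{\Phi}_{k,l}\mathbf{H}_{k,l}^{\dagger}\hat{\mathbf{\Sigma}}_{k}\mathbf{H}_{k,l}$ is the reverse interference-plus-noise covariance (\ref{eq:WhiteMRV}) built from the covariance transformation $\hat{\mathbf{\Sigma}}_{1:L}$ of (\ref{eq:CovTrans}); equivalently, that $\hat{\mathbf{\Sigma}}_{k}=\tfrac{\lambda_{k}}{\mu}\big(\mathbf{\Omega}_{k}^{-1}-(\mathbf{\Omega}_{k}+\mathbf{H}_{k,k}\mathbf{\Sigma}_{k}\mathbf{H}_{k,k}^{\dagger})^{-1}\big)$ for every $k$. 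To establish it I would decompose each $\mathbf{\Sigma}_{k}$ into MMSE-SIC streams (Definition~\ref{def:Decomposition}) and observe that $\mathbf{\Omega}_{k}^{-1}-(\mathbf{\Omega}_{k}+\mathbf{H}_{k,k}\mathbf{\Sigma}_{k}\mathbf{H}_{k,k}^{\dagger})^{-1}$ telescopes over the intra-link decoding order into $\sum_{m}c_{k,m}\mathbf{r}_{k,m}\mathbf{r}_{k,m}^{\dagger}$, where the $\mathbf{r}_{k,m}$ are exactly the MMSE filters (\ref{eq:MMSErev1G}) and $c_{k,m}$ are SINR-dependent scalars; matching the per-stream factors $\lambda_{k}/\mu$ against the reverse powers $\mathbf{q}$ produced by (\ref{eq:qpower}) --- the assignment that equalizes forward and reverse SINRs in the SINR duality Lemma~\ref{lem:lem1G} --- collapses the sum to $\sum_{k}\mathbf{\Phi}_{k,l}\mathbf{H}_{k,l}^{\dagger}\hat{\mathbf{\Sigma}}_{k}\mathbf{H}_{k,l}=\hat{\mathbf{\Omega}}_{l}-\mathbf{I}$ and forces $\mu>0$ (the power constraint is active). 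I expect this bookkeeping --- reconciling the Lagrange multipliers stream by stream with the duality power vector $\mathbf{q}$ --- to be the main obstacle; everything else is routine.

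Granting the identity, the stationarity condition becomes $\lambda_{l}\mathbf{H}_{l,l}^{\dagger}(\mathbf{\Omega}_{l}+\mathbf{H}_{l,l}\mathbf{\Sigma}_{l}\mathbf{H}_{l,l}^{\dagger})^{-1}\mathbf{H}_{l,l}+\mathbf{M}_{l}=\mu\hat{\mathbf{\Omega}}_{l}$ with $\mathbf{M}_{l}\mathbf{\Sigma}_{l}=0$. Conjugating by $\hat{\mathbf{\Omega}}_{l}^{-1/2}$ and substituting $\bar{\mathbf{H}}_{l}=\mathbf{\Omega}_{l}^{-1/2}\mathbf{H}_{l,l}\hat{\mathbf{\Omega}}_{l}^{-1/2}$ and $\mathbf{Q}_{l}=\hat{\mathbf{\Omega}}_{l}^{1/2}\mathbf{\Sigma}_{l}\hat{\mathbf{\Omega}}_{l}^{1/2}$ turns it into $\lambda_{l}\bar{\mathbf{H}}_{l}^{\dagger}(\mathbf{I}+\bar{\mathbf{H}}_{l}\mathbf{Q}_{l}\bar{\mathbf{H}}_{l}^{\dagger})^{-1}\bar{\mathbf{H}}_{l}+\tilde{\mathbf{M}}_{l}=\mu\mathbf{I}$, $\tilde{\mathbf{M}}_{l}\mathbf{Q}_{l}=0$, $\tilde{\mathbf{M}}_{l}\succeq0$, which is precisely the KKT system of the single-user problem $\max_{\mathbf{Q}_{l}\succeq0,\,\textrm{Tr}(\mathbf{Q}_{l})\le P_{l}'}\log|\mathbf{I}+\bar{\mathbf{H}}_{l}\mathbf{Q}_{l}\bar{\mathbf{H}}_{l}^{\dagger}|$; its unique solution water-fills the singular values of $\bar{\mathbf{H}}_{l}$ with level $\nu_{l}=\lambda_{l}/\mu$, i.e. $\mathbf{Q}_{l}=\mathbf{G}_{l}(\nu_{l}\mathbf{I}-\mathbf{\Delta}_{l}^{-2})^{+}\mathbf{G}_{l}^{\dagger}$ as in (\ref{eq:WFFar}) (links with $\lambda_{l}=0$ have $\mathbf{\Sigma}_{l}=0$, which trivially has this form). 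Carrying this out for every $l$ gives the polite water-filling structure of $\mathbf{\Sigma}_{1:L}$. Finally, by rate duality (Theorem~\ref{thm:linear-color-dual} with $\mathbf{W}_{l}=\hat{\mathbf{W}}_{l}=\mathbf{I}$), $\hat{\mathbf{\Sigma}}_{1:L}$ achieves rates componentwise no worse than $\mathbf{r}^{\star}$ in the reverse B-MAC network $([\mathbf{H}_{k,l}^{\dagger}],\mathbf{\Phi}^{T})$, whose achievable region coincides with the forward one; since $\mathbf{r}^{\star}$ is Pareto, $\hat{\mathbf{\Sigma}}_{1:L}$ attains the same Pareto point there, and applying the statement just proved to that network shows $\hat{\mathbf{\Sigma}}_{1:L}$ also possesses the polite water-filling structure.
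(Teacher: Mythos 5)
The paper does not actually prove Theorem~\ref{thm:WFST} here; it is quoted from the companion work \cite{Liu_IT10s_Duality_BMAC}. Your overall strategy --- characterize the Pareto point as a KKT point of maximizing one rate subject to lower bounds on the others, whiten by $\mathbf{\Omega}_{l}^{-1/2}$ and $\hat{\mathbf{\Omega}}_{l}^{-1/2}$, and read off single-user water-filling --- is the right one and matches that reference. The gap is in the step you yourself flag as the crux. You propose to prove, as a standalone bookkeeping lemma, that $\hat{\mathbf{\Sigma}}_{k}=\tfrac{\lambda_{k}}{\mu}\bigl(\mathbf{\Omega}_{k}^{-1}-(\mathbf{\Omega}_{k}+\mathbf{H}_{k,k}\mathbf{\Sigma}_{k}\mathbf{H}_{k,k}^{\dagger})^{-1}\bigr)$ by matching the telescoping coefficients $c_{k,m}$ against the duality powers $q_{k,m}$ of (\ref{eq:qpower}). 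That matching is false for general inputs: both sides are sums $\sum_{m}(\cdot)\,\mathbf{r}_{k,m}\mathbf{r}_{k,m}^{\dagger}$, so the identity forces $q_{k,m}=\tfrac{\lambda_{k}}{\mu}c_{k,m}$ with a link-wise constant ratio, and already in a single-user channel $\mathbf{H}=\mathbf{I}_{2}$ with $\mathbf{\Sigma}=\mathrm{diag}(a,b)$, $a\neq b$, one computes $c_{m}=a/(1+a),\,b/(1+b)$ while $q_{m}=a,\,b$, so the ratio is constant only when $a=b$, i.e.\ only when $\mathbf{\Sigma}$ already water-fills. Theorem~\ref{thm:FequRGWF} makes the same point: the closed form (\ref{eq:SigmhDirect}) for the covariance transformation is stated under the hypothesis that $\mathbf{\Sigma}_{l}$ has the polite water-filling structure. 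So your key identity is essentially equivalent to the conclusion you are trying to reach, and the proof as organized is circular.

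The fix is to reverse the order of the two halves of your argument. From the stationarity condition, define $\mathbf{S}_{k}\triangleq\tfrac{\lambda_{k}}{\mu}\bigl(\mathbf{\Omega}_{k}^{-1}-(\mathbf{\Omega}_{k}+\mathbf{H}_{k,k}\mathbf{\Sigma}_{k}\mathbf{H}_{k,k}^{\dagger})^{-1}\bigr)$ directly from the multipliers (these are positive semidefinite, so they are legitimate reverse-link covariances), set $\mathbf{\Omega}_{l}^{\mathrm{rev}}=\mathbf{I}+\sum_{k}\mathbf{\Phi}_{k,l}\mathbf{H}_{k,l}^{\dagger}\mathbf{S}_{k}\mathbf{H}_{k,l}$, and observe that stationarity reads $\lambda_{l}\mathbf{H}_{l,l}^{\dagger}(\mathbf{\Omega}_{l}+\mathbf{H}_{l,l}\mathbf{\Sigma}_{l}\mathbf{H}_{l,l}^{\dagger})^{-1}\mathbf{H}_{l,l}+\mathbf{M}_{l}=\mu\mathbf{\Omega}_{l}^{\mathrm{rev}}$; whitening then gives the water-filling structure of $\mathbf{\Sigma}_{l}$ over $\mathbf{\Omega}_{l}^{-1/2}\mathbf{H}_{l,l}(\mathbf{\Omega}_{l}^{\mathrm{rev}})^{-1/2}$ exactly as in your last paragraph. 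Only \emph{after} this structure is in hand can you invoke the mechanism of Theorem~\ref{thm:FequRGWF} to identify $\mathbf{S}_{k}$ with the covariance transformation $\hat{\mathbf{\Sigma}}_{k}$ and hence $\mathbf{\Omega}_{l}^{\mathrm{rev}}$ with $\hat{\mathbf{\Omega}}_{l}$, which is what Definition~\ref{def:DefGWF} requires. Your final paragraph (transferring the result to $\hat{\mathbf{\Sigma}}_{1:L}$ via the rate duality of Theorem~\ref{thm:linear-color-dual}) is fine once the forward half is repaired.
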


The following theorem proved in \textcolor{black}{\cite{Liu_IT10s_Duality_BMAC}}
states that $\mathbf{\Sigma}_{l}$ having the polite water-filling
structure suffices for $\hat{\mathbf{\Sigma}}_{l}$ to have the polite
water-filling structure even at a non-Pareto rate point.
\begin{thm}
\label{thm:FequRGWF}If one input covariance matrix $\mathbf{\Sigma}_{l}$
has the polite water-filling structure while other $\mathbf{\Sigma}_{k},\hat{\mathbf{\Sigma}}_{k},$
$k\ne l,$ are fixed, so does its covariance transformation $\hat{\mathbf{\Sigma}}_{l}$,
i.e., $\mathbf{\hat{Q}}_{l}\triangleq\mathbf{\Omega}_{l}^{1/2}\hat{\mathbf{\Sigma}}_{l}\mathbf{\Omega}_{l}^{1/2}$
satisfies the structure of water-filling over the reverse equivalent
channel $\bar{\mathbf{H}}_{l}^{\dagger}\triangleq\hat{\mathbf{\Omega}}_{l}^{-1/2}\mathbf{H}_{l,l}^{\dagger}\mathbf{\Omega}_{l}^{-1/2}$.
Further more, $\hat{\mathbf{\Sigma}}_{l}$ can be expressed as \begin{align}
\hat{\mathbf{\Sigma}}_{l} & =\nu_{l}\left(\mathbf{\Omega}_{l}^{-1}-\left(\mathbf{H}_{l,l}\mathbf{\Sigma}_{l}\mathbf{H}_{l,l}^{\dagger}+\mathbf{\Omega}_{l}\right)^{-1}\right),\ l=1,...,L,\label{eq:SigmhDirect}\end{align}
where $\nu_{l}$ is the polite water-filling level in (\ref{eq:WFFar}).
\end{thm}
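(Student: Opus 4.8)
The plan is to reduce the covariance transformation of link $l$ to a single‑user MIMO computation in the whitened domain, where the polite water‑filling hypothesis forces everything to diagonalize along the singular vectors of $\bar{\mathbf{H}}_l$, and then read off both the structural claim and the closed form (\ref{eq:SigmhDirect}). Since $\mathbf{\Sigma}_k$ and $\hat{\mathbf{\Sigma}}_k$ for $k\ne l$ are held fixed and $\mathbf{\Phi}_{l,l}=0$, the matrices $\mathbf{\Omega}_l$ in (\ref{eq:whiteMG}) and $\hat{\mathbf{\Omega}}_l$ in (\ref{eq:WhiteMRV}) are fixed positive definite matrices throughout. Writing the thin SVD $\bar{\mathbf{H}}_l=\mathbf{F}_l\mathbf{\Delta}_l\mathbf{G}_l^{\dagger}$ as in Definition \ref{def:DefGWF}, the hypothesis says $\mathbf{Q}_l=\mathbf{G}_l\mathbf{D}_l\mathbf{G}_l^{\dagger}$ with $\mathbf{D}_l=(\nu_l\mathbf{I}-\mathbf{\Delta}_l^{-2})^{+}$, hence $\mathbf{\Sigma}_l=\hat{\mathbf{\Omega}}_l^{-1/2}\mathbf{Q}_l\hat{\mathbf{\Omega}}_l^{-1/2}$. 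I would then pick the decomposition of $\mathbf{\Sigma}_l$ aligned with the columns $\mathbf{g}_{l,m}$ of $\mathbf{G}_l$ carrying positive power, i.e. $\mathbf{t}_{l,m}=\hat{\mathbf{\Omega}}_l^{-1/2}\mathbf{g}_{l,m}/\|\hat{\mathbf{\Omega}}_l^{-1/2}\mathbf{g}_{l,m}\|$ with the corresponding $p_{l,m}$ (any decomposition yields the same $\hat{\mathbf{\Sigma}}_l$, cf. \cite{Liu_IT10s_Duality_BMAC}).

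The crux is that this decomposition diagonalizes the entire stream system of link $l$. Using $\mathbf{H}_{l,l}\hat{\mathbf{\Omega}}_l^{-1/2}=\mathbf{\Omega}_l^{1/2}\bar{\mathbf{H}}_l$ and $\bar{\mathbf{H}}_l\mathbf{g}_{l,m}=\delta_{l,m}\mathbf{f}_{l,m}$, the received direction $\mathbf{H}_{l,l}\mathbf{t}_{l,m}$ is proportional to $\mathbf{\Omega}_l^{1/2}\mathbf{f}_{l,m}$; substituting this into the MMSE‑SIC formula (\ref{eq:MMSErev1G}) and using orthonormality of the columns of $\mathbf{F}_l$ shows $\mathbf{r}_{l,m}=\mathbf{\Omega}_l^{-1/2}\mathbf{f}_{l,m}/\|\mathbf{\Omega}_l^{-1/2}\mathbf{f}_{l,m}\|$. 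Consequently $\mathbf{r}_{l,m}^{\dagger}\mathbf{H}_{l,l}\mathbf{t}_{l,n}\propto\mathbf{f}_{l,m}^{\dagger}\mathbf{f}_{l,n}=0$ for $m\ne n$, so all intra‑link entries of $\mathbf{\Psi}$ (forward and reverse) vanish, while the inter‑link interference‑plus‑noise seen by stream $(l,m)$ collapses to $\mathbf{r}_{l,m}^{\dagger}\mathbf{\Omega}_l\mathbf{r}_{l,m}=\|\mathbf{\Omega}_l^{-1/2}\mathbf{f}_{l,m}\|^{-2}$ in the forward link and to $\mathbf{t}_{l,m}^{\dagger}\hat{\mathbf{\Omega}}_l\mathbf{t}_{l,m}=\|\hat{\mathbf{\Omega}}_l^{-1/2}\mathbf{g}_{l,m}\|^{-2}$ in the reverse link. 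Feeding these into (\ref{eq:SINR1G}) gives $\gamma_{l,m}=p_{l,m}\delta_{l,m}^{2}\|\hat{\mathbf{\Omega}}_l^{-1/2}\mathbf{g}_{l,m}\|^{-2}=d_{l,m}\delta_{l,m}^{2}$, hence $1+\gamma_{l,m}=\nu_l\delta_{l,m}^{2}$ on the active streams. Because this stream system is diagonal, the block of (\ref{eq:qpower}) corresponding to link $l$ (with the other links' reverse powers held fixed) is a set of decoupled scalar equations $\hat{\gamma}_{l,m}=\gamma_{l,m}$ from (\ref{eq:SINR2G}), whose solution is $q_{l,m}=\gamma_{l,m}\|\mathbf{\Omega}_l^{-1/2}\mathbf{f}_{l,m}\|^{2}/\delta_{l,m}^{2}=d_{l,m}\|\mathbf{\Omega}_l^{-1/2}\mathbf{f}_{l,m}\|^{2}$.

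Now assemble $\hat{\mathbf{\Sigma}}_l=\sum_m q_{l,m}\mathbf{r}_{l,m}\mathbf{r}_{l,m}^{\dagger}$; the normalization constants cancel and $\hat{\mathbf{\Sigma}}_l=\mathbf{\Omega}_l^{-1/2}\mathbf{F}_l\mathbf{D}_l\mathbf{F}_l^{\dagger}\mathbf{\Omega}_l^{-1/2}$, i.e. $\hat{\mathbf{Q}}_l=\mathbf{\Omega}_l^{1/2}\hat{\mathbf{\Sigma}}_l\mathbf{\Omega}_l^{1/2}=\mathbf{F}_l\mathbf{D}_l\mathbf{F}_l^{\dagger}$. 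Since the reverse equivalent channel is $\bar{\mathbf{H}}_l^{\dagger}=\mathbf{G}_l\mathbf{\Delta}_l\mathbf{F}_l^{\dagger}$, whose right singular vectors are the columns of $\mathbf{F}_l$ with singular values $\mathbf{\Delta}_l$, and $\mathbf{D}_l=(\nu_l\mathbf{I}-\mathbf{\Delta}_l^{-2})^{+}$, this is precisely the water‑filling covariance over $\bar{\mathbf{H}}_l^{\dagger}$ at the same level $\nu_l$, which is the structural claim. For the closed form, write $\mathbf{H}_{l,l}\mathbf{\Sigma}_l\mathbf{H}_{l,l}^{\dagger}+\mathbf{\Omega}_l=\mathbf{\Omega}_l^{1/2}(\mathbf{I}+\bar{\mathbf{H}}_l\mathbf{Q}_l\bar{\mathbf{H}}_l^{\dagger})\mathbf{\Omega}_l^{1/2}$ with $\bar{\mathbf{H}}_l\mathbf{Q}_l\bar{\mathbf{H}}_l^{\dagger}=\mathbf{F}_l\mathbf{\Delta}_l^{2}\mathbf{D}_l\mathbf{F}_l^{\dagger}$, so that $\nu_l\big(\mathbf{\Omega}_l^{-1}-(\mathbf{H}_{l,l}\mathbf{\Sigma}_l\mathbf{H}_{l,l}^{\dagger}+\mathbf{\Omega}_l)^{-1}\big)=\nu_l\,\mathbf{\Omega}_l^{-1/2}\mathbf{F}_l\,\mathrm{diag}\!\big(\delta_{l,m}^{2}d_{l,m}/(1+\delta_{l,m}^{2}d_{l,m})\big)\mathbf{F}_l^{\dagger}\mathbf{\Omega}_l^{-1/2}$; since $1+\delta_{l,m}^{2}d_{l,m}=\nu_l\delta_{l,m}^{2}$ on the active streams (and both sides vanish on the inactive ones), each diagonal entry equals $d_{l,m}/\nu_l$ and the right‑hand side reduces to $\mathbf{\Omega}_l^{-1/2}\mathbf{F}_l\mathbf{D}_l\mathbf{F}_l^{\dagger}\mathbf{\Omega}_l^{-1/2}=\hat{\mathbf{\Sigma}}_l$, giving (\ref{eq:SigmhDirect}).

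I expect the main obstacle to be the reduction step rather than the algebra: one must argue carefully that holding the other links fixed makes $\mathbf{\Omega}_l$ and $\hat{\mathbf{\Omega}}_l$ constant and, crucially, that the link‑$l$ block of the reverse‑power equation (\ref{eq:qpower}) genuinely decouples into per‑stream SINR‑matching equations once the SVD‑aligned decomposition is used. The remaining work — verifying $\mathbf{r}_{l,m}\propto\mathbf{\Omega}_l^{-1/2}\mathbf{f}_{l,m}$ from the MMSE‑SIC formula, tracking the $\|\hat{\mathbf{\Omega}}_l^{-1/2}\mathbf{g}_{l,m}\|$ and $\|\mathbf{\Omega}_l^{-1/2}\mathbf{f}_{l,m}\|$ normalizations so that they cancel in $\hat{\mathbf{\Sigma}}_l$, and the one‑line matrix identity for the closed form — is routine once the diagonalization is in hand.
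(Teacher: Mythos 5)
Your proof is correct and follows essentially the same route as the proof this paper defers to in \cite{Liu_IT10s_Duality_BMAC}: whiten to $\bar{\mathbf{H}}_{l}$, align the decomposition of $\mathbf{\Sigma}_{l}$ with the right singular vectors so that the MMSE--SIC receive vectors become $\mathbf{\Omega}_{l}^{-1/2}\mathbf{f}_{l,m}$ (up to normalization), observe that the intra-link cross-talk vanishes so the link-$l$ block of (\ref{eq:qpower}) decouples into per-stream SINR-matching equations, and read off $\hat{\mathbf{Q}}_{l}=\mathbf{F}_{l}\mathbf{D}_{l}\mathbf{F}_{l}^{\dagger}$ together with the identity (\ref{eq:SigmhDirect}). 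The only external ingredient is the decomposition-independence of the covariance transformation, which you correctly attribute to \cite{Liu_IT10s_Duality_BMAC}; with that in hand the argument is complete.
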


\section{\label{sec:Algorithms}Optimization Algorithms}

\textcolor{black}{In this section, we present several related algorithms
for the feasibility optimization problem }\textbf{(FOP)}\textcolor{black}{{}
and the sum power minimization problem (}\textbf{\textcolor{black}{SPMP}}\textcolor{black}{)
under rate constraints. SINR based and polite water-filling based
algorithms are designed. Algorithms for SINR version of }\textbf{\textcolor{black}{FOP}}\textcolor{black}{{}
and }\textbf{\textcolor{black}{SPMP}}\textcolor{black}{{} have been
designed in \cite{Martin_ITV_04_BFdual,Rao_TOC07_netduality}. To
take advantage of them, we show how to map a Pareto point of the achievable
rate region to a Pareto point of the SINR region in Section \ref{sec:Rate-SINR}
and then use SINR based Algorithm A and B to solve }\textbf{\textcolor{black}{FOP}}\textcolor{black}{{}
and }\textbf{\textcolor{black}{SPMP}}\textcolor{black}{{} respectively
in Section \ref{sub:Algorithms-to-Solve}. The optimality of Algorithms
A and B is studied in Section \ref{sub:Optimality-Analysis} by examining
the structure of the optimal solutions of }\textbf{\textcolor{black}{FOP}}\textcolor{black}{{}
and }\textbf{\textcolor{black}{SPMP}}\textcolor{black}{. Then, for
iTree networks defined later, Algorithm I is designed to improve the
output of Algorithm A and B. The improvement and the optimal structure
suggests that the rate constrained problems can be directly solved
using Algorithm PR and PR1 in Section \ref{sub:PR-PR1} by polite
water-filling, without resorting to the SINR based approach. In a
network, it is desirable to have distributed algorithms, for which
Algorithm PRD is designed in Section \ref{sub:Distributed-Implementation}.
Finally, we design Algorithm O to improve the encoding and decoding
orders for all of the above algorithms when DPC and SIC are employed.
For convenience, a list of algorithms in this paper is summarized
in Table \ref{tab:List}.}

\textit{\textcolor{black}{\emph{}}}%
{}

\begin{table}
\caption{\label{tab:List}List of Algorithms}

\centering{}{\small }\begin{tabular}{|c|c|c|c|}
\hline
{\small Sec.} & {\small Tab.} & {\small Alg.} & {\small Purpose}\tabularnewline
\hline
\hline
{\small \ref{sub:Algorithms-to-Solve}} & {\small \ref{tab:table1}} & {\small A} & {\small SINR based, for }\textbf{\small EFOP}\tabularnewline
\hline
{\small \ref{sub:Algorithms-to-Solve}} & {\small \ref{tab:table2}} & {\small B} & {\small SINR based, for }\textbf{\small ESPMP}\tabularnewline
\hline
{\small \ref{sub:itree}} & {\small \ref{tab:table4}} & {\small S} & {\small Subroutine, for link $i$}\tabularnewline
\hline
{\small \ref{sub:itree}} & {\small \ref{tab:table3}} & {\small I} & {\small Improvement of A/B for iTree networks}\tabularnewline
\hline
{\small \ref{sub:PR-PR1}} & {\small \ref{tab:alg-W}} & {\small W} & {\small Subroutine, for water-filling level}\tabularnewline
\hline
{\small \ref{sub:PR-PR1}} & {\small \ref{tab:table7}} & {\small PR} & {\small Polite WF based, for iTree, }\textbf{\small SPMP}\tabularnewline
\hline
{\small \ref{sub:PR-PR1}} & {\small \ref{tab:table6}} & {\small PR1} & {\small Polite WF based, for B-MAC, }\textbf{\small SPMP}\tabularnewline
\hline
{\small \ref{sub:Distributed-Implementation}} & {\small \ref{tab:table9}} & {\small PRD} & {\small Distributed version of PR1}\tabularnewline
\hline
{\small \ref{sub:OrderOptimization}} & {\small \ref{tab:table8}} & {\small O} & {\small Enc./dec. order optimization, for }\textbf{\small FOP}{\small /}\textbf{\small SPMP}\tabularnewline
\hline
\end{tabular}
\end{table}

\subsection{Rate-SINR Conversion\label{sec:Rate-SINR}}

In order to find Pareto rate points of the achievable rate region
by taking advantage of algorithms that finds Pareto points of the
SINR region, one needs to find a mapping from a Pareto rate point
to a Pareto SINR point. But multiple SINR points can correspond to
the same rate and thus, multiple mappings exist. The following two
theorems give an equal SINR mapping and an equal power mapping by\textcolor{black}{{}
}choosing two decompositions of a MIMO link to multiple SISO data
streams. Note that for the same total link rate, different decompositions
have different sets of SINRs of the streams and different number of
streams. \textcolor{black}{We show that equal SINR allocation or equal
power allocation among the streams within a link will not lose optimality.}
\begin{thm}
\textcolor{black}{\label{thm:EquSINRopt}For any input covariance
matrices $\mathbf{\Sigma}_{1:L}$ achieving a rate point $\left[\mathcal{I}_{l}\right]_{l=1,...,L}$,
there exists a decomposition $\mathbf{\Sigma}_{l}=\dot{\mathbf{T}}_{l}\dot{\mathbf{T}}_{l}^{\dagger}=\sum_{m=1}^{M_{l}}p_{l,m}\mathbf{t}_{l,m}\mathbf{t}_{l,m}^{\dagger},l=1,...,L$,
with $M_{l}\ge\text{Rank}(\mathbf{\Sigma}_{l})$, such that the corresponding
transmission and MMSE-SIC reception strategy $\left\{ \mathbf{T},\mathbf{R},\mathbf{p}\right\} $
achieves equal SINR for all streams of the same link, i.e., $\gamma_{l,m}=e^{\mathcal{I}_{l}/M_{l}}-1,m=1,...,M_{l},l=1,...,L$.
Therefore uniform rate allocation over the streams of the same link
will not lose optimality.}
\end{thm}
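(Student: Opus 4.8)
The plan is to prove the theorem by induction on $M_l$ for each link $l$ separately, exploiting the fact that the MMSE-SIC receiver is information lossless (as stated after \eqref{eq:SINR1G}) and the chain-rule structure of mutual information. Fix link $l$, fix all covariance matrices $\mathbf{\Sigma}_{1:L}$, and note that by \eqref{eq:whiteMG} the interference-plus-noise covariance $\mathbf{\Omega}_l$ is determined and plays the role of (colored) noise for link $l$; after whitening we reduce to the single-user channel $\mathbf{\Omega}_l^{-1/2}\mathbf{H}_{l,l}\mathbf{\Sigma}_l^{1/2}$ with total mutual information $\mathcal{I}_l$. The key fact is that once we decompose $\mathbf{\Sigma}_l=\sum_{m=1}^{M_l}p_{l,m}\mathbf{t}_{l,m}\mathbf{t}_{l,m}^{\dagger}$ and decode streams successively with MMSE filtering in the prescribed intra-signal order, the sum of the per-stream rates $\sum_{m=1}^{M_l}\log(1+\gamma_{l,m})$ equals $\mathcal{I}_l$ exactly, regardless of how we choose the decomposition. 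Our job is to choose the decomposition so that each summand equals $\mathcal{I}_l/M_l$, equivalently $\gamma_{l,m}=e^{\mathcal{I}_l/M_l}-1$ for all $m$.

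The core step is a one-stream-splitting lemma: given any $\mathbf{\Sigma}_l$ with a rate $\mathcal{I}$ on the whitened channel and any integer $M\ge \mathrm{Rank}(\mathbf{\Sigma}_l)$, I want to produce a rank-one "first stream" $p\,\mathbf{t}\mathbf{t}^{\dagger}$ (with MMSE receive vector, treating the remaining signal plus $\mathbf{\Omega}_l$ as noise) carrying exactly rate $\mathcal{I}/M$, such that the residual covariance $\mathbf{\Sigma}_l - p\,\mathbf{t}\mathbf{t}^{\dagger}$ is positive semidefinite with rank at most $\mathrm{Rank}(\mathbf{\Sigma}_l)$ (so that the induction hypothesis applies with $M-1$ and residual rate $\mathcal{I}(M-1)/M$ on the updated channel, where the first stream has been cancelled). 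To construct $\mathbf{t}$: work in the eigenbasis of $\bar{\mathbf{H}}_l = \mathbf{\Omega}_l^{-1/2}\mathbf{H}_{l,l}$ acting on $\mathbf{\Sigma}_l$; the per-stream MMSE-SIC rate of a candidate first stream in direction $\mathbf{t}$ (with the rest as noise) is a continuous function of $\mathbf{t}$ and of a scalar power-scaling parameter, ranging continuously from $0$ (as the stream power $\to 0$) up to the full link rate $\mathcal{I}$ (when the stream absorbs a full eigen-direction's worth of power and the rest is vanishingly small in an appropriate limit), so by the intermediate value theorem some choice yields exactly $\mathcal{I}/M$. One clean way to guarantee the residual stays PSD of bounded rank is to take $\mathbf{t}$ and its power from a "rotated" decomposition: diagonalize $\mathbf{\Sigma}_l$, and show that by applying a suitable unitary $\mathbf{V}$ to the square-root factor one can equalize the MMSE-SIC SINRs — this is the standard "uniform decomposition" / geometric-mean-decomposition type argument (as in the MMSE-is-optimal literature cited, e.g., \cite{Varanasi_Asilomar97_MMSE_is_optimal}), adapted here so that the target is a common value $e^{\mathcal{I}_l/M_l}-1$ rather than an arbitrary profile.

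I expect the main obstacle to be the bookkeeping that the intra-signal decoding order in \eqref{eq:MMSErev1G} makes the per-stream rates depend on the full ordered tuple of transmit vectors, not just on $\mathbf{\Sigma}_l$; so "equalizing SINRs" is not simply a statement about the spectrum of one matrix but about a carefully chosen ordered factorization. The cleanest route is probably to invoke an existence result of the form: for any PSD $\mathbf{\Sigma}_l$ and any $M\ge\mathrm{Rank}(\mathbf{\Sigma}_l)$, there is a (generally non-square) matrix $\dot{\mathbf{T}}_l$ with $\dot{\mathbf{T}}_l\dot{\mathbf{T}}_l^{\dagger}=\mathbf{\Sigma}_l$ whose columns, when processed by successive MMSE-SIC on the effective channel $\bar{\mathbf{H}}_l$, yield equal SINR across columns; this is exactly the multiplicative majorization / GMD statement, and the common SINR value is forced to be $e^{\mathcal{I}_l/M_l}-1$ because the product $\prod_m(1+\gamma_{l,m}) = e^{\mathcal{I}_l}$ is fixed by the information-losslessness identity $\sum_m\log(1+\gamma_{l,m})=\mathcal{I}_l$. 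Once that existence is in hand, the final sentence of the theorem ("uniform rate allocation over the streams of the same link will not lose optimality") is immediate, since the constructed strategy achieves the same rate vector $\left[\mathcal{I}_l\right]_{l=1,\dots,L}$ as the original $\mathbf{\Sigma}_{1:L}$ while having equal per-stream rates within each link.
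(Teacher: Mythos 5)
Your reduction to a whitened single-user channel, your use of the losslessness identity $\sum_{m}\log(1+\gamma_{l,m})=\mathcal{I}_{l}$ to pin the common SINR at $e^{\mathcal{I}_{l}/M_{l}}-1$, and your second (``cleanest'') route of fixing one square-root factor $\dot{\mathbf{T}}_{l}$ and searching over unitaries $\mathbf{V}$ all coincide with what the paper actually does. The gap is that the one step carrying all the content --- the existence of a unitary $\mathbf{V}$ whose \emph{ordered} columns equalize the MMSE-SIC SINRs --- is asserted by appeal to a ``GMD / multiplicative majorization'' statement rather than proven. That appeal is not supported by the reference you point to (\cite{Varanasi_Asilomar97_MMSE_is_optimal} gives only losslessness), and the paper does not import it either: Appendix \ref{sub:Proof-for-EquSINRopt} proves the existence from scratch by a backward induction over the decoding order, choosing the transmit vectors from the last-decoded stream to the first. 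Writing $\bar{\mathbf{A}}_{m}=\bar{\mathbf{H}}^{\dagger}\bigl(\sum_{i=m+1}^{M}\bar{\mathbf{H}}\mathbf{v}_{i}\mathbf{v}_{i}^{\dagger}\bar{\mathbf{H}}^{\dagger}+\mathbf{I}\bigr)^{-1}\bar{\mathbf{H}}$ so that $\gamma_{m}=\mathbf{v}_{m}^{\dagger}\bar{\mathbf{A}}_{m}\mathbf{v}_{m}$ (note $\bar{\mathbf{A}}_{m}$ depends only on the already-chosen $\mathbf{v}_{m+1},\dots,\mathbf{v}_{M}$), the paper first picks $\mathbf{v}_{M}$ using $\prod_{i}(1+\lambda_{i}^{(M)})=e^{\mathcal{I}}$, which forces the extreme eigenvalues of $\bar{\mathbf{A}}_{M}$ to straddle $e^{\mathcal{I}/M}-1$ and lets an intermediate-value argument on the Rayleigh quotient in the span of the two extreme eigenvectors hit the target. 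At the inductive step it builds from $\bar{\mathbf{A}}_{m}$ an auxiliary PSD operator $\tilde{\mathbf{A}}_{m}$ of rank at most $m$, supported on the orthogonal complement of $\mathrm{span}\{\mathbf{v}_{m+1},\dots,\mathbf{v}_{M}\}$ and agreeing with $\bar{\mathbf{A}}_{m}$ on quadratic forms there, shows via the chain rule that $\prod_{i=1}^{m}(1+\tilde{\lambda}_{i}^{(m)})=e^{m\mathcal{I}/M}$, and repeats the straddling-plus-IVT argument to get $\mathbf{v}_{m}$ orthogonal to its successors. If you want a self-contained proof you must supply this (or an equivalent) construction; citing the uniform-channel-decomposition literature would also close the gap but turns the proof into a reference.

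Your first route (peeling off a rank-one first-decoded stream of rate exactly $\mathcal{I}/M$ and recursing on the residual) has an additional unresolved difficulty that you partly anticipate: when $M_{l}=\text{Rank}(\mathbf{\Sigma}_{l})$ the residual $\mathbf{\Sigma}_{l}-p\,\mathbf{t}\mathbf{t}^{\dagger}$ must drop rank by one for the induction to terminate, which ties $p$ to $\mathbf{t}$ and removes the free scalar your intermediate-value argument relies on; you would then still have to show that the first-stream rate sweeps through $\mathcal{I}/M$ as $\mathbf{t}$ ranges over the admissible rank-reducing directions. The unitary-rotation formulation avoids this entirely because $\dot{\mathbf{T}}_{l}\mathbf{V}$ factors the same $\mathbf{\Sigma}_{l}$ for every unitary $\mathbf{V}$, which is precisely why the paper adopts it.
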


\textcolor{black}{The proof is given in appendix \ref{sub:Proof-for-EquSINRopt}
and provides an algorithm to find the decomposition. }An immediate
consequence of Theorem \ref{thm:EquSINRopt} is a mapping of the Pareto
boundary points of the achievable rate region to the SINR region.
\begin{cor}
\textcolor{black}{\label{cor:Rate-SINRP}Let $M_{l}=\text{Rank}(\mathbf{H}_{l,l})$}%
\footnote{\textcolor{black}{This will not lose optimality because by Theorem
\ref{thm:WFST}, the rank of the optimal input covariance matrix for
link $l$ is no more than the rank of $\mathbf{H}_{l,l}$.}%
}\textcolor{black}{. An SINR point $\left[\gamma_{l,m}=e^{\mathcal{I}_{l}/M_{l}}-1\right]_{m=1,...,M_{l},l=1,...,L}$
is a Pareto boundary point in $\mathcal{T}_{\mathbf{\Phi}}\left(P_{T}\right)$,
if and only if the rate point $\left[\mathcal{I}_{l}\right]_{l=1,...,L}$
}is a Pareto rate point \textcolor{black}{in $\mathcal{R}_{\mathbf{\Phi}}\left(P_{T}\right)$.}
\end{cor}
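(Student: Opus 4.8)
The plan is to obtain this directly from Theorem~\ref{thm:EquSINRopt} once a short ``dictionary'' between rate points and equal-SINR points is in place. Write $M_l=\mathrm{Rank}(\mathbf{H}_{l,l})$ and let $\phi$ denote the map carrying a rate vector $[\mathcal{I}_l]_{l=1}^{L}$ to the equal-SINR vector $\phi([\mathcal{I}_l])=[e^{\mathcal{I}_l/M_l}-1]_{m=1,\dots,M_l,\,l=1,\dots,L}$. Each output coordinate depends only on $\mathcal{I}_l$ and is strictly increasing in it, so $\phi$ is a homeomorphism onto the ``equal-SINR subspace'' $\{\mathbf{\gamma}:\gamma_{l,1}=\dots=\gamma_{l,M_l}\ \forall l\}$ that preserves and reflects the componentwise order at the link level, and $\sum_{m=1}^{M_l}\log(1+\gamma_{l,m})=\mathcal{I}_l$ on its image. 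I will use two facts. Fact~(a): if a forward strategy $\{\mathbf{T},\mathbf{R},\mathbf{p}\}$ with $\|\mathbf{p}\|_1\le P_T$ attains SINRs $[\gamma_{l,m}]$, then $\mathbf{\Sigma}_l=\sum_{m}p_{l,m}\mathbf{t}_{l,m}\mathbf{t}_{l,m}^{\dagger}$ satisfies $\sum_l\textrm{Tr}(\mathbf{\Sigma}_l)=\|\mathbf{p}\|_1\le P_T$, and since successive decoding achieves at most the link mutual information, $\mathcal{I}_l(\mathbf{\Sigma}_{1:L},\mathbf{\Phi})\ge\sum_m\log(1+\gamma_{l,m})$; by downward-closedness of the rate region, $[\sum_m\log(1+\gamma_{l,m})]_l\in\mathcal{R}_{\mathbf{\Phi}}(P_T)$. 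Fact~(b): if $\mathbf{\Sigma}_{1:L}$ achieves a rate point, $\sum_l\textrm{Tr}(\mathbf{\Sigma}_l)\le P_T$, and $\mathrm{Rank}(\mathbf{\Sigma}_l)\le M_l$ for all $l$, then Theorem~\ref{thm:EquSINRopt} gives a forward strategy of total power $\sum_l\textrm{Tr}(\mathbf{\Sigma}_l)$ realizing $\phi([\mathcal{I}_l(\mathbf{\Sigma}_{1:L},\mathbf{\Phi})])$, so that equal-SINR vector lies in $\mathcal{T}_{\mathbf{\Phi}}(P_T)$; moreover the rank hypothesis is automatic whenever $[\mathcal{I}_l(\mathbf{\Sigma}_{1:L},\mathbf{\Phi})]$ is a Pareto rate point, because by Theorem~\ref{thm:WFST} the equivalent input covariance $\mathbf{Q}_l=\mathbf{G}_l\mathbf{D}_l\mathbf{G}_l^{\dagger}$ has rank at most $N_l=\mathrm{Rank}(\mathbf{H}_{l,l})=M_l$ and $\hat{\mathbf{\Omega}}_l$ is invertible, so $\mathrm{Rank}(\mathbf{\Sigma}_l)=\mathrm{Rank}(\mathbf{Q}_l)\le M_l$.

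With these in hand the two implications are mirror images. For the ``if'' part, let $[\mathcal{I}_l]$ be Pareto in $\mathcal{R}_{\mathbf{\Phi}}(P_T)$; it is attained by covariances of power $\le P_T$ and rank $\le M_l$ (Theorem~\ref{thm:WFST}), so Fact~(b) puts $\mathbf{\gamma}^0:=\phi([\mathcal{I}_l])$ in $\mathcal{T}_{\mathbf{\Phi}}(P_T)$. If $\mathbf{\gamma}^0$ were not Pareto there, take $\mathbf{\gamma}'\in\mathcal{T}_{\mathbf{\Phi}}(P_T)$ with $\mathbf{\gamma}'\succeq\mathbf{\gamma}^0$ and $\gamma'_{l_0,m_0}>\gamma^0_{l_0,m_0}$; realizing $\mathbf{\gamma}'$ and applying Fact~(a) yields a rate vector in $\mathcal{R}_{\mathbf{\Phi}}(P_T)$ with $l$-th entry $\ge\sum_m\log(1+\gamma'_{l,m})\ge\mathcal{I}_l$ for all $l$ and strictly larger than $\mathcal{I}_{l_0}$, contradicting Pareto optimality of $[\mathcal{I}_l]$. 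For the ``only if'' part, let $\mathbf{\gamma}^0=\phi([\mathcal{I}_l])$ be Pareto in $\mathcal{T}_{\mathbf{\Phi}}(P_T)$; realizing it and using Fact~(a) shows $[\mathcal{I}_l]\in\mathcal{R}_{\mathbf{\Phi}}(P_T)$. If it were not Pareto, compactness of $\mathcal{R}_{\mathbf{\Phi}}(P_T)$ (bounded since the power is bounded, closed since $\mathcal{I}_l$ is continuous) lets us choose a Pareto rate point $[\mathcal{I}'_l]\succeq[\mathcal{I}_l]$ with $\mathcal{I}'_{l_0}>\mathcal{I}_{l_0}$; by Fact~(b), $\phi([\mathcal{I}'_l])\in\mathcal{T}_{\mathbf{\Phi}}(P_T)$, and $\phi([\mathcal{I}'_l])\succeq\phi([\mathcal{I}_l])=\mathbf{\gamma}^0$ with strict inequality in every coordinate of link $l_0$, contradicting Pareto optimality of $\mathbf{\gamma}^0$.

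The only real subtlety, and hence the step I would be most careful with, is ensuring that Theorem~\ref{thm:EquSINRopt} may legitimately be invoked with the \emph{fixed} choice $M_l=\mathrm{Rank}(\mathbf{H}_{l,l})$: that theorem only guarantees a decomposition when $M_l\ge\mathrm{Rank}(\mathbf{\Sigma}_l)$, so it must always be applied to a \emph{Pareto-optimal} covariance, invoking the polite water-filling structure of Theorem~\ref{thm:WFST} for the rank bound --- this is also why the contradiction in the ``only if'' part first pushes the dominating rate point onto the Pareto boundary. A second, minor point is orienting the inequality in Fact~(a) correctly: the sum of stream rates never exceeds $\mathcal{I}_l(\mathbf{\Sigma}_{1:L},\mathbf{\Phi})$ regardless of which (not necessarily MMSE-SIC) receivers realize the given SINRs, which is all that is needed since we only want membership in $\mathcal{R}_{\mathbf{\Phi}}(P_T)$ together with a strict gain in one link's sum. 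Once these bookkeeping matters are settled, the equivalence is immediate from the order-preserving, order-reflecting nature of $\phi$.
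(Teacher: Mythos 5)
Your proposal is correct and follows exactly the route the paper intends: the paper states the corollary as an "immediate consequence" of Theorem \ref{thm:EquSINRopt}, with the footnote invoking Theorem \ref{thm:WFST} for the rank bound, and your write-up simply fills in the Pareto-to-Pareto bookkeeping (order preservation of the map, losslessness/suboptimality of the stream decomposition, downward closedness and compactness of the regions) that the authors leave implicit. No substantive difference in approach.
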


\textcolor{black}{Therefore, the problem }with rate constraints\textcolor{black}{{}
$\left[\mathcal{I}_{l}^{0}\right]_{l=1,...,L}$ can be equivalently
solved through the problem with} SINR constraints\textcolor{black}{{}
$\left[\gamma_{l,m}=e^{\mathcal{I}_{l}/M_{l}}-1\right]_{m=1,...,M_{l},l=1,...,L}$. }

\textcolor{black}{The following theorem shows that uniform power allocation
across the streams within a link will also not lose optimality, which
is useful in designing algorithms for individual power constraints
and/or distributed optimization \cite{LiuAn_globecom09_IFCduality,AnLiu_Allerton09_Duality}.}
\begin{thm}
\textcolor{black}{\label{thm:equpoweropt}For any input covariance
matrix $\mathbf{\Sigma}$, there exists a decomposition $\mathbf{\Sigma}=\sum_{m=1}^{M}p_{m}\mathbf{t}_{m}\mathbf{t}_{m}^{\dagger}$
such that the transmit power is uniformly allocated over the $M$
streams, i.e., $p_{m}=\textrm{Tr}\left(\mathbf{\Sigma}\right)/M,\forall m$.
Therefore uniform power allocation over the streams of the same link
will not lose optimality. }
\end{thm}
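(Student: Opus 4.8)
The plan is to exhibit an explicit unitary rotation that equalizes the diagonal entries of the power matrix appearing in the eigen-decomposition of $\mathbf{\Sigma}$. Start from the eigen-decomposition $\mathbf{\Sigma} = \mathbf{U}\mathbf{\Lambda}\mathbf{U}^{\dagger}$, where $\mathbf{U}\in\mathbb{C}^{L_T\times N}$ has orthonormal columns, $N=\mathrm{Rank}(\mathbf{\Sigma})$, and $\mathbf{\Lambda}=\mathrm{diag}(\lambda_1,\dots,\lambda_N)$ with $\lambda_m>0$. Choose $M = N$ (or, if desired for a prescribed larger $M$, append $M-N$ zero eigenvalues; the argument is unchanged). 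Writing $\mathbf{\Sigma} = \dot{\mathbf{T}}\dot{\mathbf{T}}^{\dagger}$ with $\dot{\mathbf{T}} = \mathbf{U}\mathbf{\Lambda}^{1/2}$, recall from Definition \ref{def:Decomposition} that any $\dot{\mathbf{T}}' = \dot{\mathbf{T}}\mathbf{V}$ with unitary $\mathbf{V}\in\mathbb{C}^{M\times M}$ yields the same covariance matrix. The $m$th transmit power of the rotated decomposition is $p_m' = \left(\dot{\mathbf{T}}'^{\dagger}\dot{\mathbf{T}}'\right)_{m,m} = \left(\mathbf{V}^{\dagger}\mathbf{\Lambda}\mathbf{V}\right)_{m,m}$, and the corresponding unit-norm transmit vector $\mathbf{t}_m'$ is the normalization of the $m$th column of $\dot{\mathbf{T}}'$ (which is nonzero precisely because $\sum_m p_m' = \mathrm{Tr}(\mathbf{\Sigma}) > 0$ forces each equalized value to be positive). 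So the theorem reduces to: there exists a unitary $\mathbf{V}$ such that the matrix $\mathbf{V}^{\dagger}\mathbf{\Lambda}\mathbf{V}$ has all diagonal entries equal to $\mathrm{Tr}(\mathbf{\Lambda})/M = \mathrm{Tr}(\mathbf{\Sigma})/M$.

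The key step is the construction of this equalizing unitary. This is a standard fact — a Hermitian (here diagonal, positive) matrix is unitarily similar to a matrix with constant diagonal equal to the average of its eigenvalues — but I would give the clean constructive proof via the discrete Fourier transform. Take $\mathbf{V}$ to be the $M\times M$ DFT matrix, $\mathbf{V}_{j,k} = \frac{1}{\sqrt{M}}\,e^{\,\mathrm{i}\,2\pi (j-1)(k-1)/M}$, which is unitary. Then $\left(\mathbf{V}^{\dagger}\mathbf{\Lambda}\mathbf{V}\right)_{k,k} = \sum_{j=1}^{M}\left|\mathbf{V}_{j,k}\right|^2\lambda_j = \frac{1}{M}\sum_{j=1}^{M}\lambda_j$, independent of $k$. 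Hence every diagonal entry equals $\mathrm{Tr}(\mathbf{\Sigma})/M$, which is exactly the claimed uniform power allocation $p_m = \mathrm{Tr}(\mathbf{\Sigma})/M$ for all $m$.

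To finish, set $\mathbf{t}_m = \mathbf{t}_m'$ and $p_m = \mathrm{Tr}(\mathbf{\Sigma})/M$; by construction $\mathbf{\Sigma} = \dot{\mathbf{T}}'\dot{\mathbf{T}}'^{\dagger} = \sum_{m=1}^{M} p_m \mathbf{t}_m\mathbf{t}_m^{\dagger}$ with $\|\mathbf{t}_m\|=1$, so this is a valid decomposition in the sense of Definition \ref{def:Decomposition} with uniform power. Since by Theorem \ref{thm:EquSINRopt} (or more basically, since the decomposition with MMSE-SIC reception is information lossless, \cite{Varanasi_Asilomar97_MMSE_is_optimal}) the achievable link rate depends only on $\mathbf{\Sigma}$ and not on which decomposition is used, replacing any decomposition by the uniform-power one preserves all the rates, hence loses no optimality. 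I do not expect any genuine obstacle here; the only point needing a word of care is handling the degenerate columns when $M > \mathrm{Rank}(\mathbf{\Sigma})$, which is dispatched by noting $\mathrm{Tr}(\mathbf{\Sigma})/M > 0$ so that, after the DFT rotation, no column of $\dot{\mathbf{T}}'$ vanishes and every $\mathbf{t}_m$ is well defined.
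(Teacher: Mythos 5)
Your proof is correct and follows essentially the same route as the paper's: both exhibit the equalizing rotation explicitly via the DFT matrix, whose constant-modulus entries force every diagonal entry of $\mathbf{V}^{\dagger}\mathbf{\Lambda}\mathbf{V}$ to equal the average eigenvalue $\textrm{Tr}\left(\mathbf{\Sigma}\right)/M$. The only cosmetic difference is bookkeeping: the paper pads or truncates the DFT matrix to an $L_{T}\times M$ matrix $\mathbf{F}_{0}$ and sets $\dot{\mathbf{T}}=\mathbf{U}\mathbf{D}^{1/2}\mathbf{F}_{0}$, while you zero-pad the eigenvalue matrix and right-multiply by the full $M\times M$ DFT, which is the same construction.
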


The proof is given in Appendix \ref{sub:Proof-for-Theorem-Equpow}.

\subsection{\textcolor{black}{SINR based Algorithms}\textup{\textcolor{black}{\label{sub:Algorithms-to-Solve}}}}

\textcolor{black}{The results in Section \ref{sec:Rate-SINR} serve
as a bridge to solve the }\textbf{\textcolor{black}{FOP}}\textcolor{black}{{}
or }\textbf{\textcolor{black}{SPMP}}\textcolor{black}{{} under rate
constraints through the  SINR optimization problems. First we show
}\textbf{\textcolor{black}{FOP}}\textcolor{black}{{} is equivalent to
the following SINR optimization problem in the sense of feasibility.\begin{align}
\textrm{\textbf{EFOP}}:\underset{\left\{ \mathbf{T},\mathbf{R},\mathbf{p}\right\} }{\textrm{max}} & \min_{\begin{array}{c}
1\leq m\leq M_{l}\\
1\leq l\leq L\end{array}}\frac{\gamma_{l,m}}{\gamma_{l}^{0}},\:\textrm{s.t.}\left\Vert \mathbf{p}\right\Vert _{1}\leq P_{T}\label{eq:P1SINR}\end{align}
where $M_{l}=\textrm{Rank}\left(\mathbf{H}_{l,l}\right)$ is the number
of streams of link $l$; $\gamma_{l}^{0}=e^{\mathcal{I}_{l}^{0}/M_{l}}-1$
is the target SINR for the streams of link $l$.}
\begin{thm}
\textcolor{black}{\label{thm:P1equ}The optimum of }\textbf{\textcolor{black}{FOP}}\textcolor{black}{{}
(\ref{eq:P1}) is not less than 1 if and only if the optimum of }\textbf{\textcolor{black}{EFOP}}\textcolor{black}{{}
(\ref{eq:P1SINR}) is not less than 1.}\end{thm}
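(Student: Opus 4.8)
The plan is to prove the two implications separately by passing between the rate domain and the SINR domain, using the decomposition of a MIMO link into SISO streams (Definition~\ref{def:Decomposition}), the information-losslessness of MMSE-SIC \cite{Varanasi_Asilomar97_MMSE_is_optimal}, and the equal-SINR decomposition of Theorem~\ref{thm:EquSINRopt} together with its Corollary~\ref{cor:Rate-SINRP}. First I would note that both optima are attained: the feasible set $\{\mathbf{\Sigma}_{1:L}:\mathbf{\Sigma}_l\succeq0,\ \sum_{l=1}^{L}\mathrm{Tr}(\mathbf{\Sigma}_l)\le P_T\}$ is compact and the $\mathcal{I}_l$'s are continuous, and in \textbf{EFOP} the inner maximization over $\mathbf{R}$ is solved by MMSE, leaving a maximization over the compact set of unit transmit vectors and powers with $\|\mathbf{p}\|_1\le P_T$. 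Hence ``\textbf{FOP} optimum $\ge1$'' is equivalent to the existence of $\mathbf{\Sigma}_{1:L}$ with $\sum_l\mathrm{Tr}(\mathbf{\Sigma}_l)\le P_T$ and $\mathcal{I}_l(\mathbf{\Sigma}_{1:L},\mathbf{\Phi})\ge\mathcal{I}_l^0$ for all $l$, i.e., $[\mathcal{I}_l^0]_{l}\in\mathcal{R}_{\mathbf{\Phi}}(P_T)$; and ``\textbf{EFOP} optimum $\ge1$'' is equivalent to the existence of $\{\mathbf{T},\mathbf{R},\mathbf{p}\}$ with $\|\mathbf{p}\|_1\le P_T$ and $\gamma_{l,m}\ge\gamma_l^0$ for all $m,l$.

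For the ``if'' direction, take such a $\{\mathbf{T},\mathbf{R},\mathbf{p}\}$ and set $\mathbf{\Sigma}_l=\sum_{m=1}^{M_l}p_{l,m}\mathbf{t}_{l,m}\mathbf{t}_{l,m}^\dagger$, so that $\sum_l\mathrm{Tr}(\mathbf{\Sigma}_l)=\|\mathbf{p}\|_1\le P_T$. With the intra-link decoding order $1,\dots,M_l$, stream $(l,m)$ is decodable at rate $\log(1+\gamma_{l,m})$ after cancelling streams $1,\dots,m-1$ and treating streams $m+1,\dots,M_l$ together with the coupled inter-link signals as noise; replacing $\mathbf{R}$ by the MMSE-SIC vectors only raises each stage's rate, and the resulting per-link sum equals $\mathcal{I}_l(\mathbf{\Sigma}_{1:L},\mathbf{\Phi})$. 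Hence $\mathcal{I}_l(\mathbf{\Sigma}_{1:L},\mathbf{\Phi})\ge\sum_{m=1}^{M_l}\log(1+\gamma_{l,m})\ge M_l\log(1+\gamma_l^0)=\mathcal{I}_l^0$, using $1+\gamma_l^0=e^{\mathcal{I}_l^0/M_l}$, so $\mathbf{\Sigma}_{1:L}$ makes the \textbf{FOP} objective $\ge1$.

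For the ``only if'' direction, suppose $[\mathcal{I}_l^0]_l\in\mathcal{R}_{\mathbf{\Phi}}(P_T)$. Since $\mathcal{R}_{\mathbf{\Phi}}(P_T)$ is compact, there is a Pareto rate point $\mathbf{r}^\star\in\mathcal{R}_{\mathbf{\Phi}}(P_T)$ with $r_l^\star\ge\mathcal{I}_l^0$ for all $l$. By Theorem~\ref{thm:WFST} (the rank observation used in the footnote to Corollary~\ref{cor:Rate-SINRP}), $\mathbf{r}^\star$ is attained by covariance matrices with $\mathrm{Rank}(\mathbf{\Sigma}_l)\le\mathrm{Rank}(\mathbf{H}_{l,l})=M_l$, so Theorem~\ref{thm:EquSINRopt} applies with exactly $M_l$ streams and yields a transmission/MMSE-SIC strategy $\{\mathbf{T},\mathbf{R},\mathbf{p}\}$ with $\gamma_{l,m}=e^{r_l^\star/M_l}-1\ge e^{\mathcal{I}_l^0/M_l}-1=\gamma_l^0$ and $\|\mathbf{p}\|_1=\sum_l\mathrm{Tr}(\mathbf{\Sigma}_l)\le P_T$; this strategy makes the \textbf{EFOP} objective $\ge1$.

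The main obstacle is the bookkeeping in the ``only if'' direction: \textbf{EFOP} fixes exactly $M_l=\mathrm{Rank}(\mathbf{H}_{l,l})$ streams per link, while Theorem~\ref{thm:EquSINRopt} only furnishes a decomposition with some $M_l\ge\mathrm{Rank}(\mathbf{\Sigma}_l)$; reconciling the two is precisely what forces the passage to a Pareto rate point and the use of the rank bound in Theorem~\ref{thm:WFST} (including the need to know $\mathcal{R}_{\mathbf{\Phi}}(P_T)$ is compact so that a dominating Pareto point exists). A smaller point to handle carefully is the inequality $\sum_{m}\log(1+\gamma_{l,m})\le\mathcal{I}_l$ for an arbitrary receive-vector bank in the ``if'' direction, which holds because the MMSE-SIC receivers are stagewise optimal and their per-link sum rate equals the mutual information $\mathcal{I}_l$.
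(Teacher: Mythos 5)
Your proof is correct and follows essentially the same route as the paper: the forward direction via MMSE-SIC information-losslessness (so that $\mathcal{I}_l \ge \sum_m \log(1+\gamma_{l,m}) \ge M_l\log(1+\gamma_l^0)=\mathcal{I}_l^0$), and the converse via the equal-SINR decomposition of Theorem~\ref{thm:EquSINRopt} and Corollary~\ref{cor:Rate-SINRP}. The paper compresses both directions into a one-line appeal to Corollary~\ref{cor:Rate-SINRP} plus ``the only-if part is similar''; your version merely makes explicit the rank bookkeeping ($\mathrm{Rank}(\mathbf{\Sigma}_l)\le\mathrm{Rank}(\mathbf{H}_{l,l})$ via a Pareto point and Theorem~\ref{thm:WFST}) that the paper relegates to the footnote of the corollary.
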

\begin{proof}
\textcolor{black}{If the optimum of }\textbf{\textcolor{black}{EFOP}}\textcolor{black}{{}
is not less than 1, there exists a point $\left[\gamma_{l,m}\geq\gamma_{l}^{0}\right]_{m=1,...,M_{l},l=1,...,L}$
in $\mathcal{T}_{\mathbf{\Phi}}\left(P_{T}\right)$. Then it follows
from Corollary \ref{cor:Rate-SINRP} that the rate point $\left[\mathcal{I}_{l}=M_{l}\textrm{log}\left(1+\gamma_{l,m}\right)\geq\mathcal{I}_{l}^{0}\right]_{l=1,...,L}$
lies in $\mathcal{R}_{\mathbf{\Phi}}\left(P_{T}\right)$, i.e., the
optimum of }\textbf{\textcolor{black}{FOP }}\textcolor{black}{is not
less than 1. The 'only if' part can be proved similarly.}
\end{proof}

\begin{remrk}
If \textcolor{black}{the target rates $\left(\mathcal{I}_{1}^{0},...,\mathcal{I}_{L}^{0}\right)$
is not a Pareto point but feasible, the solution of }\textbf{\textcolor{black}{EFOP}}\textcolor{black}{{}
may produce a Pareto rate point that is not a solution of the }\textbf{\textcolor{black}{FOP}}\textcolor{black}{.}\textbf{\textcolor{black}{{}
}}\textcolor{black}{However, both will exceed the target rates.}
\end{remrk}

\textcolor{black}{Similarly, }\textbf{\textcolor{black}{SPMP}}\textcolor{black}{{}
is equivalent to the following SINR optimization problem\begin{equation}
\textrm{\textbf{ESPMP}}:\underset{\left\{ \mathbf{T},\mathbf{R},\mathbf{p}\right\} }{\textrm{min}}\left\Vert \mathbf{p}\right\Vert _{1},\:\textrm{s.t.}\:\gamma_{l,m}\geq\gamma_{l}^{0},\begin{array}{c}
1\leq m\leq M_{l},\\
1\leq l\leq L.\end{array}\label{eq:P2SINR}\end{equation}
}
\begin{thm}
\textcolor{black}{\label{thm:P2equ}If $\left\{ \tilde{\mathbf{T}},\tilde{\mathbf{R}},\tilde{\mathbf{p}}\right\} $
is an optimum of} \textbf{ESPMP}\textcolor{black}{{} (\ref{eq:P2SINR}),
the input covariance matrices $\tilde{\mathbf{\Sigma}}_{1:L}$ produced
by $\tilde{\mathbf{T}}$ and $\tilde{\mathbf{p}}$ must be an optimum
of }\textbf{\textcolor{black}{SPMP}}\textcolor{black}{{} (\ref{eq:P2}).
On the other hand, if $\tilde{\mathbf{\Sigma}}_{1:L}$ is an optimum
of }\textbf{\textcolor{black}{SPMP}}\textcolor{black}{, there exists
a decomposition leading to $\left\{ \tilde{\mathbf{T}},\tilde{\mathbf{R}},\tilde{\mathbf{p}}\right\} $,
which is an optimum of} \textbf{ESPMP}.\end{thm}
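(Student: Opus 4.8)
The plan is to reduce Theorem~\ref{thm:P2equ} to a power-preserving correspondence between the feasible sets of \textbf{SPMP} and \textbf{ESPMP}; once this correspondence is in place, both assertions follow from a one-line contradiction argument. Throughout I may assume without loss of generality that the receive vectors $\mathbf{R}$ in an \textbf{ESPMP} strategy are the MMSE-SIC receivers determined by $\mathbf{T}$ and $\mathbf{p}$, since replacing $\mathbf{R}$ by MMSE-SIC can only raise every $\gamma_{l,m}$ and therefore preserves \textbf{ESPMP}-feasibility and the objective value.

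First I would establish two facts, with $M_{l}=\textrm{Rank}\left(\mathbf{H}_{l,l}\right)$ as in \textbf{ESPMP}. (i) If $\left\{ \mathbf{T},\mathbf{R},\mathbf{p}\right\}$ is feasible for \textbf{ESPMP}, then the covariances $\mathbf{\Sigma}_{l}=\sum_{m=1}^{M_{l}}p_{l,m}\mathbf{t}_{l,m}\mathbf{t}_{l,m}^{\dagger}$ are feasible for \textbf{SPMP} with the same total power $\sum_{l}\textrm{Tr}\left(\mathbf{\Sigma}_{l}\right)=\left\Vert \mathbf{p}\right\Vert_{1}$, because by the information-losslessness of the MMSE-SIC decomposition $\mathcal{I}_{l}\left(\mathbf{\Sigma}_{1:L},\mathbf{\Phi}\right)=\sum_{m=1}^{M_{l}}\log\left(1+\gamma_{l,m}\right)\ge M_{l}\log\left(1+\gamma_{l}^{0}\right)=\mathcal{I}_{l}^{0}$. (ii) Conversely, if $\mathbf{\Sigma}_{1:L}$ is feasible for \textbf{SPMP}, then Theorem~\ref{thm:EquSINRopt}, applied exactly as in Corollary~\ref{cor:Rate-SINRP} with $M_{l}=\textrm{Rank}\left(\mathbf{H}_{l,l}\right)$, produces a decomposition and MMSE-SIC strategy $\left\{ \mathbf{T},\mathbf{R},\mathbf{p}\right\}$ with $\gamma_{l,m}=e^{\mathcal{I}_{l}/M_{l}}-1\ge e^{\mathcal{I}_{l}^{0}/M_{l}}-1=\gamma_{l}^{0}$, i.e., feasible for \textbf{ESPMP}, again with $\left\Vert \mathbf{p}\right\Vert_{1}=\sum_{l}\textrm{Tr}\left(\mathbf{\Sigma}_{l}\right)$. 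In particular (i)--(ii) already imply that the optimal values of \textbf{SPMP} and \textbf{ESPMP} coincide.

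Given (i)--(ii) the theorem follows. For the first direction, let $\left\{ \tilde{\mathbf{T}},\tilde{\mathbf{R}},\tilde{\mathbf{p}}\right\}$ be an \textbf{ESPMP} optimum; by (i) the induced $\tilde{\mathbf{\Sigma}}_{1:L}$ is \textbf{SPMP}-feasible with power $\left\Vert \tilde{\mathbf{p}}\right\Vert_{1}$, and if some $\mathbf{\Sigma}_{1:L}$ were \textbf{SPMP}-feasible with strictly smaller power, then (ii) would yield an \textbf{ESPMP}-feasible strategy with strictly smaller $\ell_{1}$-power, contradicting optimality; hence $\tilde{\mathbf{\Sigma}}_{1:L}$ is \textbf{SPMP}-optimal. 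The converse is symmetric: from an \textbf{SPMP} optimum $\tilde{\mathbf{\Sigma}}_{1:L}$, fact (ii) produces an \textbf{ESPMP}-feasible $\left\{ \tilde{\mathbf{T}},\tilde{\mathbf{R}},\tilde{\mathbf{p}}\right\}$ of the same power, and a cheaper \textbf{ESPMP} point would, via (i), contradict \textbf{SPMP}-optimality.

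The step needing care is the use of $M_{l}=\textrm{Rank}\left(\mathbf{H}_{l,l}\right)$ in fact (ii), since Theorem~\ref{thm:EquSINRopt} only supplies the equal-SINR decomposition when $M_{l}\ge\textrm{Rank}\left(\mathbf{\Sigma}_{l}\right)$. I would dispose of this by observing that any part of $\mathbf{\Sigma}_{l}$ supported on $\textrm{null}\left(\mathbf{H}_{l,l}\right)$ contributes nothing to $\mathcal{I}_{l}$ while only consuming power and possibly increasing the interference seen by other links; deleting it keeps every rate at least as large and does not increase the total power, so we may assume $\textrm{Rank}\left(\mathbf{\Sigma}_{l}\right)\le\textrm{Rank}\left(\mathbf{H}_{l,l}\right)$ for every \textbf{SPMP}-feasible point used above. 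For the optimum itself one can argue equivalently that an \textbf{SPMP} optimum is a Pareto rate point of $\mathcal{R}_{\mathbf{\Phi}}\left(P_{T}\right)$ at the optimal $P_{T}$ --- otherwise an over-satisfied link could be scaled down to save power --- so the rank bound follows from the polite water-filling structure of Theorem~\ref{thm:WFST}. A minor convenience, not strictly required, is that all rate constraints are tight at an \textbf{SPMP} optimum (same scaling argument), which makes the induced stream SINRs exactly $\gamma_{l}^{0}$ rather than merely $\ge\gamma_{l}^{0}$.
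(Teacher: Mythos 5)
Your proof is correct and follows essentially the same route as the paper's: both directions reduce, via Theorem \ref{thm:EquSINRopt} (equivalently Corollary \ref{cor:Rate-SINRP}) and the information-losslessness of the MMSE-SIC decomposition, to a power-preserving correspondence between feasible points of \textbf{SPMP} and \textbf{ESPMP}, after which a one-line contradiction finishes both claims. Your explicit treatment of the rank condition $M_{l}\ge\textrm{Rank}\left(\mathbf{\Sigma}_{l}\right)$ needed to invoke Theorem \ref{thm:EquSINRopt} is a detail the paper relegates to the footnote of Corollary \ref{cor:Rate-SINRP}; of your two fixes, the Pareto-point/Theorem \ref{thm:WFST} argument applied to the optimum is the cleaner one, since the ``delete the null-space part'' step, to actually reduce the rank, needs a Schur-complement decomposition rather than a naive projection.
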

\begin{proof}
\textcolor{black}{If $\tilde{\mathbf{\Sigma}}_{1:L}$ produced by
$\tilde{\mathbf{T}}$ and $\tilde{\mathbf{p}}$ is not an optimum
of }\textbf{\textcolor{black}{SPMP}}\textcolor{black}{, there exists
a solution $\mathbf{\Sigma}_{1:L}^{'}$ such that $\left[\mathcal{I}_{l}^{0}\right]_{l=1,...,L}$
can be achieved by a smaller total transmit power. Then it follows
from Theorem \ref{thm:EquSINRopt} that there exists a decomposition
of $\mathbf{\Sigma}_{1:L}^{'}$ such that the target SINRs $\left[\gamma_{l,m}^{0}=e^{\mathcal{I}_{l}^{0}/M_{l}}-1\right]_{m=1,...,M_{l},l=1,...,L}$
are achieved by the corresponding $\left\{ \mathbf{T}^{'},\mathbf{R}^{'},\mathbf{p}^{'}\right\} $
with $\left\Vert \mathbf{p}^{'}\right\Vert _{1}<\left\Vert \mathbf{\tilde{p}}\right\Vert _{1}$,
which contradicts with the optimality of $\left\{ \tilde{\mathbf{T}},\tilde{\mathbf{R}},\tilde{\mathbf{p}}\right\} $.
The reverse part can be proved similarly.}\end{proof}
\begin{remrk}
\textcolor{black}{It is possible that neither }\textbf{\textcolor{black}{SPMP}}\textcolor{black}{{}
nor }\textbf{\textcolor{black}{ESPMP}}\textcolor{black}{{} is feasible,
i.e., the target rates or SINRs can not be achieved even with infinite
power. In this paper, we only consider feasible problems. In practice,
we can avoid solving the infeasible }\textbf{\textcolor{black}{SPMP}}\textcolor{black}{{}
and }\textbf{\textcolor{black}{ESPMP}}\textcolor{black}{{} by solving
the }\textbf{\textcolor{black}{FOP}}\textcolor{black}{{} with the maximum
allowable sum power to test the feasibility first.}
\end{remrk}

\textcolor{black}{By converting the }\textbf{\textcolor{black}{FOP}}\textcolor{black}{{}
and }\textbf{\textcolor{black}{SPMP}}\textcolor{black}{{} to the simpler
}\textbf{\textcolor{black}{EFOP}}\textcolor{black}{{} and }\textbf{\textcolor{black}{ESPMP}}\textcolor{black}{,
these problems can be efficiently solved by the SINR duality based
algorithms for MIMO beamforming networks \cite{Martin_ITV_04_BFdual,Rao_TOC07_netduality},
which is summarized below. For fixed $\mathbf{T},\mathbf{p}$, the
optimal receive vector for each stream is decoupled and is given by
the MMSE-SIC receiver in (\ref{eq:MMSErev1G}). The SINR duality in
Lemma \ref{lem:lem1G} implies that the coupled problem of optimizing
the transmit vectors can be found by optimizing the receive vectors
in the reverse links, i.e., for fixed $\mathbf{R},\mathbf{q}$, the
optimal receive vector for each stream in the reverse links is given
by the MMSE-SIC receiver\begin{align}
\mathbf{t}_{l,m} & =\beta_{l,m}\left(\sum_{i=1}^{m-1}q_{l,i}\mathbf{H}_{l,l}^{\dagger}\mathbf{r}_{l,i}\mathbf{r}_{l,i}^{\dagger}\mathbf{H}_{l,l}+\hat{\mathbf{\Omega}}_{l}\right)^{-1}\mathbf{H}_{l,l}^{\dagger}\mathbf{r}_{l,m}\label{eq:TransmitOpt}\end{align}
where $\hat{\mathbf{\Omega}}_{l}$ is obtained from $\hat{\mathbf{\Sigma}}_{k}=\sum_{i=1}^{M_{k}}q_{k,i}\mathbf{r}_{k,i}\mathbf{r}_{k,i}^{\dagger},\ k\ne l$
using (\ref{eq:WhiteMRV}), and $\beta_{l,m}$ is chosen such that
$\left\Vert \mathbf{t}_{l,m}\right\Vert =1$.}

\textcolor{black}{The optimization for $\mathbf{p}$ is different
for the two problems. The algorithm designed in \cite{Martin_ITV_04_BFdual}
for }\textbf{\textcolor{black}{EFOP}}\textcolor{black}{{} is described
below. For fixed $\mathbf{T},\mathbf{R}$, define the following extended
coupling matrices \cite{Martin_ITV_04_BFdual} \begin{eqnarray}
\mathbf{\Upsilon} & = & \left[\begin{array}{cc}
\mathbf{D}\mathbf{\Psi} & \mathbf{D}\mathbf{1}\\
\frac{1}{P_{T}}\mathbf{1}^{T}\mathbf{D}\mathbf{\Psi} & \frac{1}{P_{T}}\mathbf{1}^{T}\mathbf{D}\mathbf{1}\end{array}\right],\label{eq:ExtendcpMG1}\end{eqnarray}
\begin{eqnarray}
\mathbf{\mathbf{\Lambda}} & = & \left[\begin{array}{cc}
\mathbf{D}\mathbf{\Psi}{}^{T} & \mathbf{D}\mathbf{1}\\
\frac{1}{P_{T}}\mathbf{1}^{T}\mathbf{D}\mathbf{\Psi}{}^{T} & \frac{1}{P_{T}}\mathbf{1}^{T}\mathbf{D}\mathbf{1}\end{array}\right],\label{eq:ExtendcpMG2}\end{eqnarray}
where $\mathbf{\Psi}$ and $\mathbf{D}$ are defined in (\ref{eq:faiG})
and (\ref{eq:DG}) respectively; and the SINR values in $\mathbf{D}$
are fixed as $\gamma_{l,m}^{0}=\gamma_{l}^{0},m=1,...,M_{l},l=1,...,L$,
where $\gamma_{l}^{0}$'s are the target SINRs in }\textbf{\textcolor{black}{EFOP}}\textcolor{black}{.
With the optimal power $\mathbf{\tilde{p}}$, all the scaled SINRs
in (\ref{eq:P1SINR}) should be equal to the same value denoted as
$C_{\textrm{max}}$ \cite{Martin_ITV_04_BFdual}. Therefore $\mathbf{\tilde{p}}$
satisfies the equations $\gamma_{l,m}=C_{\textrm{max}}\gamma_{l}^{0},m=1,...,M_{l},l=1,...,L$
and $\left\Vert \tilde{\mathbf{p}}\right\Vert _{1}=P_{T}$, which
together form the following eigensystem \cite{Martin_ITV_04_BFdual}}%
{}\textcolor{black}{\begin{eqnarray}
\mathbf{\mathbf{\Upsilon}}\mathbf{p}_{\textrm{ext}} & = & \lambda_{\textrm{max}}\mathbf{p}_{\textrm{ext}},\label{eq:eigenP2}\end{eqnarray}
where $\mathbf{p}_{\textrm{ext}}=\left[\mathbf{\tilde{p}}^{T},1\right]^{T}$
is the dominant eigenvector of $\mathbf{\mathbf{\Upsilon}}$ with
its last component scaled to one; $\lambda_{\textrm{max}}=1/C_{\textrm{max}}$
is the corresponding maximum eigenvalue. It was proved in \cite{Wyang_ICASSP_98_singleSolNNM}
that for a non-negative matrix $\mathbf{\Upsilon}$ with the special
structure (\ref{eq:ExtendcpMG1}), the maximal eigenvalue and its
associated eigenvector are strictly positive, and no other eigenvalue
fulfills the positivity requirement. Therefore, the last component
of $\mathbf{p}_{\textrm{ext}}$ can always be scaled to one and the
resulting $\mathbf{\tilde{p}}$ is a valid power vector. Similarly,
in the reverse links, the optimal power $\mathbf{\tilde{q}}$ is obtained
by solving the eigensystem below \cite{Martin_ITV_04_BFdual}\begin{equation}
\mathbf{\Lambda}\mathbf{q}_{\textrm{ext}}=\hat{\lambda}_{\textrm{max}}\mathbf{q}_{\textrm{ext}},\label{eq:eigenP1}\end{equation}
where $\mathbf{q}_{\textrm{ext}}=\left[\mathbf{\tilde{q}}^{T},1\right]^{T}$
is the dominant eigenvector of $\mathbf{\Lambda}$; and $\hat{\lambda}_{\textrm{max}}$
is the corresponding maximum eigenvalue.}

\textcolor{black}{For }\textbf{\textcolor{black}{ESPMP}}\textcolor{black}{,
we use the algorithm in \cite{Rao_TOC07_netduality}. Let $\gamma_{l,m}^{(n)}$
and $\hat{\gamma}_{l,m}^{(n)}$ respectively be the SINR for the $m^{\text{th}}$
stream of the forward and reverse link $l$ after the $n^{\text{th}}$
update. }%
{}\textcolor{black}{To satisfy the SINR constraints, the standard power
control is used to update $\mathbf{p}$ and $\mathbf{q}$ iteratively,
where in each iteration, the power of the stream with over-satisfied
(unsatisfied) SINR is reduced (increased): \begin{eqnarray}
p_{l,m}^{(n+1)} & = & \frac{\gamma_{l}^{0}}{\gamma_{l,m}^{(n)}}p_{l,m}^{(n)},\label{eq:Upppower}\\
q_{l,m}^{(n+1)} & = & \frac{\gamma_{l}^{0}}{\hat{\gamma}_{l,m}^{(n)}}q_{l,m}^{(n)}.\label{eq:Upqpower}\end{eqnarray}
For convenience, we rewrite (\ref{eq:Upppower}) and (\ref{eq:Upqpower})
into vector functions\begin{eqnarray*}
\mathbf{p}^{(n+1)} & = & \mathbf{I}_{P}\left(\mathbf{p}^{(n)}\right),\\
\mathbf{q}^{(n+1)} & = & \mathbf{I}_{D}\left(\mathbf{q}^{(n)}\right).\end{eqnarray*}
}

\begin{table}
\caption{\label{tab:table1}Algorithm A (Solving\textbf{ EFOP})}

\centering{}\begin{tabular}{l}
\hline
\textbf{\small Choose}{\small{} $\mathbf{p}^{(0)}>0$ and $\mathbf{T}^{(0)}$
such that all streams have positive SINR.}\tabularnewline
\textbf{\small Set}{\small{} $n\leftarrow0$.}\tabularnewline
\textbf{\small While}{\small{} not converge }\textbf{\small do}{\small{} }\tabularnewline
{\small $\;$1. Update in the forward links }\tabularnewline
{\small $\;$$\;$a. Compute $\mathbf{R}^{(n+1)}$ from $\mathbf{p}^{(n)}$
and $\mathbf{T}^{(n)}$ using (\ref{eq:MMSErev1G}); }\tabularnewline
{\small $\;$$\;$b. Compute $\mathbf{\Lambda}$ from $\mathbf{T}^{(n)}$
and $\mathbf{R}^{(n+1)}$; }\tabularnewline
{\small $\;$$\;$c. Solve eigensystem (\ref{eq:eigenP1}) for $\mathbf{q}^{(n+1)}$; }\tabularnewline
{\small $\;$2. Update in the reverse links }\tabularnewline
{\small $\;$$\;$a. Compute $\mathbf{T}^{(n+1)}$ from $\mathbf{q}^{(n+1)}$
and $\mathbf{R}^{(n+1)}$ using (\ref{eq:TransmitOpt}); }\tabularnewline
{\small $\;$$\;$b. Compute $\mathbf{\mathbf{\Upsilon}}$ from $\mathbf{T}^{(n+1)}$
and $\mathbf{R}^{(n+1)}$;}\tabularnewline
{\small $\;$$\;$c. Solve eigensystem (\ref{eq:eigenP2}) for $\mathbf{p}^{(n+1)}$; }\tabularnewline
{\small $\;$$n\leftarrow n+1$; }\tabularnewline
\textbf{\small End}\tabularnewline
\end{tabular}
\end{table}

\begin{table}
\caption{\label{tab:table2}Algorithm B (Solving\textbf{ ESPMP})}

\centering{}\begin{tabular}{l}
\hline
\textbf{\small Choose}{\small{} $\mathbf{p}^{(0)}>0$ and $\mathbf{T}^{(0)}$
such that all streams have positive SINR. }\tabularnewline
\textbf{\small Set}{\small{} $n\leftarrow0$. }\tabularnewline
\textbf{\small While}{\small{} not converge }\textbf{\small do}{\small{} }\tabularnewline
{\small $\;$1. Update in the forward links }\tabularnewline
{\small $\;$$\;$a. Compute $\mathbf{R}^{(n+1)}$ from $\mathbf{p}^{(n)}$
and $\mathbf{T}^{(n)}$ using (\ref{eq:MMSErev1G}); }\tabularnewline
{\small $\;$$\;$b. Update power in the forward links $\mathbf{p}^{(n+1)}=\mathbf{I}_{P}\left(\mathbf{p}^{(n)}\right)$; }\tabularnewline
{\small $\;$$\;$c. Let $\left\{ \gamma_{l,m}^{0}=\gamma_{l,m}\left(\mathbf{T}^{(n)},\mathbf{R}^{(n+1)},\mathbf{p}^{(n+1)}\right),\textrm{for\:\ all}\: m,l\right\} $; }\tabularnewline
{\small $\;$$\;$d. Compute $\mathbf{D}$ and $\mathbf{\Psi}$ from
$\mathbf{T}^{(n)}$, $\mathbf{R}^{(n+1)}$ and $\left\{ \gamma_{l,m}^{0}\right\} $; }\tabularnewline
{\small $\;$$\;$e. Compute $\mathbf{q}^{(n)}=\left(\mathbf{D}^{-1}-\mathbf{\Psi}^{T}\right)^{-1}\mathbf{1}$; }\tabularnewline
{\small $\;$2. Update in the reverse links }\tabularnewline
{\small $\;$$\;$a. Compute $\mathbf{T}^{(n+1)}$ from $\mathbf{q}^{(n)}$
and $\mathbf{R}^{(n+1)}$ using (\ref{eq:TransmitOpt}); }\tabularnewline
{\small $\;$$\;$b. Update power in the forward links $\mathbf{q}^{(n+1)}=\mathbf{I}_{D}\left(\mathbf{q}^{(n)}\right)$; }\tabularnewline
{\small $\;$$\;$c. Let $\left\{ \gamma_{l,m}^{0}=\gamma_{l,m}\left(\mathbf{T}^{(n+1)},\mathbf{R}^{(n+1)},\mathbf{q}^{(n+1)}\right),\textrm{for\:\ all}\: m,l\right\} $; }\tabularnewline
{\small $\;$$\;$d. Compute $\mathbf{D}$ and $\mathbf{\Psi}$ from
$\mathbf{T}^{(n+1)}$, $\mathbf{R}^{(n+1)}$ and $\left\{ \gamma_{l,m}^{0}\right\} $; }\tabularnewline
{\small $\;$$\;$e. Compute $\mathbf{p}^{(n+1)}=\left(\mathbf{D}^{-1}-\mathbf{\Psi}\right)^{-1}\mathbf{1}$; }\tabularnewline
{\small $\;$$n\leftarrow n+1$; }\tabularnewline
\textbf{\small End}\tabularnewline
\end{tabular}
\end{table}

\textcolor{black}{The algorithms to solve }\textbf{\textcolor{black}{EFOP}}\textcolor{black}{{}
and }\textbf{\textcolor{black}{ESPMP}}\textcolor{black}{{} are summarized
in table \ref{tab:table1} and table \ref{tab:table2} respectively.
After obtaining $\mathbf{T},\mathbf{R}$ and $\mathbf{p}$, the corresponding
input covariance matrices for }\textbf{\textcolor{black}{FOP}}\textcolor{black}{{}
and }\textbf{\textcolor{black}{SPMP}}\textcolor{black}{{} can be easily
obtained}\textbf{\textcolor{black}{.}}\textcolor{black}{{} }

\textcolor{black}{The convergence of these algorithms are proved in
\cite{Martin_ITV_04_BFdual,Rao_TOC07_netduality}. It can be verified
that the objective function in }\textbf{\textcolor{black}{EFOP}}\textcolor{black}{{}
(\ref{eq:P1SINR}) is monotonically increased by Algorithm A \cite{Martin_ITV_04_BFdual}.
In Algorithm B,  once the solution becomes feasible, i.e., all SINR
values meet or exceed the minimum requirements, it generates a sequence
of feasible solutions with monotonically decreasing sum power \cite{Rao_TOC07_netduality}.
The optimality of these algorithms will be discussed in the next section.}

\subsection{\textcolor{black}{\label{sub:Optimality-Analysis}Optimality Analysis
for SINR based Algorithms}}

\textcolor{black}{}%
{}\textcolor{black}{Algorithm A or B can find good solutions but may
not find the optimum for general B-MAC networks, according to the
numeric examples in Section \ref{sec:Simulation-Results}. But we
can still obtain insight of the problem and derive improved algorithms
by finding the necessary conditions satisfied by the optimum.}

\textcolor{black}{To avoid deriving the necessary conditions with
the non-differentiable objective function of }\textbf{\textcolor{black}{FOP}}\textcolor{black}{{}
(\ref{eq:P1}), we rewrite it into the following equivalent problem\begin{eqnarray}
\textrm{\textbf{FOPa}:} & \underset{\mathbf{\Sigma}_{1:L}}{\textrm{max}} & \frac{\mathcal{I}_{1}\left(\mathbf{\Sigma}_{1:L},\mathbf{\Phi}\right)}{\mathcal{I}_{1}^{0}}\label{eq:P1a}\\
 & \textrm{s.t.} & \frac{\mathcal{I}_{l}\left(\mathbf{\Sigma}_{1:L},\mathbf{\Phi}\right)}{\mathcal{I}_{l}^{0}}\geq\frac{\mathcal{I}_{1}\left(\mathbf{\Sigma}_{1:L},\mathbf{\Phi}\right)}{\mathcal{I}_{1}^{0}},\forall l\neq1\nonumber \\
 &  & \mathbf{\Sigma}_{l}\succeq0,l=1,\cdots,L\:\textrm{and}\:\sum_{l=1}^{L}\textrm{Tr}\left(\mathbf{\Sigma}_{l}\right)\leq P_{T}.\nonumber \end{eqnarray}
Then the following theorem holds.}
\begin{thm}
\textcolor{black}{\label{thm:optimality}Necessity: If $\tilde{\mathbf{\Sigma}}_{1:L}=\left(\mathbf{\tilde{\mathbf{\Sigma}}}_{1},...,\mathbf{\tilde{\mathbf{\Sigma}}}_{L}\right)$
is an optimum of }\textbf{\textcolor{black}{FOPa}}\textcolor{black}{{}
(\ref{eq:P1a}) or }\textbf{\textcolor{black}{SPMP}}\textcolor{black}{{}
(\ref{eq:P2}), it must satisfy the optimality conditions below: }
\begin{enumerate}
\item \textcolor{black}{It possesses the polite water-filling structure
as in Definition \ref{def:DefGWF}. }
\item \textcolor{black}{The achieved rates must satisfy $\mathcal{I}_{l}\left(\tilde{\mathbf{\Sigma}}_{1:L},\mathbf{\Phi}\right)=\alpha\mathcal{I}_{l}^{0},l=1,...,L$,
where for }\textbf{\textcolor{black}{FOPa}}\textcolor{black}{, $\alpha>0$
is some constant; and for }\textbf{\textcolor{black}{SPMP}}\textcolor{black}{,
$\alpha=1$. }
\item For \textbf{FOPa}, \textcolor{black}{it satisfies $\sum_{l=1}^{L}\textrm{Tr}\left(\tilde{\mathbf{\Sigma}}_{l}\right)=P_{T}$.}
\end{enumerate}
\textcolor{black}{On the other hand, if certain $\mathbf{\tilde{\Sigma}}_{1:L}$
satisfies the above optimality conditions for }\textbf{\textcolor{black}{FOPa}}\textcolor{black}{{}
or }\textbf{\textcolor{black}{SPMP}}\textcolor{black}{, it must satisfy
the }Karush\textendash{}Kuhn\textendash{}Tucker (KKT)\textcolor{black}{{}
conditions of }\textbf{\textcolor{black}{FOPa }}\textcolor{black}{or
}\textbf{\textcolor{black}{SPMP}}\textcolor{black}{, and thus achieves
a stationary point. }

\textcolor{black}{Sufficiency: If certain $\mathbf{\tilde{\Sigma}}_{1:L}$
satisfies the above optimality conditions for }\textbf{\textcolor{black}{FOPa}}\textcolor{black}{{}
or }\textbf{\textcolor{black}{SPMP}}\textcolor{black}{{} and if the
weighted sum rate $\sum_{l}^{L}\tilde{\nu}_{l}\mathcal{I}_{l}\left(\mathbf{\Sigma}_{1:L},\mathbf{\Phi}\right)$
is a concave function of $\mathbf{\Sigma}_{1:L}$, where $\tilde{\nu}_{l}$'s
are the polite water-filling levels of $\tilde{\mathbf{\Sigma}}_{1:L}$
in (\ref{eq:WFFar}), then $\tilde{\mathbf{\Sigma}}_{1:L}$ is the
optimum of }\textbf{\textcolor{black}{FOPa}}\textcolor{black}{{} or
}\textbf{\textcolor{black}{SPMP}}\textcolor{black}{.}
\end{thm}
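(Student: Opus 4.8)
The strategy is threefold: (i) prove the three necessary conditions by perturbation arguments; (ii) show conditions 1--3 force the KKT system of \textbf{FOPa}/\textbf{SPMP} to hold; (iii) upgrade KKT to global optimality under the concavity hypothesis. Two elementary facts will be used repeatedly. \emph{Monotonicity under uniform scaling:} writing $\mathbf{A}_{l}=\mathbf{H}_{l,l}\mathbf{\Sigma}_{l}\mathbf{H}_{l,l}^{\dagger}$ and $\mathbf{B}_{l}=\mathbf{\Omega}_{l}-\mathbf{I}$, one has $\mathcal{I}_{l}(t\mathbf{\Sigma}_{1:L},\mathbf{\Phi})=\log|\mathbf{I}+t(\mathbf{A}_{l}+\mathbf{B}_{l})|-\log|\mathbf{I}+t\mathbf{B}_{l}|$, which is nondecreasing in $t\ge1$ and strictly increasing when $\mathbf{A}_{l}\ne\mathbf{0}$ (equivalently $\mathcal{I}_{l}>0$), because $\textrm{Tr}((\mathbf{I}+t\mathbf{X})^{-1}\mathbf{X})$ is monotone in $\mathbf{X}\succeq\mathbf{0}$ under the positive semidefinite order. \emph{Shrinking one link:} replacing $\mathbf{\Sigma}_{l}$ by $(1-\epsilon)\mathbf{\Sigma}_{l}$ decreases $\mathcal{I}_{l}$ continuously, does not decrease any $\mathcal{I}_{k}$ with $k\ne l$, and strictly lowers $\textrm{Tr}(\mathbf{\Sigma}_{l})$.

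\emph{Necessity.} For condition 3 (\textbf{FOPa}): if $\sum_{l}\textrm{Tr}(\tilde{\mathbf{\Sigma}}_{l})<P_{T}$, scaling all covariances up slightly strictly raises every $\mathcal{I}_{l}$ (each is positive), hence raises the \textbf{FOP} objective, a contradiction. For condition 2: if some scaled rate strictly exceeds the optimum $\alpha$, then for \textbf{FOPa} I shrink the offending $\mathbf{\Sigma}_{l}$ to free power, then rescale all covariances up to $P_{T}$; by monotonicity every rate strictly increases, so $\min_{l}\mathcal{I}_{l}/\mathcal{I}_{l}^{0}>\alpha$, a contradiction; for \textbf{SPMP}, shrinking that $\mathbf{\Sigma}_{l}$ preserves feasibility while lowering total power, so $\mathcal{I}_{l}=\mathcal{I}_{l}^{0}$ and $\alpha=1$. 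For condition 1, I show the optimal rate point is Pareto: for \textbf{SPMP} with optimal power $P^{*}$, if some $\mathbf{r}'\in\mathcal{R}_{\mathbf{\Phi}}(P^{*})$ dominated $[\mathcal{I}_{l}^{0}]$ with a strictly larger component, shrinking the corresponding covariance would yield an \textbf{SPMP}-feasible point with power $<P^{*}$; for \textbf{FOPa} the same shrink-then-rescale device shows $[\alpha\mathcal{I}_{l}^{0}]$ is Pareto in $\mathcal{R}_{\mathbf{\Phi}}(P_{T})$. Theorem \ref{thm:WFST} then yields that $\tilde{\mathbf{\Sigma}}_{1:L}$, together with its covariance transformation, possesses the polite water-filling structure.

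\emph{Conditions 1--3 $\Rightarrow$ KKT} (the heart of the proof). From $\mathcal{I}_{k}=\log|\mathbf{\Omega}_{k}+\mathbf{H}_{k,k}\mathbf{\Sigma}_{k}\mathbf{H}_{k,k}^{\dagger}|-\log|\mathbf{\Omega}_{k}|$ and $\mathbf{\Phi}_{l,l}=0$ one computes $\partial\mathcal{I}_{l}/\partial\mathbf{\Sigma}_{l}=\mathbf{H}_{l,l}^{\dagger}(\mathbf{\Omega}_{l}+\mathbf{H}_{l,l}\tilde{\mathbf{\Sigma}}_{l}\mathbf{H}_{l,l}^{\dagger})^{-1}\mathbf{H}_{l,l}$ and, for $k\ne l$, $\partial\mathcal{I}_{k}/\partial\mathbf{\Sigma}_{l}=-\mathbf{\Phi}_{k,l}\mathbf{H}_{k,l}^{\dagger}\big(\mathbf{\Omega}_{k}^{-1}-(\mathbf{\Omega}_{k}+\mathbf{H}_{k,k}\tilde{\mathbf{\Sigma}}_{k}\mathbf{H}_{k,k}^{\dagger})^{-1}\big)\mathbf{H}_{k,l}$. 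By Theorem \ref{thm:FequRGWF} the bracket equals $\hat{\mathbf{\Sigma}}_{k}/\nu_{k}$ (with $\nu_{k}>0$ since $\tilde{\mathbf{\Sigma}}_{k}\ne\mathbf{0}$), so, using $\hat{\mathbf{\Omega}}_{l}=\mathbf{I}+\sum_{k}\mathbf{\Phi}_{k,l}\mathbf{H}_{k,l}^{\dagger}\hat{\mathbf{\Sigma}}_{k}\mathbf{H}_{k,l}$,
\[
\sum_{k}\nu_{k}\frac{\partial\mathcal{I}_{k}}{\partial\mathbf{\Sigma}_{l}}=\nu_{l}\mathbf{H}_{l,l}^{\dagger}\big(\mathbf{\Omega}_{l}+\mathbf{H}_{l,l}\tilde{\mathbf{\Sigma}}_{l}\mathbf{H}_{l,l}^{\dagger}\big)^{-1}\mathbf{H}_{l,l}-\hat{\mathbf{\Omega}}_{l}+\mathbf{I}.
\]
Separately, the polite water-filling structure of $\mathbf{Q}_{l}=\hat{\mathbf{\Omega}}_{l}^{1/2}\tilde{\mathbf{\Sigma}}_{l}\hat{\mathbf{\Omega}}_{l}^{1/2}$ over $\bar{\mathbf{H}}_{l}=\mathbf{\Omega}_{l}^{-1/2}\mathbf{H}_{l,l}\hat{\mathbf{\Omega}}_{l}^{-1/2}$ with level $\nu_{l}$ is, after transporting the single-user water-filling KKT conditions through the whitening, exactly the pair $\mathbf{M}_{l}:=\hat{\mathbf{\Omega}}_{l}-\nu_{l}\mathbf{H}_{l,l}^{\dagger}(\mathbf{\Omega}_{l}+\mathbf{H}_{l,l}\tilde{\mathbf{\Sigma}}_{l}\mathbf{H}_{l,l}^{\dagger})^{-1}\mathbf{H}_{l,l}\succeq\mathbf{0}$ and $\textrm{Tr}(\mathbf{M}_{l}\tilde{\mathbf{\Sigma}}_{l})=0$. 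For \textbf{SPMP}, taking multipliers $\mu_{k}=\nu_{k}\ge0$ on the rate constraints and $\mathbf{M}_{l}$ on $\mathbf{\Sigma}_{l}\succeq\mathbf{0}$, the displayed identity gives stationarity $\mathbf{I}-\sum_{k}\mu_{k}\partial\mathcal{I}_{k}/\partial\mathbf{\Sigma}_{l}=\mathbf{M}_{l}$, and complementary slackness follows from condition 2 ($\mathcal{I}_{l}=\mathcal{I}_{l}^{0}$) and $\textrm{Tr}(\mathbf{M}_{l}\tilde{\mathbf{\Sigma}}_{l})=0$. For \textbf{FOPa}, write the Lagrangian with multipliers $\lambda_{l}\ge0$ on $\mathcal{I}_{l}/\mathcal{I}_{l}^{0}\ge\mathcal{I}_{1}/\mathcal{I}_{1}^{0}$ and $\rho\ge0$ on the power constraint; collecting rate terms yields objective weights $w_{1}=(1-\sum_{l\ne1}\lambda_{l})/\mathcal{I}_{1}^{0}$ and $w_{l}=\lambda_{l}/\mathcal{I}_{l}^{0}$. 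Choosing $\rho=1/\sum_{j}\nu_{j}\mathcal{I}_{j}^{0}>0$ and $\lambda_{l}=\rho\nu_{l}\mathcal{I}_{l}^{0}$ forces $w_{k}=\rho\nu_{k}$ for all $k$, so stationarity holds with PSD multiplier $\rho\mathbf{M}_{l}$, and complementary slackness of the ratio constraints and the power constraint follows from conditions 2 and 3. Hence $\tilde{\mathbf{\Sigma}}_{1:L}$ is a KKT, hence stationary, point.

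\emph{Sufficiency, and the hard part.} If $g(\mathbf{\Sigma}_{1:L})=\sum_{l}\tilde{\nu}_{l}\mathcal{I}_{l}(\mathbf{\Sigma}_{1:L},\mathbf{\Phi})$ is concave, consider the convex program $\max\{g:\sum_{l}\textrm{Tr}(\mathbf{\Sigma}_{l})\le P,\ \mathbf{\Sigma}_{l}\succeq\mathbf{0}\}$ with $P=P_{T}$ (\textbf{FOPa}) or $P=\sum_{l}\textrm{Tr}(\tilde{\mathbf{\Sigma}}_{l})$ (\textbf{SPMP}). The same identity and water-filling KKT, now with multiplier $1$ on the power constraint, show $\tilde{\mathbf{\Sigma}}_{1:L}$ satisfies this program's KKT conditions, and Slater's condition holds, so $\tilde{\mathbf{\Sigma}}_{1:L}$ globally maximizes $g$ on that power ball. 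A strictly better \textbf{FOPa} competitor would have every $\mathcal{I}_{l}>\alpha\mathcal{I}_{l}^{0}=\mathcal{I}_{l}(\tilde{\mathbf{\Sigma}}_{1:L})$ at power $\le P_{T}$, contradicting maximality of $g$; a strictly better \textbf{SPMP} competitor (power $<P$, all $\mathcal{I}_{l}\ge\mathcal{I}_{l}^{0}$) has $g\ge g(\tilde{\mathbf{\Sigma}}_{1:L})$ and can be scaled up to raise $g$ further, again a contradiction. The routine parts are the perturbation lemmas; the genuinely delicate step is Conditions 1--3 $\Rightarrow$ KKT, where one must recognize $\partial\mathcal{I}_{k}/\partial\mathbf{\Sigma}_{l}$ as a rescaled $\hat{\mathbf{\Sigma}}_{k}$ via Theorem \ref{thm:FequRGWF} so the weighted gradient telescopes into the $\hat{\mathbf{\Omega}}_{l}$ terms, and then identify the nonobvious dual variables ($\mu_{k}=\nu_{k}$ for \textbf{SPMP}; $\lambda_{l}=\nu_{l}\mathcal{I}_{l}^{0}/\sum_{j}\nu_{j}\mathcal{I}_{j}^{0}$ for \textbf{FOPa}), the nonsmoothness of the $\min$ objective being exactly why one must detour through \textbf{FOPa}.
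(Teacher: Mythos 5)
Your proof is correct and follows essentially the same route as the paper: the same perturbation arguments for the three necessary conditions, the same Lagrangian with the same multiplier choice ($\mu=1/\sum_{j}\tilde{\nu}_{j}\mathcal{I}_{j}^{0}$, ratio-constraint multipliers $\mu\tilde{\nu}_{l}\mathcal{I}_{l}^{0}$), the same use of Theorem \ref{thm:FequRGWF} to turn the weighted rate gradient into $\hat{\mathbf{\Omega}}_{l}$ so that stationarity reduces to the single-user polite water-filling KKT condition, and the same detour through the concave \textbf{WSRMP} for sufficiency. The only cosmetic difference is that for \textbf{SPMP} sufficiency the paper invokes an auxiliary convex power-minimization problem with the aggregated rate constraint $\sum_{l}\tilde{\nu}_{l}\mathcal{I}_{l}\geq\sum_{l}\tilde{\nu}_{l}\mathcal{I}_{l}^{0}$, whereas you reuse the weighted-sum-rate maximum on the power ball plus a scaling argument; both are equivalent.
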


\textcolor{black}{It can be proved by contradiction that the optimums
of }\textbf{\textcolor{black}{FOPa}}\textcolor{black}{{} and }\textbf{\textcolor{black}{SPMP}}\textcolor{black}{{}
are Pareto optimal. By Theorem \ref{thm:WFST}, they possess the polite
water-filling structure. The second optimality condition can be proved
by a proof similar to that of Lemma 1 in \cite{Rao_TOC07_netduality}
for }\textbf{\textcolor{black}{ESPMP}}\textcolor{black}{. The third
optimality condition can also be proved by the contradiction that
if the total transmit power is less that $P_{T}$, the extra power
can be used to improve the rates of all links simultaneously. The
connection between the necessary optimality conditions and the KKT
conditions, and the sufficiency part are proved in appendix \ref{sub:Proof-of-optcon}. }

\textcolor{black}{}%
{}\textcolor{black}{We check whether the solutions of Algorithms A and
B satisfy the optimality conditions.}%
{}\textcolor{black}{{} We use the notation $\bar{}$ for the variables
corresponding to the solution of Algorithm A or B. The following is
obvious.}
\begin{lemma}
\textcolor{black}{\label{lem:after-convergence}After the convergence
of the Algorithm A or B, the following conditions are satisfied. }
\begin{enumerate}
\item \textcolor{black}{In the forward (reverse) links, the MMSE-SIC receive
vectors corresponding to $\bar{\mathbf{T}}$ and $\bar{\mathbf{p}}$
($\bar{\mathbf{R}}$ and $\mathbf{\bar{q}}$) are given by $\mathbf{\bar{R}}$
($\mathbf{\bar{T}}$). The set of SINRs achieved by $\left\{ \mathbf{\bar{T}},\mathbf{\bar{R}},\mathbf{\bar{p}}\right\} $
in the forward links equals to that achieved by $\left\{ \mathbf{\bar{R}},\mathbf{\bar{T}},\mathbf{\bar{q}}\right\} $
in the reverse links.}
\item \textcolor{black}{For Algorithm B, the achieved rates satisfy $\mathcal{I}_{l}\left(\bar{\mathbf{\Sigma}}_{1:L},\mathbf{\Phi}\right)=\mathcal{I}_{l}^{0},l=1,...,L$.}
\item For \textcolor{black}{Algorithm }A, \textcolor{black}{$\bar{\mathbf{\Sigma}}_{1:L}$
satisfies $\sum_{l=1}^{L}\textrm{Tr}\left(\bar{\mathbf{\Sigma}}_{l}\right)=P_{T}$.}
\end{enumerate}
\end{lemma}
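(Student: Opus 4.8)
The statement is a direct verification that the limit point of the iteration satisfies the three listed conditions, so my plan is to substitute the fixed-point values into the update rules of Algorithm~A (resp.\ B) and read off each item, invoking the convergence facts of \cite{Martin_ITV_04_BFdual,Rao_TOC07_netduality} and the SINR duality of Lemma~\ref{lem:lem1G} only where an actual argument is needed. Throughout I write $\bar{\mathbf{T}},\bar{\mathbf{R}},\bar{\mathbf{p}},\bar{\mathbf{q}}$ for the common limits of the corresponding iterates, so that at a fixed point $\mathbf{T}^{(n)}=\mathbf{T}^{(n+1)}=\bar{\mathbf{T}}$, $\mathbf{R}^{(n+1)}=\bar{\mathbf{R}}$, and likewise for the power vectors.

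For item~1, the receive-vector relations are immediate from the form of the updates. Step~1a of both algorithms produces $\mathbf{R}^{(n+1)}$ from $(\mathbf{T}^{(n)},\mathbf{p}^{(n)})$ through the forward MMSE-SIC formula (\ref{eq:MMSErev1G}), which at the fixed point says that $\bar{\mathbf{R}}$ is the forward MMSE-SIC receiver for $(\bar{\mathbf{T}},\bar{\mathbf{p}})$; step~2a produces $\mathbf{T}^{(n+1)}$ from the current reverse power and $\mathbf{R}^{(n+1)}$ through the reverse MMSE-SIC formula (\ref{eq:TransmitOpt}), which says $\bar{\mathbf{T}}$ is the reverse MMSE-SIC receiver for $(\bar{\mathbf{R}},\bar{\mathbf{q}})$. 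For the coincidence of the forward and reverse SINR sets I would argue by algorithm. In Algorithm~B the powers stay strictly positive along the iteration, so a fixed point of the standard power updates (\ref{eq:Upppower}) and (\ref{eq:Upqpower}) forces $\gamma_{l,m}(\bar{\mathbf{T}},\bar{\mathbf{R}},\bar{\mathbf{p}})=\gamma_l^0=\hat\gamma_{l,m}(\bar{\mathbf{R}},\bar{\mathbf{T}},\bar{\mathbf{q}})$ for all $l,m$, and both sets equal $[\gamma_l^0]$. In Algorithm~A, solving the eigensystems (\ref{eq:eigenP2}) and (\ref{eq:eigenP1}) forces $\gamma_{l,m}=C_{\textrm{max}}\gamma_l^0$ and $\hat\gamma_{l,m}=\hat C_{\textrm{max}}\gamma_l^0$ for all $l,m$, where $C_{\textrm{max}}=1/\lambda_{\textrm{max}}$ and $\hat C_{\textrm{max}}=1/\hat\lambda_{\textrm{max}}$ are the reciprocals of the dominant eigenvalues of $\mathbf{\Upsilon}$ and $\mathbf{\Lambda}$, which at the fixed point are built from the same $\bar{\mathbf{T}},\bar{\mathbf{R}}$. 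Since $C_{\textrm{max}}$ (resp.\ $\hat C_{\textrm{max}}$) is the largest common SINR-to-target ratio attainable in the forward (resp.\ reverse) links with these fixed beamformers and sum power $P_T$, and Lemma~\ref{lem:lem1G} shows each ratio attainable on one side is attainable on the other, I get $C_{\textrm{max}}=\hat C_{\textrm{max}}$, so the two SINR sets again coincide.

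Item~2 then follows in one line from item~1 for Algorithm~B: the forward SINRs satisfy $\gamma_{l,m}=\gamma_l^0=e^{\mathcal{I}_l^0/M_l}-1$ for $m=1,...,M_l$ with $M_l=\text{Rank}(\mathbf{H}_{l,l})$, and since decomposing link~$l$ into $M_l$ SISO streams received by the MMSE-SIC receiver $\bar{\mathbf{R}}$ is information lossless \cite{Varanasi_Asilomar97_MMSE_is_optimal}, $\mathcal{I}_l(\bar{\mathbf{\Sigma}}_{1:L},\mathbf{\Phi})=\sum_{m=1}^{M_l}\log(1+\gamma_{l,m})=M_l\log(1+\gamma_l^0)=\mathcal{I}_l^0$. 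For item~3 I would read off the last row of the eigensystem (\ref{eq:eigenP2}) at $\mathbf{p}_{\textrm{ext}}=[\bar{\mathbf{p}}^T,1]^T$: its first block of rows gives $\mathbf{D}(\mathbf{\Psi}\bar{\mathbf{p}}+\mathbf{1})=\lambda_{\textrm{max}}\bar{\mathbf{p}}$ and its last row gives $\frac{1}{P_T}\mathbf{1}^T\mathbf{D}(\mathbf{\Psi}\bar{\mathbf{p}}+\mathbf{1})=\lambda_{\textrm{max}}$; substituting the former into the latter and using $\lambda_{\textrm{max}}>0$ \cite{Wyang_ICASSP_98_singleSolNNM} gives $\mathbf{1}^T\bar{\mathbf{p}}=P_T$, and since $\bar{\mathbf{\Sigma}}_l=\sum_m\bar p_{l,m}\bar{\mathbf{t}}_{l,m}\bar{\mathbf{t}}_{l,m}^\dagger$ with $\|\bar{\mathbf{t}}_{l,m}\|=1$ I obtain $\sum_{l=1}^L\textrm{Tr}(\bar{\mathbf{\Sigma}}_l)=\|\bar{\mathbf{p}}\|_1=P_T$.

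I expect every step to be a purely mechanical substitution into the formulas that define the algorithms, with the single exception of the equality $C_{\textrm{max}}=\hat C_{\textrm{max}}$ for Algorithm~A in item~1; that point is essentially the min-SINR form of the duality in Lemma~\ref{lem:lem1G}, applied with the beamformers held fixed at their fixed-point values, and is the only place where anything beyond rewriting the update rules is used. This is the sense in which the lemma is ``obvious''.
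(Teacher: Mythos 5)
Your verification is correct, and it is exactly the direct fixed-point substitution that the paper has in mind when it declares the lemma ``obvious'' without supplying a proof: each item follows from reading the update rules at the limit, with the only nontrivial point being $C_{\textrm{max}}=\hat C_{\textrm{max}}$, which you correctly reduce to the SINR duality of Lemma~\ref{lem:lem1G} with the beamformers fixed. No gaps.
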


\textcolor{black}{Note that the rates achieved by Algorithm A may
not satisfy the condition $\mathcal{I}_{l}\left(\bar{\mathbf{\Sigma}}_{1:L},\mathbf{\Phi}\right)=\alpha\mathcal{I}_{l}^{0},l=1,...,L$.
In order to discuss the optimality, we modify the target rates in
}\textbf{\textcolor{black}{FOP/FOPa}}\textcolor{black}{{} to $\mathcal{I}_{l}^{0}=\mathcal{I}_{l}\left(\bar{\mathbf{\Sigma}}_{1:L},\mathbf{\Phi}\right)$.
Then, we can claim that the solution of Algorithm A also satisfies
the second optimality condition. }

\textcolor{black}{Furthermore, if one can prove that the solution
$\bar{\mathbf{\Sigma}}_{1:L}$ possesses the polite water-filling
structure, then $\bar{\mathbf{\Sigma}}_{1:L}$ satisfies all optimality
conditions in Theorem \ref{thm:optimality}. One might conjecture
that the first condition on MMSE structure in Lemma \ref{lem:after-convergence}
implies the polite water-filling structure. Unfortunately, this is
not always true according to the following counter example. Consider
a single user channel $\mathbf{H}$ with $\textrm{Rank}\left(\mathbf{H}\right)>1$
and unequal non-zero singular values. If the transmit vectors are
initialized as the non-zero right singular vectors of $\mathbf{H}$,
the algorithm will converge to a solution where the transmit and receive
vectors respectively are the non-zero right and left singular vectors
of $\mathbf{H}$, and the transmit powers will make the SINRs of all
streams the same. Then the solution does not satisfy the single-user
water-filling structure. }%
{}\textcolor{black}{However, for a smaller class of channels, we have
the following. }
\begin{thm}
\textcolor{black}{\label{thm:PesoWF}If $\textrm{Rank}\left(\mathbf{H}_{l,l}\right)=1,\forall l$,
the solution of Algorithm A (B) $\bar{\mathbf{\Sigma}}_{1:L}$ satisfies
all the optimality conditions in Theorem \ref{thm:optimality}, and
thus achieves a stationary point.}\end{thm}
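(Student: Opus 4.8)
The plan is to verify that the output $\bar{\mathbf{\Sigma}}_{1:L}$ of Algorithm A (or B) meets all three optimality conditions of Theorem \ref{thm:optimality}; the conversion to a KKT point/stationary point then follows from the necessity part of that theorem. Conditions 2 and 3 are essentially already in hand: Lemma \ref{lem:after-convergence} gives condition 3 (full power $\sum_l\textrm{Tr}(\bar{\mathbf{\Sigma}}_l)=P_T$) for Algorithm A, and condition 2 ($\mathcal{I}_l(\bar{\mathbf{\Sigma}}_{1:L},\mathbf{\Phi})=\mathcal{I}_l^0$, i.e. $\alpha=1$) for Algorithm B; for Algorithm A, condition 2 holds with $\alpha=1$ after replacing the targets by $\mathcal{I}_l^0\leftarrow\mathcal{I}_l(\bar{\mathbf{\Sigma}}_{1:L},\mathbf{\Phi})$, exactly as discussed following Lemma \ref{lem:after-convergence}. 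Hence the entire content of the theorem is the verification of condition 1, namely that each $\bar{\mathbf{\Sigma}}_l$ has the polite water-filling structure of Definition \ref{def:DefGWF}. This is also where the rank-one hypothesis does its work: the counterexample following Lemma \ref{lem:after-convergence} uses a channel with two unequal singular values, and that obstruction disappears when $\textrm{Rank}(\mathbf{H}_{l,l})=1$ because there is then only one stream per link and ``equal SINR across streams'' is vacuous.

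Concretely, $\textrm{Rank}(\mathbf{H}_{l,l})=1$ forces $M_l=1$, so $\bar{\mathbf{\Sigma}}_l=\bar{p}_l\bar{\mathbf{t}}_l\bar{\mathbf{t}}_l^{\dagger}$ and the equivalent channel $\bar{\mathbf{H}}_l=\mathbf{\Omega}_l^{-1/2}\mathbf{H}_{l,l}\hat{\mathbf{\Omega}}_l^{-1/2}$ also has rank one. Writing the rank-one SVD $\mathbf{H}_{l,l}=\sigma_l\mathbf{u}_l\mathbf{v}_l^{\dagger}$, one computes $\bar{\mathbf{H}}_l^{\dagger}\bar{\mathbf{H}}_l=\sigma_l^{2}\|\mathbf{\Omega}_l^{-1/2}\mathbf{u}_l\|^{2}\,(\hat{\mathbf{\Omega}}_l^{-1/2}\mathbf{v}_l)(\hat{\mathbf{\Omega}}_l^{-1/2}\mathbf{v}_l)^{\dagger}$, so the single right singular vector of the thin SVD $\bar{\mathbf{H}}_l=\mathbf{F}_l\mathbf{\Delta}_l\mathbf{G}_l^{\dagger}$ satisfies $\mathbf{G}_l\propto\hat{\mathbf{\Omega}}_l^{-1/2}\mathbf{v}_l$. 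Next I invoke item 1 of Lemma \ref{lem:after-convergence}: at convergence $\bar{\mathbf{t}}_l$ is the MMSE receive vector of the reverse link, i.e. it is given by (\ref{eq:TransmitOpt}); with a single stream the intra-link interference sum in (\ref{eq:TransmitOpt}) is empty, whence $\bar{\mathbf{t}}_l=\beta_l\hat{\mathbf{\Omega}}_l^{-1}\mathbf{H}_{l,l}^{\dagger}\bar{\mathbf{r}}_l=\beta_l\sigma_l(\mathbf{u}_l^{\dagger}\bar{\mathbf{r}}_l)\,\hat{\mathbf{\Omega}}_l^{-1}\mathbf{v}_l$. Since the stream has positive SINR, $\mathbf{u}_l^{\dagger}\bar{\mathbf{r}}_l\ne0$, so $\bar{\mathbf{t}}_l\propto\hat{\mathbf{\Omega}}_l^{-1}\mathbf{v}_l$ and therefore $\mathbf{Q}_l=\hat{\mathbf{\Omega}}_l^{1/2}\bar{\mathbf{\Sigma}}_l\hat{\mathbf{\Omega}}_l^{1/2}$ is a nonnegative multiple of $(\hat{\mathbf{\Omega}}_l^{-1/2}\mathbf{v}_l)(\hat{\mathbf{\Omega}}_l^{-1/2}\mathbf{v}_l)^{\dagger}$, i.e. of $\mathbf{G}_l\mathbf{G}_l^{\dagger}$.

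It then remains to note that $\mathbf{Q}_l=d_l\mathbf{G}_l\mathbf{G}_l^{\dagger}$ with $d_l\ge0$ is automatically of the water-filling form $\mathbf{G}_l\mathbf{D}_l\mathbf{G}_l^{\dagger}$ with $\mathbf{D}_l=(\nu_l\mathbf{I}-\mathbf{\Delta}_l^{-2})^{+}$: everything is $1\times1$, so take $\nu_l\mathbf{I}=d_l\mathbf{I}+\mathbf{\Delta}_l^{-2}$ when $d_l>0$ and $\nu_l=0$ (inactive link, $\bar{\mathbf{\Sigma}}_l=0$) when $d_l=0$. Hence every $\bar{\mathbf{\Sigma}}_l$ possesses the polite water-filling structure, so $\bar{\mathbf{\Sigma}}_{1:L}$ does as well (Definition \ref{def:DefGWF}); all three conditions of Theorem \ref{thm:optimality} hold, and by its necessity-to-KKT part $\bar{\mathbf{\Sigma}}_{1:L}$ satisfies the KKT conditions and is a stationary point. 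I do not expect any deep obstacle here; the only care needed is bookkeeping — pinning down the alignment $\mathbf{G}_l\propto\hat{\mathbf{\Omega}}_l^{-1/2}\mathbf{v}_l$, treating the degenerate $d_l=0$ case inside the water-filling formula, checking that the $\hat{\mathbf{\Omega}}_l$ used in (\ref{eq:TransmitOpt}) at convergence coincides with the one in Definition \ref{def:DefGWF} (which follows from the uniqueness in Lemma \ref{lem:lem1G} applied to item 1 of Lemma \ref{lem:after-convergence}), and confirming that positive stream SINR — preserved by the algorithm's initialization and MMSE updates — rules out $\mathbf{u}_l^{\dagger}\bar{\mathbf{r}}_l=0$ so the direction of $\bar{\mathbf{t}}_l$ is well defined.
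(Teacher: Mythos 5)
Your proposal is correct and follows essentially the same route as the paper: conditions 2 and 3 come from Lemma \ref{lem:after-convergence}, and the rank-one hypothesis forces the single-stream transmit vector, which at convergence is the reverse-link MMSE filter, to align with the unique right singular vector of the equivalent channel, after which any nonnegative scalar power trivially fits the water-filling form $(\nu_l-\delta_l^{-1})^{+}$ by choosing $\nu_l=d_l+\delta_l^{-1}$. The only cosmetic difference is that you compute directly with the rank-one SVD of $\mathbf{H}_{l,l}$ and need only the reverse-link MMSE fixed point, whereas the paper phrases the same argument as a matched-filter/eigenvector computation in the whitened equivalent channel.
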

\begin{proof}
\textcolor{black}{The polite water-filling structure of $\bar{\mathbf{\Sigma}}_{1:L}$
can be proved by considering the transmission over the equivalent
channel $\bar{\mathbf{H}}_{l}\triangleq\mathbf{\bar{\Omega}}_{l}^{-1/2}\mathbf{H}_{l,l}\bar{\hat{\mathbf{\Omega}}}_{l}^{-1/2}$.
}Decompose the forward and reverse equivalent input covariance matrices
$\mathbf{\bar{Q}}_{l}\triangleq\bar{\hat{\mathbf{\Omega}}}_{l}^{1/2}\bar{\mathbf{\Sigma}}_{l}\bar{\hat{\mathbf{\Omega}}}_{l}^{1/2}$
and $\mathbf{\bar{\hat{Q}}}_{l}=\mathbf{\bar{\Omega}}_{l}^{1/2}\mathbf{\bar{\hat{\Sigma}}}_{l}\mathbf{\bar{\Omega}}_{l}^{1/2}$
to beams as $\mathbf{\bar{Q}}_{l}=\sum_{m=1}^{M_{l}}\bar{d}_{l,m}\bar{\mathbf{u}}_{l,m}\mathbf{\bar{u}}_{l,m}^{\dagger}$,
where $\bar{d}_{l,m}=\bar{p}_{l,m}\left\Vert \bar{\hat{\mathbf{\Omega}}}_{l}^{1/2}\mathbf{\bar{t}}_{l,m}\right\Vert ^{2}$
is the equivalent transmit power and $\mathbf{\bar{u}}_{l,m}=\bar{\hat{\mathbf{\Omega}}}_{l}^{1/2}\sqrt{\bar{p}_{l,m}}\mathbf{\bar{t}}_{l,m}/\sqrt{\bar{d}_{l,m}}$
is the equivalent transmit vector; and $\mathbf{\bar{\hat{Q}}}_{l}=\sum_{m=1}^{M_{l}}\bar{\hat{d}}_{l,m}\mathbf{\bar{v}}_{l,m}\mathbf{\bar{v}}_{l,m}^{\dagger}$,
where $\bar{\hat{d}}_{l,m}=\bar{q}_{l,m}\left\Vert \mathbf{\bar{\Omega}}_{l}^{1/2}\bar{\mathbf{r}}_{l,m}\right\Vert ^{2}$
and $\mathbf{\bar{v}}_{l,m}=\mathbf{\bar{\Omega}}_{l}^{1/2}\sqrt{\bar{q}_{l,m}}\mathbf{\bar{r}}_{l,m}/\sqrt{\bar{\hat{d}}_{l,m}}$.\textcolor{black}{{}
The algorithm sets the number of data streams as $M_{l}=\textrm{Rank}\left(\mathbf{H}_{l,l}\right)=1,\forall l$,
which does not lose optimality by Theorem \ref{thm:WFST}. Since the
interference-plus-noise is whitened in the equivalent channel $\bar{\mathbf{H}}_{l}$,
the MMSE receiver $\mathbf{\bar{v}}_{l,1}$ reduces to the matched
filter, i.e., $\mathbf{\bar{v}}_{l,1}=\alpha_{l,1}\bar{\mathbf{H}}_{l}\bar{\mathbf{u}}_{l,1}$.
Similarly, the MMSE receiver $\bar{\mathbf{u}}_{l,1}$ in the reverse
equivalent channel $\bar{\mathbf{H}}_{l}^{\dagger}$ is given by the
matched filter: $\bar{\mathbf{u}}_{l,1}=\beta_{l,1}\bar{\mathbf{H}}_{l}^{\dagger}\bar{\mathbf{v}}_{l,1}=\alpha_{l,1}\beta_{l,1}\bar{\mathbf{H}}_{l}^{\dagger}\bar{\mathbf{H}}_{l}\bar{\mathbf{u}}_{l,1}$,
i.e., $\bar{\mathbf{u}}_{l,1}$ is an eigenvector of $\bar{\mathbf{H}}_{l}^{\dagger}\bar{\mathbf{H}}_{l}$.
Since the initial point is chosen such that the SINRs of all streams
are strictly positive, they must also be strictly positive after the
convergence. Hence, $\bar{\mathbf{u}}_{l,1}$ must be the eigenvector
corresponding to the only non-zero eigenvalue $\delta_{l,1}$ of $\bar{\mathbf{H}}_{l}^{\dagger}\bar{\mathbf{H}}_{l}$.
Then $\mathbf{\bar{Q}}_{l}=\bar{d}_{l,1}\mathbf{\bar{u}}_{l,1}\mathbf{\bar{u}}_{l,1}^{\dagger}=\mathbf{u}_{l,1}\left(\bar{\nu}_{l}-1/\delta_{l,1}\right)^{+}\mathbf{u}_{l,1}^{\dagger}$,
where $\bar{\nu}_{l}\triangleq\bar{d}_{l,1}+1/\delta_{l,1}$ is the
polite water-filling level. Therefore, the solution $\bar{\mathbf{\Sigma}}_{1:L}$
satisfies the polite water-filling structure and all other optimality
conditions by Lemma \ref{lem:after-convergence}.}
\end{proof}

\textcolor{black}{}%
{}\textcolor{black}{For general cases, Algorithms A and B may converge
to a solution $\bar{\mathbf{\Sigma}}_{1:L}$ where some $\bar{\mathbf{\Sigma}}_{l}$'s
do not possess the polite water-filling structure. Then the rates
of these links may be improved without hurting other links by enforcing
the polite water-filling structure on these $\bar{\mathbf{\Sigma}}_{l}$'s
. In the next sub-section, we show how to do it by improving the algorithms
for a sub-class of B-MAC networks}\textit{\textcolor{black}{\emph{
named iTree Network}}}\textcolor{black}{s.}

\subsection{Improved Algorithm for iTree Network\textup{s\label{sub:itree}}}

iTree networks defined in \cite{Liu_IT10s_Duality_BMAC} appears to
be a natural extension of MAC and BC. We review its definition below.

\begin{figure}
\begin{centering}
\textsf{\includegraphics[clip,scale=0.3]{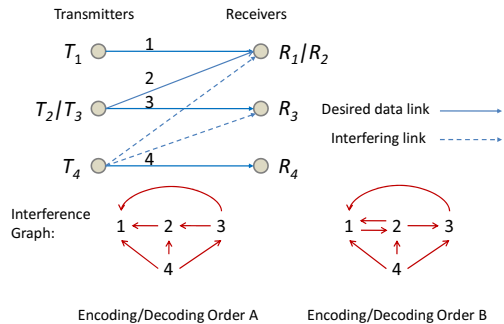}}
\par\end{centering}

\caption{\label{fig:iTree}A network with topology loop and encoding/decoding
order A is an iTree network, whose interference graph does not have
any directional loop. With encoding/decoding order B, it is not an
iTree network because the interference graph has directional loops.}

\end{figure}

\begin{definitn}
A B-MAC network with a fixed coupling matrix is called an \emph{Interference
Tree (iTree) Network} if after interference cancellation, the links
can be indexed such that any link is not interfered by the links with
smaller indices.
\end{definitn}

\begin{definitn}
In an \emph{Interference Graph}, each node represents a link. A directional
edge from node $i$ to node $j$ means that link $i$ causes interference
to link $j$.\end{definitn}
\begin{remrk}
The iTree network is related to but different from the network with
tree topology, which implies iTree network only if the interference
cancellation order is chosen properly. For example, a MAC which has
tree topology is not an iTree network if the successive decoding is
not employed at the receiver. On the other hand, even if there are
loops in a network, it may be an iTree network if the interference
cancellation order is right. We give such an example in Fig. \ref{fig:iTree}
where there are four desired data links 1, 2, 3, and 4, and dirty
paper coding and successive decoding and cancellation are employed.
With encoding/decoding order A, where the signal $\mathbf{x}_{2}$
is decoded after $\mathbf{x}_{1}$ and the signal $\mathbf{x}_{3}$
is encoded after $\mathbf{x}_{2}$, each link $l\in\left\{ 2,3,4\right\} $
is not interfered by the first $l-1$ links. Therefore, the network
in Fig. \ref{fig:iTree} is an iTree network even though it has a
loop of nonzero channel gains. However, for encoding/decoding order
B, SIC is not employed at $R_{1}/R_{2}$, and $\mathbf{x}_{2}$ is
encoded after $\mathbf{x}_{3}$ at $T_{2}/T_{3}$. The network in
Fig. \ref{fig:iTree} is no longer an iTree network because the interference
graph has directional loops, making the iTree indexing impossible.
\end{remrk}

Since the coupling matrix of the reverse links is the transpose of
that of the forward links, the interference relation is reversed as
stated in the following lemmas. Without loss of generality, we consider
iTree networks where the $l^{th}$ link is not interfered by the first
$l-1$ links in this paper.
\begin{lemma}
\label{lem:RevInf} \cite{Liu_IT10s_Duality_BMAC} If in an iTree
network, the $l^{th}$ link is not interfered by the links with lower
indices, in the reverse links, the $l^{th}$ link is not interfered
by the links with higher indices.
\end{lemma}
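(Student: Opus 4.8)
The plan is to observe that, under the iTree indexing, the hypothesis is nothing more than a triangularity statement about the coupling matrix, and that passing to the reverse links replaces $\mathbf{\Phi}$ by its transpose, which swaps the roles of ``lower'' and ``higher'' indices. First I would unwind the definitions: saying that the $l^{th}$ forward link is not interfered by the links with smaller indices means, by the definition of $\mathbf{\Phi}$, that $\mathbf{\Phi}_{l,k}=0$ for all $k<l$; together with the standing convention $\mathbf{\Phi}_{l,l}=0$, this says precisely that $\mathbf{\Phi}$ is strictly upper triangular in the chosen indexing (consistent with the DPC/SIC example $\mathbf{\Phi}_{l,k}=0,\forall k\le l$ given earlier).

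Next I would use that the reverse network has coupling matrix $\mathbf{\Phi}^{T}$, so by (\ref{eq:WhiteMRV}) the interference-plus-noise covariance of reverse link $l$ is $\hat{\mathbf{\Omega}}_{l}=\mathbf{I}+\sum_{k=1}^{L}\mathbf{\Phi}_{k,l}\mathbf{H}_{k,l}^{\dagger}\hat{\mathbf{\Sigma}}_{k}\mathbf{H}_{k,l}$; hence the $k^{th}$ reverse link contributes interference to reverse link $l$ only if $\mathbf{\Phi}_{k,l}=1$. To conclude, fix any $k>l$ and apply the forward hypothesis with the pair $(k,l)$ playing the role of $(l,k)$: since $l<k$, the $k^{th}$ forward link is not interfered by the lower-indexed $l^{th}$ link, i.e. $\mathbf{\Phi}_{k,l}=0$. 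Thus every term with $k>l$ drops out of $\hat{\mathbf{\Omega}}_{l}$, which is exactly the assertion that in the reverse links the $l^{th}$ link is not interfered by the links with higher indices.

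Equivalently one can run the argument purely at the level of the \emph{Interference Graph}: reversing the transmission direction reverses every directional edge, so the reverse interference graph is the edge-reversal of the forward one, and an indexing with no edge from a higher-indexed node to a lower-indexed node becomes, after reversal, an indexing with no edge from a lower-indexed node to a higher-indexed node. I do not anticipate any real obstacle here; the only point requiring care is the bookkeeping of which subscript of $\mathbf{\Phi}_{l,k}$ denotes the interferer versus the interfered link, and consistently invoking that the reverse coupling matrix is $\mathbf{\Phi}^{T}$. No property of the channel matrices $\mathbf{H}_{l,k}$, the input covariances, or the covariance transformation is needed.
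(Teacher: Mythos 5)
Your argument is correct and is exactly the one the paper intends: the text immediately preceding the lemma already notes that the reverse coupling matrix is $\mathbf{\Phi}^{T}$, so the iTree hypothesis $\mathbf{\Phi}_{l,k}=0$ for $k\le l$ (strict upper triangularity) transposes into the claimed reverse-link property, which is what you spell out via (\ref{eq:WhiteMRV}). Your careful bookkeeping of which index of $\mathbf{\Phi}_{l,k}$ is the interferer is the only nontrivial point, and you handle it correctly.
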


We develop an algorithm to improve the performance of iTree networks.
That is after the convergence of the Algorithm A or B, if any output
of the algorithm $\bar{\mathbf{\Sigma}}_{i}$ does not satisfy the
\textcolor{black}{polite water-filling structure, the objective (cost)
in }\textbf{\textcolor{black}{FOP}}\textcolor{black}{{} (}\textbf{\textcolor{black}{SPMP}}\textcolor{black}{)
can be strictly increased (decreased)} by enforcing this structure\textcolor{black}{{}
at link $i$. }

We first define some notations and give a useful lemma.\textcolor{black}{{}
The output of the SINR based algorithm $\mathbf{\bar{\Sigma}}_{1:L}$
achieves a rate point $\left[\mathcal{I}_{l}\right]_{l=1,...,L}$
with sum power $P_{T}\triangleq\sum_{l=1}^{L}\textrm{Tr}\left(\mathbf{\bar{\Sigma}}_{l}\right)$.
The algorithm also produces the corresponding covariance transformation
$\bar{\mathbf{\hat{\mathbf{\Sigma}}}}_{1:L}$ computed from $\mathbf{\bar{R}}$
and $\mathbf{\bar{q}}$ achieving a set of rates $\hat{\mathcal{I}}_{l}=\mathcal{I}_{l},l=1,...,L$.
Fixing the input covariance matrices }$\bar{\mathbf{\Sigma}}_{j},j=i+1,...,L$
for the last $L-i$ links, the first $i$ links form a sub-network\textcolor{black}{\begin{equation}
\left(\left[\mathbf{H}_{l,k}\right]_{k,l=1,...,i},\sum_{l=1}^{i}\textrm{Tr}\left(\mathbf{\Sigma}_{l}\right)=P_{T}^{i},\left[\mathbf{W}_{l}\right]_{l=1,...i}\right),\label{eq:Sub1toi}\end{equation}
where $\mathbf{W}_{l}=\mathbf{I}+\sum_{j=i+1}^{L}\mathbf{\Phi}_{l,j}\mathbf{H}_{l,j}\mathbf{\bar{\Sigma}}_{j}\mathbf{H}_{l,j}^{\dagger},\forall l$}
\textcolor{black}{is the covariance matrix of the equivalent colored
noise; $P_{T}^{i}=\sum_{l=1}^{i}\textrm{Tr}\left(\mathbf{\bar{\Sigma}}_{l}\right)$.
By Theorem \ref{thm:linear-color-dual}, the dual sub-network is \begin{equation}
\left(\left[\mathbf{H}_{k,l}^{\dagger}\right]_{k,l=1,...,i},\sum_{l=1}^{i}\textrm{Tr}\left(\hat{\mathbf{\Sigma}}_{l}\mathbf{W}_{l}\right)=P_{T}^{i},\left[\mathbf{I}_{l}\right]_{l=1,...,i}\right).\label{eq:Sub1toi-dual}\end{equation}
It is clear that after convergence, $\bar{\mathbf{\hat{\mathbf{\Sigma}}}}_{1:L}=\left(\mathbf{\bar{\hat{\Sigma}}}_{1},...,\mathbf{\bar{\hat{\Sigma}}}_{L}\right)$
is the covariance transformation of $\mathbf{\bar{\Sigma}}_{1:L}=\left(\mathbf{\bar{\Sigma}}_{1},...,\mathbf{\bar{\Sigma}}_{L}\right)$.
By Lemma 9 in \cite{Liu_IT10s_Duality_BMAC}, }$\bar{\hat{\mathbf{\Sigma}}}_{1:i}=\left(\mathbf{\bar{\hat{\Sigma}}}_{1},...,\mathbf{\bar{\hat{\Sigma}}}_{i}\right)$
is also the covariance transformation of $\mathbf{\bar{\Sigma}}_{1:i}=\left(\mathbf{\bar{\Sigma}}_{1},...,\mathbf{\bar{\Sigma}}_{i}\right)$,
applied to the sub-network (\ref{eq:Sub1toi}).

\textcolor{black}{The algorithm} to improve the performance contains
three steps.

\textbf{\textcolor{black}{Step}}\textcolor{black}{{} 1: Improve the
rate of reverse link $i$ by enforcing the polite water-filling structure
on $\hat{\mathbf{\Sigma}}_{i}$. By Lemma \ref{lem:RevInf}, the reverse
link $i$ causes no interference to the first $i-1$ reverse links.
If we fix $\bar{\hat{\mathbf{\Sigma}}}_{l},l=1,...,i-1$, the rate
of reverse link $i$ can be improved without hurting other reverse
links in the sub network by solving the following single-user optimization
problem:}

\textcolor{black}{\begin{align}
 & \underset{\hat{\mathbf{\Sigma}}_{i}\geq0}{\textrm{max}}\:\textrm{log}\left|\mathbf{I}+\mathbf{H}_{i,i}^{\dagger}\hat{\mathbf{\Sigma}}_{i}\mathbf{H}_{i,i}\bar{\mathbf{\hat{\mathbf{\Omega}}}}_{i}^{-1}\right|\label{eq:MRSC}\\
\textrm{s.t.} & \:\textrm{Tr}\left(\hat{\mathbf{\Sigma}}_{i}\mathbf{W}_{i}\right)\leq P_{T}^{i}-\sum_{l=1}^{i-1}\textrm{Tr}\left(\bar{\hat{\mathbf{\Sigma}}}_{l}\mathbf{W}_{l}\right)=\textrm{Tr}\left(\bar{\hat{\mathbf{\Sigma}}}_{i}\mathbf{W}_{i}\right),\nonumber \end{align}
where $\bar{\mathbf{\hat{\mathbf{\Omega}}}}_{i}=\mathbf{I}+\sum_{k=1}^{i-1}\mathbf{\Phi}_{k,i}\mathbf{H}_{k,i}^{\dagger}\mathbf{\bar{\hat{\mathbf{\Sigma}}}}_{k}\mathbf{H}_{k,i}$
and $\mathbf{W}_{i}=\mathbf{I}+\sum_{j=i+1}^{L}\mathbf{\Phi}_{i,j}\mathbf{H}_{i,j}\mathbf{\bar{\Sigma}}_{j}\mathbf{H}_{i,j}^{\dagger}=\bar{\mathbf{\Omega}}_{i}$.
}By a simple extension of the solution with\textcolor{black}{{} white
noise and sum power constraint in \cite{Telatar_EuroTrans_1999_MIMOCapacity}
to case }of colored noise and linear constraint here\textcolor{black}{,
it can be proved that the optimal solution is uniquely given by the
following polite water-filling procedure. Perform the thin SVD $\bar{\mathbf{\Omega}}_{i}^{-1/2}\mathbf{H}_{i,i}\bar{\mathbf{\hat{\mathbf{\Omega}}}}_{i}^{-1/2}=\mathbf{F}_{i}\mathbf{\Delta}_{i}\mathbf{G}_{i}^{\dagger}$.
Let $N_{i}=\textrm{Rank}\left(\mathbf{H}_{i,i}\right)$ and $\delta_{i,j}$
be the $j^{th}$ diagonal element of $\mathbf{\Delta}_{i}^{2}$. Obtain
$\mathbf{D}_{i}$ as \begin{eqnarray}
\mathbf{D}_{i} & = & \textrm{diag}\left(d_{i,j},...,d_{i,N_{i}}\right),\label{eq:WFDlS}\\
d_{i,j} & = & \left(\nu_{i}-\frac{1}{\delta_{i,j}}\right)^{+},j=1,...,N_{i},\nonumber \end{eqnarray}
where $\nu_{i}$ is chosen such that $\sum_{j=1}^{N_{l}}d_{i,j}=\textrm{Tr}\left(\bar{\hat{\mathbf{\Sigma}}}_{i}\bar{\mathbf{\Omega}}_{i}\right)$
and can be obtained by conventional water-filling algorithm. Then
the optimal solution is given by \begin{equation}
\hat{\mathbf{\Sigma}}_{i}^{'}=\bar{\mathbf{\Omega}}_{i}^{-1/2}\mathbf{F}_{i}\mathbf{D}_{i}\mathbf{F}_{i}^{\dagger}\bar{\mathbf{\Omega}}_{i}^{-1/2}\label{eq:optsighsub}\end{equation}
By Theorem \ref{thm:FequRGWF}, if $\mathbf{\bar{\Sigma}}_{i}$ does
not satisfy the polite water-filling, nor does $\bar{\hat{\mathbf{\Sigma}}}_{i}$,
which implies that $\bar{\hat{\mathbf{\Sigma}}}_{i}$ is not the optimal
solution and $\hat{\mathbf{\Sigma}}_{i}^{'}$ achieves a rate $\hat{\mathcal{I}}_{i}^{'}>\hat{\mathcal{I}}_{i}$. }

\textbf{Step} 2: Improve \textcolor{black}{the forward links} by the
covariance transformation from $\mathbf{\hat{\Sigma}}_{1:i}^{'}=\left(\bar{\hat{\mathbf{\Sigma}}}_{1},...,\bar{\hat{\mathbf{\Sigma}}}_{i-1},\hat{\mathbf{\Sigma}}_{i}^{'}\right)$
to $\mathbf{\Sigma}_{1:i}^{'}=\left(\mathbf{\Sigma}_{1}^{'},...,\mathbf{\Sigma}_{i}^{'}\right)$
for the sub-network.\textcolor{black}{{} }Due to the special interference
structure of iTree networks, the calculation of the transmit powers
of the covariance transformation can be simplified to be calculated
one by one as follows.\textcolor{black}{{} When calculating $p_{l,m}$,
the transmit powers $p_{k,m}:\ m=1,...,M_{k},\ k=l+1,...i$ and $p_{l,n},\ n=m+1,...,M_{l}$
have been calculated. }Therefore, we can calculate $\mathbf{\Sigma}_{k}^{'}=\sum_{m=1}^{M_{k}}p_{k,m}\mathbf{t}_{k,m}\mathbf{t}_{k,m}^{\dagger},\ k=l+1,...i$
and obtain the interference-plus-noise covariance matrix of the link
$l$ as $\mathbf{\Omega}_{l}^{'}=\mathbf{I}+\sum_{k=l+1}^{i}\mathbf{\Phi}_{l,k}\mathbf{H}_{l,k}\mathbf{\Sigma}_{k}^{'}\mathbf{H}_{l,k}^{\dagger}$.
Then \textcolor{black}{obtain $p_{l,m}$ as\begin{equation}
p_{l,m}=\frac{\hat{\gamma}_{l,m}\left(\mathbf{r}_{l,m}^{\dagger}\mathbf{\Omega}_{l}^{'}\mathbf{r}_{l,m}+\sum_{n=m+1}^{M_{l}}p_{l,n}\left|\mathbf{r}_{l,m}^{\dagger}\mathbf{H}_{l,l}\mathbf{t}_{l,n}\right|^{2}\right)}{\left|\mathbf{r}_{l,m}^{\dagger}\mathbf{H}_{l,l}\mathbf{t}_{l,m}\right|^{2}}.\label{eq:FLpowtree}\end{equation}
Finally $\mathbf{\Sigma}_{l}^{'}$ is given by\begin{eqnarray}
\mathbf{\Sigma}_{l}^{'} & = & \sum_{m=1}^{M_{l}}p_{l,m}\mathbf{t}_{l,m}\mathbf{t}_{l,m}^{\dagger}.\label{eq:Sigmatree}\end{eqnarray}
}By Theorem \ref{thm:linear-color-dual}\textcolor{black}{, the covariance
transformation $\mathbf{\Sigma}_{1:i}^{'}$ achieves a set of rates
$\mathcal{I}_{i}^{'}\geq\hat{\mathcal{I}}_{i}^{'}>\mathcal{I}_{i}$
and $\mathcal{I}_{l}^{'}\geq\hat{\mathcal{I}}_{l}^{'}=\mathcal{I}_{l},l<i$
 in the sub-network under the sum power constraint $\sum_{l=1}^{i}\textrm{Tr}\left(\mathbf{\Sigma}_{l}^{'}\right)=\sum_{l=1}^{i-1}\textrm{Tr}\left(\bar{\hat{\mathbf{\Sigma}}}_{l}\mathbf{W}_{l}\right)+\textrm{Tr}\left(\hat{\mathbf{\Sigma}}_{i}^{'}\mathbf{W}_{i}\right)=P_{T}^{i}$.
Noting that the first $i$ links cause no interference to all other
links in the original network, the input covariance matrices $\mathbf{\Sigma}_{1:L}^{'}=\left(\mathbf{\Sigma}_{1}^{'},\cdots,\mathbf{\Sigma}_{i}^{'},\mathbf{\bar{\Sigma}}_{i+1},\cdots,\mathbf{\bar{\Sigma}}_{L}\right)$
must achieve a rate point $\mathcal{I}_{i}^{'}>\mathcal{I}_{i}$ and
$\mathcal{I}_{l}^{'}\geq\mathcal{I}_{l},\forall l\neq i$ in the original
network with the same sum power $P_{T}$. }

\textcolor{black}{We refer to the above algorithm as Algorithm S and
summarize it in table \ref{tab:table4}.}

\begin{table}
\caption{\label{tab:table4}Algorithm S (Improving the Rate of Link $i$ for
iTree Networks)}

\centering{}\begin{tabular}{l}
\hline
\noalign{\vskip\doublerulesep}
{\small 1. Obtain $\bar{\mathbf{\hat{\mathbf{\Sigma}}}}_{1:i}$ from
the solution of Algorithm A or B.}\tabularnewline
{\small 2. Solve for the optimal $\mathbf{\hat{\mathbf{\Sigma}}}_{i}^{'}$
in the optimization problem (\ref{eq:MRSC})}\tabularnewline
{\small \ \ \ by polite water-filling.}\tabularnewline
{\small 3. Calculate the input covariance matrices }\textcolor{black}{\small $\mathbf{\Sigma}_{l}^{'},l=1,...,i$
}{\small for the first}\tabularnewline
{\small \ \ \ $i$ links by the covariance transformation as in
(\ref{eq:Sigmatree}). }\tabularnewline
{\small 4. }\textbf{\small Output}{\small{} the updated input covariance
matrices }\tabularnewline
{\small \ \ \ \ \ \ \ \ }\textcolor{black}{\small $\mathbf{\Sigma}_{1:L}^{'}=\left(\mathbf{\Sigma}_{1}^{'},\cdots,\mathbf{\Sigma}_{i}^{'},\mathbf{\bar{\Sigma}}_{i+1},\cdots,\mathbf{\bar{\Sigma}}_{L}\right).$ }\tabularnewline
\end{tabular}
\end{table}

\textcolor{black}{The performance can be strictly improved using the
output of Algorithm S. For }\textbf{\textcolor{black}{FOP}}\textcolor{black}{,
we first reduce the transmit power of link $i$ until its rate is
reduced to $\mathcal{I}_{i}$. This may benefit other links as well
because the interference to other links is also reduced. Then this
extra power can be used to simultaneously increase all link's power
by the same factor, and thus improve the rates of all links and the
objective function of }\textbf{\textcolor{black}{FOP}}\textcolor{black}{.
The cost function of }\textbf{\textcolor{black}{SPMP}}\textcolor{black}{{}
can be strictly decreased by reducing the transmit power of link $i$
such that the rate is reduced to $\mathcal{I}_{i}^{0}$. Note that
the above operations can be automatically achieved by Algorithm A
(B) using $\mathbf{\Sigma}_{1:L}^{'}$ as the initial point. }According
to the above, we propose an improved algorithm for iTree networks
using Algorithm S as a component. It is referred to as Algorithm I
and summarized in table \ref{tab:table3}.\textcolor{black}{{} The optimality
of Algorithm I is stated in the theorem below.}

\begin{table}
\caption{\label{tab:table3}Algorithm I (Improved Algorithm for iTree Networks)}

\centering{}\begin{tabular}{l}
\hline
{\small 1. Generate random initial point such that all streams have}\tabularnewline
{\small \ \ \ positive SINR.}\tabularnewline
{\small 2. Run Algorithm A or B to obtain the solution $\bar{\mathbf{\Sigma}}_{1:L}$.}\tabularnewline
{\small 3. }\textbf{\small Repeat}\tabularnewline
{\small \ \ \ \ }\textbf{\small For}{\small{} $i=1:L$ }\tabularnewline
{\small \ \ \ \ \ \ }\textbf{\small If}{\small{} $\bar{\mathbf{\Sigma}}_{i}$
does not satisfy the polite water-filling structure }\tabularnewline
{\small \ \ \ \ \ \ \ \ Obtain $\mathbf{\Sigma}_{1:L}^{'}$
from $\bar{\mathbf{\Sigma}}_{1:L}$ using Algorithm S.}\tabularnewline
{\small \ \ \ \ \ \ \ \ Run Algorithm A or B with $\mathbf{\Sigma}_{1:L}^{'}$
as the initial point to}\tabularnewline
{\small \ \ \ \ \ \ \ \ obtain the solution $\bar{\mathbf{\Sigma}}_{1:L}$.}\tabularnewline
{\small \ \ \ \ \ \ }\textbf{\small End}{\small{} }\tabularnewline
{\small \ \ \ \ }\textbf{\small End}\tabularnewline
{\small \ \ }\textbf{\small Until}{\small{} converge}\tabularnewline
\end{tabular}
\end{table}

\begin{thm}
\textcolor{black}{\label{thm:opttreenet}For iTree networks, the solution
of Algorithm I satisfies all the optimality conditions in Theorem
\ref{thm:optimality}, and thus achieves a stationary point.}\end{thm}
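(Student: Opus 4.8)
The plan is to verify that the point $\bar{\mathbf{\Sigma}}_{1:L}$ returned by Algorithm I meets the three necessary optimality conditions of Theorem \ref{thm:optimality} and then invoke the necessity-to-KKT implication of that theorem. Conditions 2 and 3 are essentially inherited: Algorithm I always terminates with a run of Algorithm A (for \textbf{FOP}) or Algorithm B (for \textbf{SPMP}), so Lemma \ref{lem:after-convergence} applies to its output. For Algorithm B this yields $\mathcal{I}_l(\bar{\mathbf{\Sigma}}_{1:L},\mathbf{\Phi})=\mathcal{I}_l^0$ for all $l$, i.e. condition 2 with $\alpha=1$; for Algorithm A it yields $\sum_l\textrm{Tr}(\bar{\mathbf{\Sigma}}_l)=P_T$, i.e. condition 3, while condition 2 holds under the convention, adopted right after Lemma \ref{lem:after-convergence}, of resetting the \textbf{FOP}/\textbf{FOPa} targets to the rates actually achieved. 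Hence the real work is condition 1: at the converged point every $\bar{\mathbf{\Sigma}}_i$ possesses the polite water-filling structure.

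I would prove condition 1 by contradiction with the convergence of Algorithm I. Suppose some $\bar{\mathbf{\Sigma}}_i$ lacks the polite water-filling structure at the converged point; then the \textbf{If} branch of Step 3 fires and Algorithm S is run on link $i$. Using the properties of Algorithm S already established in the text --- Step 1 strictly raises the reverse rate of link $i$ by enforcing the polite water-filling structure on $\hat{\mathbf{\Sigma}}_i$ (by Theorem \ref{thm:FequRGWF} the current $\bar{\hat{\mathbf{\Sigma}}}_i$ is then not the optimal single-user solution of (\ref{eq:MRSC})), Step 2 carries the gain back to the forward sub-network through the covariance transformation (Theorem \ref{thm:linear-color-dual}), and the iTree structure together with Lemma \ref{lem:RevInf} allows the reverse and then forward powers to be recovered one link at a time --- the output $\mathbf{\Sigma}_{1:L}'$ achieves $\mathcal{I}_i'>\mathcal{I}_i$ and $\mathcal{I}_l'\ge\mathcal{I}_l$ for $l\ne i$ at the same sum power $P_T$, since in an iTree network the first $i$ links inflict no interference on links $i+1,\dots,L$. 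From $\mathbf{\Sigma}_{1:L}'$ I would then exhibit a strictly better feasible point for each problem. For \textbf{FOP}: reduce the power on link $i$ until its rate returns to $\mathcal{I}_i$, which only relaxes the interference seen by links $1,\dots,i-1$ and leaves links $i+1,\dots,L$ untouched; this frees a strictly positive amount of power, and scaling every input covariance by a common $c>1$ strictly increases every stream SINR (each SINR has the form $cs/(1+c\iota)$ with fixed $s,\iota\ge0$), hence every rate, hence the \textbf{FOP} objective. For \textbf{SPMP}: reduce the power on link $i$ until $\mathcal{I}_i'$ drops back to $\mathcal{I}_i^0$; all rate constraints remain satisfied while the total power strictly decreases. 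Since Algorithm A is monotone in its objective, Algorithm B is monotone in its cost once feasible, and both can realize these operations starting from the initial point $\mathbf{\Sigma}_{1:L}'$, the re-run inside Step 3 returns a strictly better value than the pass started with, contradicting convergence. Thus every $\bar{\mathbf{\Sigma}}_i$ has the polite water-filling structure.

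With conditions 1, 2 and 3 established, the converged $\bar{\mathbf{\Sigma}}_{1:L}$ satisfies the hypotheses of the necessity part of Theorem \ref{thm:optimality}, which states that such a point satisfies the KKT conditions of \textbf{FOPa} (respectively \textbf{SPMP}) and is therefore a stationary point; this is exactly the claim.

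The step I expect to be the main obstacle is making the ``convergence implies polite water-filling at every link'' argument fully rigorous. Algorithm I generates a bounded monotone sequence of objective (cost) values, hence convergent, but one must exclude the possibility that the per-step gains produced by Algorithm S shrink to zero while some link never acquires the structure. I would close this with a compactness argument: the feasible covariance sets are compact, the rate functions and a polite-water-filling residual for each link are continuous, and the strict gain delivered by the Algorithm-S-plus-rescaling construction is a continuous, strictly positive function of the current iterate on the compact set of limit points whenever some link lacks the structure, giving a uniform positive gap that is incompatible with convergence of the value sequence. A secondary subtlety is that re-running Algorithm A or B may undo the polite water-filling structure previously imposed on other links; this is handled by reading convergence of the outer \textbf{Repeat} loop as a full pass of the inner \textbf{For} loop producing no change, which forces all links to satisfy the structure simultaneously.
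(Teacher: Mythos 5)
Your proof is correct and follows essentially the same route as the paper's: conditions 2 and 3 are inherited from Lemma \ref{lem:after-convergence}, the value sequence is monotone and bounded hence convergent, and the strict improvement delivered by Algorithm S whenever some link lacks the polite water-filling structure contradicts convergence. Your closing compactness argument and the remark about reading convergence as a full no-change pass of the inner loop are actually more careful than the paper's own (very terse) contradiction argument, which simply asserts that a non-optimal fixed point would be strictly improved.
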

\begin{proof}
For iTree networks, the \textbf{SPMP} is always feasible. Step 2 of
the Algorithm I generates a feasible solution $\bar{\mathbf{\Sigma}}_{1:L}$.
Then in step 3, both Algorithm S and Algorithm A (B) will monotonically
increase (decrease) the objective (cost) in \textbf{FOP} (\textbf{SPMP}).
Since the objective (cost) is upper bounded (lower bounded), Algorithm
I must converge to a fixed point. If \textcolor{black}{the optimality
conditions in Theorem \ref{thm:optimality}} is not satisfied, Algorithm
I will strictly increase (decrease) the objective (cost) in \textbf{FOP}
(\textbf{SPMP}), which contradicts with the assumption of fixed point.
\end{proof}

\textcolor{black}{Actually, in almost all simulations we conducted
for general B-MAC networks, Algorithm A or B with initial point randomly
generated from a continuous space is observed to converge to a stationary
point. Therefore, Algorithm I seldomly runs Algorithm S. Therefore,
Algorithm S is more of theoretic value for the convergence to a stationary
point and serve as a basis for the algorithm in the next subsection.}

\subsection{\label{sub:PR-PR1}Polite Water-filling based Algorithms for \textup{SPMP}}

\subsubsection{Algorithm PR for iTree Networks}

In stead of converting the rate constrained problem to the SINR constrained
problem, we can modify Algorithm S to directly solve \textbf{SPMP}
for iTree networks. In (\ref{eq:WFDlS}), the polite water-filling
level \textcolor{black}{$\nu_{i}$ is chosen such that the sum power
is unchanged when switching to the forward links. Because this polite
water-filling level will improve the rate of reverse link $i$, if
the initial solution is feasible, i.e., the rate of reverse link $i$
is no less than $\mathcal{I}_{i}^{0}$, we can reduce the polite water-filling
level $\nu_{i}$ to make the rate of reverse link $i$ equal to $\mathcal{I}_{i}^{0}$,
and thus reduce the sum power when switching to the forward links.
This results in an algorithm which monotonically decreases the sum
power once the solution becomes feasible. A simple algorithm in Table
\ref{tab:alg-W} referred to as }\textit{\textcolor{black}{Algorithm
W}}\textcolor{black}{{} can be used to calculate the polite water-filling
level $\nu_{i}$} to satisfy the rate constraint $\mathcal{I}_{i}^{0}$.

\begin{table}
\caption{\label{tab:alg-W}\textit{\emph{Algorithm W}} (Solving the Polite
Water-filling Level for the Rate Constraints)}

\centering{}\begin{tabular}{l}
\hline
{\small 1. Initialize the set of indices of the streams of link $i$
as}\tabularnewline
{\small \ \ \ $\Gamma=\left\{ 1,...,N_{i}\right\} $, where }\textcolor{black}{\small $N_{i}=\textrm{Rank}\left(\mathbf{H}_{i,i}\right)$}{\small .}\tabularnewline
{\small 2. Calculate $\nu_{i}=\left(e^{\mathcal{I}_{i}^{0}}/\Pi_{j\in\Gamma}\delta_{i,j}\right)^{1/\left|\Gamma\right|}$,
which is the solution of }\tabularnewline
{\small \ \ \ $\sum_{j\in\Gamma}\textrm{log}\left(1+\left(\nu_{i}-1/\delta_{i,j}\right)\delta_{i,j}\right)=\mathcal{I}_{i}^{0}$. }\tabularnewline
{\small \ \ \ Obtain $d_{i,j}=\nu_{i}-1/\delta_{i,j}$ for $j\in\Gamma$.}\tabularnewline
{\small 3. If $d_{i,j}\ge0$, $\forall j\in\Gamma$, stop. Otherwise,
for all $j\in\Gamma$, if $d_{i,j}<0$, }\tabularnewline
{\small \ \ \ fix it as $d_{i,j}=0$, delete $j$ from $\Gamma$.
Repeat step 2).}\tabularnewline
\end{tabular}
\end{table}

This modification of Algorithm S is referred to as Algorithm PR and
is summarized in table \ref{tab:table7}, where P stands for Polite
and R stands for Rate constraint. It can be shown that once Algorithm
PR finds a feasible solution, it will monotonically decrease the sum
power until it converges to a stationary point.

\begin{table}
\caption{\label{tab:table7}Algorithm PR (Solving\textbf{ SPMP} for iTree Networks)}

\centering{}\begin{tabular}{l}
\hline
{\small Initialize $\mathbf{\Sigma}_{1:L}$ such that $\mathbf{\Sigma}_{i}\succeq0,\forall i$.}\tabularnewline
\textbf{\small While}{\small{} not converge }\textbf{\small do}{\small{} }\tabularnewline
{\small $\;$}\textbf{\small For}{\small{} $i=1:L$}\tabularnewline
{\small $\;$$\;$1. Calculate $\hat{\mathbf{\Sigma}}_{1:i}$ by the
covariance transformation of $\mathbf{\Sigma}_{1:i}$}\tabularnewline
{\small $\;$$\;$$\;$$\;$$\;$applied to the $i^{\text{th}}$ sub-network.}\tabularnewline
{\small $\;$$\;$2. Obtain $\mathbf{\hat{\mathbf{\Sigma}}}_{i}^{'}$
by polite water-filling as in (\ref{eq:WFDlS}) and (\ref{eq:optsighsub}),
where}\tabularnewline
{\small $\;$$\;$$\;$$\;$$\;$the polite water-filling level $\nu_{i}$
is calculated by Algorithm W.}\tabularnewline
{\small $\;$$\;$3. Calculate $\mathbf{\Sigma}_{1:i}^{'}$ by the
covariance transformation of }\tabularnewline
{\small $\;$$\;$$\;$$\;$$\;$$\mathbf{\hat{\mathbf{\Sigma}}}_{1:i}^{'}=\left(\mathbf{\hat{\Sigma}}_{1},...,\mathbf{\hat{\Sigma}}_{i-1},\mathbf{\hat{\Sigma}}_{i}^{'}\right)$
applied to the $i^{\text{th}}$ sub-network.}\tabularnewline
{\small $\;$$\;$4. Update $\mathbf{\Sigma}_{1:L}$ as $\mathbf{\Sigma}_{1:L}=\left(\mathbf{\Sigma}_{1}^{'},...,\mathbf{\Sigma}_{i}^{'},\mathbf{\Sigma}_{i+1},...,\mathbf{\Sigma}_{L}\right)$.}\tabularnewline
{\small $\;$}\textbf{\small End}\tabularnewline
\textbf{\small End}\tabularnewline
\end{tabular}
\end{table}

\subsubsection{Algorithm PR1 for B-MAC Networks}

To get rid of the covariance transformation which is more complex
than the polite water-filling as will be discussed later, and to make
the algorithm work for general B-MAC networks, we obtain an intuitive
algorithm by imposing the polite water-filling structure iteratively.
It is referred to as Algorithm PR1 and is summarized in table \ref{tab:table6}.
It turns out that the algorithm can also be derived from the Lagrange
function and KKT conditions of the problem, where the Lagrange multipliers
are exactly the water-filling levels of the links. Adjusting the Lagrange
multipliers to satisfy the rate constraints is exactly what Algorithm
W does. It is clear that if the algorithm converges, the solution
of Algorithm PR1 satisfies the optimality conditions in Theorem \ref{thm:optimality},
and thus achieves a stationary point.

\begin{table}
\caption{\label{tab:table6}Algorithm PR1 (Solving\textbf{ SPMP} for B-MAC
Networks)}

\centering{}\begin{tabular}{l}
\hline
{\small Initialize $\mathbf{\hat{\Sigma}}_{1:L}$ and $\mathbf{\Omega}_{i}$'s
such that $\mathbf{\hat{\Sigma}}_{i}\succeq0,\forall i$ and $\mathbf{\Omega}_{i}=\mathbf{I},\forall i$.}\tabularnewline
\textbf{\small While}{\small{} not converge }\textbf{\small do}{\small{} }\tabularnewline
{\small $\;$1. Update in the forward links }\tabularnewline
{\small $\;$$\;$a. For $\forall i$, obtain $\hat{\mathbf{\Omega}}_{i}$
from $\hat{\mathbf{\Sigma}}_{1:L}$ using (\ref{eq:WhiteMRV}).}\tabularnewline
{\small $\;$$\;$$\;$$\;$$\;$$\;$Perform thin SVD $\mathbf{\Omega}_{i}^{-1/2}\mathbf{H}_{i,i}\hat{\mathbf{\Omega}}_{i}^{-1/2}=\mathbf{F}_{i}\mathbf{\Delta}_{i}\mathbf{G}_{i}^{\dagger}$. }\tabularnewline
{\small $\;$$\;$b. Obtain $\mathbf{D}_{i}$ by the water-filling
in (\ref{eq:WFDlS}), where}\tabularnewline
{\small $\;$$\;$$\;$$\;$$\;$$\;$the polite water-filling level
$\nu_{i}$ is calculated by Algorithm W.}\tabularnewline
{\small $\;$$\;$c. Update $\mathbf{\Sigma}_{i}$'s as}\tabularnewline
{\small $\;$$\;$$\;$$\;$$\;$$\;$$\;$$\;$$\mathbf{\Sigma}_{i}=\hat{\mathbf{\Omega}}_{i}^{-1/2}\mathbf{G}_{i}\mathbf{D}_{i}\mathbf{G}_{i}^{\dagger}\hat{\mathbf{\Omega}}_{i}^{-1/2},\forall i$.}\tabularnewline
{\small $\;$2. Update in the reverse links }\tabularnewline
{\small $\;\;$a. For $\forall i$, obtain $\mathbf{\Omega}_{i}$
from $\mathbf{\Sigma}_{1:L}$ using (\ref{eq:whiteMG}).}\tabularnewline
{\small $\;$$\;$$\;$$\;$$\;$$\;$Perform thin SVD $\mathbf{\Omega}_{i}^{-1/2}\mathbf{H}_{i,i}\hat{\mathbf{\Omega}}_{i}^{-1/2}=\mathbf{F}_{i}\mathbf{\Delta}_{i}\mathbf{G}_{i}^{\dagger}$. }\tabularnewline
{\small $\;$$\;$b. Obtain $\mathbf{D}_{i}$ by the water-filling
in (\ref{eq:WFDlS}), where}\tabularnewline
{\small $\;$$\;$$\;$$\;$$\;$$\;$the polite water-filling level
$\nu_{i}$ is calculated by Algorithm W.}\tabularnewline
{\small $\;$$\;$c. Update $\mathbf{\hat{\Sigma}}_{i}$'s as}\tabularnewline
{\small $\;$$\;$$\;$$\;$$\;$$\;$$\;$$\;$$\mathbf{\hat{\Sigma}}_{i}=\mathbf{\Omega}_{i}^{-1/2}\mathbf{F}_{i}\mathbf{D}_{i}\mathbf{F}_{i}^{\dagger}\mathbf{\Omega}_{i}^{-1/2},\forall i$.}\tabularnewline
\textbf{\small End}\tabularnewline
\end{tabular}
\end{table}

It is difficult to prove the convergence of Algorithm PR1. But the
intuition and all simulations we conducted strongly indicate fast
convergence.

\begin{remrk}
Algorithm PR1 can be used to solve the \textbf{FOP} by replacing constraints
$\mathcal{I}_{l}^{0}$ with $\alpha\mathcal{I}_{l}^{0}$ and searching
for $\alpha$ to satisfy the power constraint.
\end{remrk}

\begin{remrk}
An advantage of Algorithm PR1 is that it can be easily implemented
distributedly as will be shown in Section \ref{sub:Distributed-Implementation}.
Another advantage is that it has linear complexity in each iteration,
because the SVD for polite water-filling is performed over the matrices
whose dimensions are not increased with the number of desired data
links $L$. However, the complexity order of Algorithm B depends on
$L$. In each iteration, the complexity of calculating all the MMSE-SIC
receive vectors is still linear with respect to $L$. But to calculate
the transmit powers $\mathbf{p}$ and $\mathbf{q}$, we need to solve
two $\sum_{l=1}^{L}M_{l}$-dimensional linear equations, whose complexity
depends on the density and structure of the cross-talk matrix $\mathbf{\Psi}\left(\mathbf{T},\mathbf{R}\right)$.
In the worst case, the complexity order is $\mathcal{O}\left(L^{3}\right)$.
In other cases such as with triangular or sparse $\mathbf{\Psi}\left(\mathbf{T},\mathbf{R}\right)$,
the complexity is much lower. Fortunately, in practice, $\mathbf{\Psi}\left(\mathbf{T},\mathbf{R}\right)$
is usually sparse for a large wireless network because of path loss.%
{}
\end{remrk}

\subsection{\label{sub:Distributed-Implementation}Distributed Implementation
of Algorithm PR1}

\textcolor{black}{In a network, it is desirable to use distributed
optimization. The above centralized algorithms serve as the basis
for distributed design. Here, we design a distributed algorithm based
on Algorithm PR1 for time division duplex (TDD) networks.. To perform
the polite water-filling in Algorithm PR1, $T_{l}$ ($R_{l}$) only
needs to know the equivalent channel $\mathbf{\Omega}_{l}^{-1/2}\mathbf{H}_{l,l}\hat{\mathbf{\Omega}}_{l}^{-1/2}$,
where both $\mathbf{\hat{\Omega}}_{l}$ ($\mathbf{\Omega}_{l}$) and
$\mathbf{H}_{l,l}^{\dagger}\mathbf{\Omega}_{l}^{-1/2}$ ($\mathbf{H}_{l,l}\hat{\mathbf{\Omega}}_{l}^{-1/2}$)
can be obtained by pilot-aided estimation in the reverse (forward)
links in time division duplex (TDD) networks. In frequency division
duplex (FDD) networks, the equivalent channel needs to be calculated
from feedback. Thus, TDD system has an advantage.}

We assume block fading channel, where each block consists of a training
stage followed by a transmission stage. The training stage is further
divided into several rounds, where one round consists of a half round
of pilot aided estimation of $\mathbf{H}_{l,l}\hat{\mathbf{\Omega}}_{l}^{-1/2}$
and $\mathbf{\Omega}_{l}$ in the forward link and a half round of
pilot aided estimation of $\mathbf{H}_{l,l}^{\dagger}\mathbf{\Omega}_{l}^{-1/2}$
and $\mathbf{\hat{\Omega}}_{l}$ in the reverse link. Since Algorithm
PR1 achieves most of the benefit in very few iterations, the required
number of training rounds is small and can be as less as 2.5 rounds.
In the training stage, the $T_{l}$'s and $R_{l}$'s run a distributed
version of \textcolor{black}{Algorithm PR1} to solve \textbf{SPMP}.
At the end of the training stage, $T_{l}$'s use the latest $\mathbf{\Sigma}_{l}$'s
as the input covariance matrices for the transmission stage.

First, we describe the operation at each node.

\textit{Operation at }$T_{l}$:
\begin{itemize}
\item In the $\left(i-1\right)^{\text{th}}$ reverse \textcolor{black}{training}
round, \textcolor{black}{$T_{l}$ estimates the interference-plus-noise
covariance matrix $\mathbf{\hat{\Omega}}_{l}^{(i-1)}$ and the }effective\textcolor{black}{{}
channel $\mathbf{H}_{l,l}^{\dagger}\left(\mathbf{\Omega}_{l}^{(i-1)}\right)^{-1/2}$. }
\item \textcolor{black}{At the beginning of the $i^{\text{th}}$ forward
training round, $T_{l}$ calculates the input covariance matrix $\mathbf{\Sigma}_{l}^{(i)}$
by polite water-filling over the equivalent channel $\left(\mathbf{\Omega}_{l}^{(i-1)}\right)^{-1/2}\mathbf{H}_{l,l}\left(\mathbf{\hat{\Omega}}_{l}^{(i-1)}\right)^{-1/2}$
as in step 1 of Algorithm PR1. However, in practice, the transmit
power of $T_{l}$ can not exceed certain maximum value denoted by
$P_{T_{l}}^{\textrm{max}}$. If $\textrm{Tr}\left(\mathbf{\Sigma}_{l}^{(i)}\right)>P_{T_{l}}^{\textrm{max}}$,
}decrease the polite water-filling level $\nu_{l}$ obtained by Algorithm
W until \textcolor{black}{$\textrm{Tr}\left(\mathbf{\Sigma}_{l}^{(i)}\right)=P_{T_{l}}^{\textrm{max}}$}.\textcolor{black}{{}
Then $T_{l}$ transmits pilot signals in the $i^{\text{th}}$ forward
training round. }
\end{itemize}

\textit{Operation at }$R_{l}$:
\begin{itemize}
\item In the $i^{\text{th}}$ forward \textcolor{black}{training} round,
\textcolor{black}{$R_{l}$ estimates $\mathbf{\Omega}_{l}^{(i)}$
and $\mathbf{H}_{l,l}\left(\mathbf{\hat{\Omega}}_{l}^{(i-1)}\right)^{-1/2}$. }
\item \textcolor{black}{At the beginning of the $i^{th}$ reverse training
round, $R_{l}$ calculates $\hat{\mathbf{\Sigma}}_{l}^{(i)}$ by polite
water-filling over the equivalent channel $\left(\mathbf{\hat{\Omega}}_{l}^{(i-1)}\right)^{-1/2}\mathbf{H}_{l,l}^{\dagger}\left(\mathbf{\Omega}_{l}^{(i)}\right)^{1/2}$
as in step 2 of Algorithm PR1. Denote $P_{R_{l}}^{\textrm{max}}$
as the maximum transmit power of $R_{l}$. If $\textrm{Tr}\left(\hat{\mathbf{\Sigma}}_{l}^{(i)}\right)>P_{R_{l}}^{\textrm{max}}$,
}decrease the polite water-filling level $\nu_{l}$ obtained by Algorithm
W until \textcolor{black}{$\textrm{Tr}\left(\hat{\mathbf{\Sigma}}_{l}^{(i)}\right)=P_{R_{l}}^{\textrm{max}}$}.\textcolor{black}{{}
Then $R_{l}$ transmits pilot signals in the $i^{\text{th}}$ reverse
training round.}
\end{itemize}

Then, we summarize this distributed algorithm for calculating the
input covariance matrices in Table \textcolor{black}{\ref{tab:table9}
and refer to it as }\textit{\textcolor{black}{Algorithm PRD}}\textcolor{black}{.}

\begin{table}
\caption{\label{tab:table9}Algorithm PRD (Distributed Version of Algorithm
PR1) }

\centering{}\begin{tabular}{l}
\hline
{\small Initialize}\textcolor{black}{\small{} $i=1$ and}{\small{} $\mathbf{\Omega}_{l}^{(0)}=\mathbf{I},\:\mathbf{\hat{\Omega}}_{l}^{(0)}=\mathbf{I},\:\forall l$.}\tabularnewline
{\small 1. }\textcolor{black}{\small In the $i^{\textrm{th}}$ forward
training round, $T_{l}$ calculates $\mathbf{\Sigma}_{l}^{(i)}$ by
polite }\tabularnewline
{\small $\;\;$$\;\;$}\textcolor{black}{\small water-filling over
$\left(\mathbf{\Omega}_{l}^{(i-1)}\right)^{-1/2}\mathbf{H}_{l,l}\left(\mathbf{\hat{\Omega}}_{l}^{(i-1)}\right)^{-1/2}$
and transmits}\tabularnewline
{\small $\;\;$$\;\;$}\textcolor{black}{\small pilot signals. (In
the $1^{\text{st}}$ round, since the equivalent channel is}\tabularnewline
{\small $\;\;$$\;\;$}\textcolor{black}{\small unknown, $\mathbf{\Sigma}_{l}^{(1)}$
can be chosen randomly.) }\tabularnewline
{\small $\;\;$$\;\;$}\textcolor{black}{\small $R_{l}$ estimates
$\mathbf{\Omega}_{l}^{(i)}$ and $\mathbf{H}_{l,l}\left(\mathbf{\hat{\Omega}}_{l}^{(i-1)}\right)^{-1/2}$.}\tabularnewline
{\small 2. }\textcolor{black}{\small In the $i^{\textrm{th}}$ reverse
training round, $R_{l}$ calculates $\mathbf{\hat{\Sigma}}_{l}^{(i)}$
by polite}\tabularnewline
{\small $\;\;$$\;\;$}\textcolor{black}{\small water-filling over
$\left(\mathbf{\hat{\Omega}}_{l}^{(i-1)}\right)^{-1/2}\mathbf{H}_{l,l}^{\dagger}\left(\mathbf{\Omega}_{l}^{(i)}\right)^{-1/2}$
and transmits}\tabularnewline
{\small $\;\;$$\;\;$}\textcolor{black}{\small pilot signals. $T_{l}$
estimates $\mathbf{\hat{\Omega}}_{l}^{(i)}$ and $\mathbf{H}_{l,l}^{\dagger}\left(\mathbf{\Omega}_{l}^{(i)}\right)^{-1/2}$.}\tabularnewline
{\small 3.}\textcolor{black}{\small{} Let $i=i+1$ and enter the next
round. Keep updating $\mathbf{\Sigma}_{l}^{(i)}$ and }\tabularnewline
{\small $\;\;$$\;\;$}\textcolor{black}{\small $\hat{\mathbf{\Sigma}}_{l}^{(i)}$
until the end of the training stage.}\tabularnewline
\end{tabular}
\end{table}

Note that one training round is almost the same as one iteration in
Algorithm PR1 except that the transmit power of each node is constrained
not to exceed a maximum value. Therefore, after convergence, if the
transmit power of each node does not exceed its maximum transmit power\textcolor{black}{,
Algorithm PRD achieves nearly the same performance as Algorithm PR1.
However in practice, it is desirable to allocate as less number of
training rounds as possible. Then, Algorithm} \textcolor{black}{PRD
may not be able to fully converge and thus the target rates may not
be satisfied. Fortunately, it is observed in the simulations that
Algorithm PRD converges very fast at the first few rounds.}

\subsection{\label{sub:OrderOptimization}Optimization of the Encoding and Decoding
Order}

For the special case of using DPC and SIC to cancel interference,
the coupling matrix $\mathbf{\Phi}\left(\pi\right)$ is a function
of the encoding and decoding order $\pi$. In this section, we partially
characterize the optimal encoding and decoding order $\pi$ and design
an algorithm to find a good $\pi$. It is in general a difficult problem
because the encoding and decoding orders at the BC transmitters and
the MAC receivers need to be solved jointly. However, for \textcolor{black}{each
}\textit{\textcolor{black}{Pseudo }}\textit{BC}/\textit{\textcolor{black}{Pseudo
}}\textit{MAC} defined below, the optimal $\pi$ is characterized
in Theorem \ref{thm:optorder}.
\begin{definitn}
In a B-MAC network, a sub-network with a single physical transmitter
and a set of associated links whose indices forms a set $\mathcal{L}_{\textrm{B}}$
is said to be a \textit{\textcolor{black}{Pseudo }}\textit{BC} if
either all links in $\mathcal{L}_{\textrm{B}}$ completely interfere
with a link $k$ or all links in $\mathcal{L}_{\textrm{B}}$ do not
interfere with a link $k$, $\forall k\in\mathcal{L}_{\textrm{B}}^{C}$,
i.e., the columns with indices in $\mathcal{L}_{\textrm{B}}$ of the
coupling matrix $\mathbf{\Phi}$, excluding rows in $\mathcal{L}_{\textrm{B}}$,
are the same. A sub-network with a single physical receiver and a
set of associated links, whose indices forms a set $\mathcal{L}_{\textrm{M}}$,
is said to be a \textit{\textcolor{black}{Pseudo }}\textit{MAC} if
either all links in $\mathcal{L}_{\textrm{M}}$ are completely interfered
by a link $k$ or all links in $\mathcal{L}_{\textrm{M}}$ are not
interfered by a link $k$, $\forall k\in\mathcal{L}_{\textrm{M}}^{C}$,
i.e., the rows with indices in $\mathcal{L}_{\textrm{M}}$ of the
coupling matrix $\mathbf{\Phi}$, excluding columns in $\mathcal{L}_{\textrm{M}}$,
are the same.
\end{definitn}

\begin{example}
\label{exa:pseudoBM}In Fig. \ref{fig:sysFig1}, suppose $\mathbf{x}_{1}$
is encoded after $\mathbf{x}_{2}$ and $\mathbf{x}_{4}$ is the last
one to be decoded at the second physical receiver. Then link 2 and
link 3 forms a pseudo MAC because they belong to the same physical
receiver and suffer the same interference from $\mathbf{x}_{1}$,
$\mathbf{x}_{4}$ and $\mathbf{x}_{5}$. Similarly, link 4 and link
5 forms a pseudo BC. \end{example}
\begin{remrk}
The \textcolor{black}{pseudo }BC and \textcolor{black}{pseudo }MAC
were first introduced in \cite{Liu_IT10s_Duality_BMAC} where the
optimal encoding/decoding order of \textcolor{black}{each pseudo }BC/\textcolor{black}{pseudo
}MAC for the weighted sum-rate maximization problem (\textbf{WSRMP})
is shown to be consistent with the optimal order of an individual
BC or MAC. This is because \textcolor{black}{a Pseudo }BC\textcolor{black}{{}
or a Pseudo }MAC can be isolated from the B-MAC network to form an
individual BC or MAC. Similar results can also be obtained for \textbf{FOP}
and \textbf{SPMP} as shown below.
\end{remrk}

First, we need to modify the \textbf{FOP} and \textbf{SPMP} to include
encoding and decoding order optimization and time sharing. Let $\mathbf{\Xi}$
be a set of valid coupling matrices produced by proper encoding and
decoding orders of a B-MAC network. Define a larger achievable region
\begin{eqnarray*}
\mathcal{R}(P_{T}) & = & \text{Convex Closure}\bigcup_{\mathbf{\Phi}\in\mathbf{\Xi}}\mathcal{R}_{\mathbf{\Phi}}\left(P_{T}\right).\end{eqnarray*}
 The modified optimization problems are

\textcolor{black}{\begin{eqnarray}
\textrm{\textbf{OFOP}:} & \underset{\mathbf{r}\in\mathcal{R}(P_{T})}{\textrm{max}} & \left(\min_{1\leq l\leq L}\frac{r_{l}}{\mathcal{I}_{l}^{0}}\right),\label{eq:P1-timesharing}\end{eqnarray}
 and \begin{eqnarray}
\textrm{\textbf{OSPMP}:} & \underset{P_{T}}{\textrm{min}} & P_{T}\label{eq:P2-timesharing}\\
 & \textrm{s.t.} & [\mathcal{I}_{l}^{0}]_{l=1,...,L}\in\mathcal{R}(P_{T}).\nonumber \end{eqnarray}
}

The following lemma is a consequence of that all outer boundary points
of $\mathcal{R}(P_{T})$ are Pareto optimal and can be proved by contradiction.
\begin{lemma}
\label{lem:optodr}The\textcolor{black}{{} optimal solution of }\textbf{\textcolor{black}{OFOP}}\textcolor{black}{{}
or }\textbf{\textcolor{black}{OSPMP}}\textcolor{black}{{} is the intersection
of the ray $\alpha\left[\mathcal{I}_{1}^{0},\cdots,\mathcal{I}_{L}^{0}\right],\ \alpha>0$,
and the boundary of }$\mathcal{R}(P_{T})$\textcolor{black}{, where
for }\textbf{\textcolor{black}{OSPMP}}\textcolor{black}{, the sum
power $P_{T}$ is chosen such that the intersection is at $\left[\mathcal{I}_{1}^{0},\cdots,\mathcal{I}_{L}^{0}\right]$.}
\end{lemma}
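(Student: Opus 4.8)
The plan is to derive the lemma from three structural properties of $\mathcal{R}(P_T)$: (i) it is convex (by construction, as a convex closure) and compact (each $\mathcal{R}_{\mathbf{\Phi}}(P_T)$ is bounded since the power budget bounds every $\mathcal{I}_l$, so the closed convex hull of their union is compact); (ii) it is \emph{comprehensive}, i.e., if $\mathbf{r}\in\mathcal{R}(P_T)$ and $\mathbf{0}\le\mathbf{r}'\le\mathbf{r}$ componentwise then $\mathbf{r}'\in\mathcal{R}(P_T)$ --- this is immediate from the inequality $r_l\le\mathcal{I}_l$ in the definition (\ref{eq:Ratereg1G}) and is exactly the statement that every outer boundary point of $\mathcal{R}(P_T)$ is Pareto optimal; and (iii) $\mathcal{R}(P_T)$ is nondecreasing and lower semicontinuous in $P_T$, with $\mathcal{R}(0)=\{\mathbf{0}\}$ and $\mathcal{R}(P_T)$ containing a neighborhood of the origin for $P_T>0$. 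Given (i)--(iii), the ray $\{\alpha[\mathcal{I}_1^0,\cdots,\mathcal{I}_L^0]:\alpha>0\}$ from the origin meets the boundary of $\mathcal{R}(P_T)$ at exactly one point, namely $\alpha_{\max}[\mathcal{I}_l^0]_l$ where $\alpha_{\max}=\max\{\alpha>0:\alpha[\mathcal{I}_l^0]_l\in\mathcal{R}(P_T)\}$ (the maximum is attained and finite by compactness, positive since the origin is interior).

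For \textbf{OFOP} (with $P_T$ fixed): the point $\mathbf{r}^{\star}=\alpha_{\max}[\mathcal{I}_l^0]_l$ is feasible and achieves objective value $\min_l r^{\star}_l/\mathcal{I}_l^0=\alpha_{\max}$, so the optimum is at least $\alpha_{\max}$. Conversely, if a feasible $\mathbf{r}$ gave $\min_l r_l/\mathcal{I}_l^0=\beta>\alpha_{\max}$, then $r_l\ge\beta\mathcal{I}_l^0$ for all $l$, and by comprehensiveness $\beta[\mathcal{I}_l^0]_l\in\mathcal{R}(P_T)$, contradicting the definition of $\alpha_{\max}$. Hence the optimal value equals $\alpha_{\max}$ and is attained at the boundary point $\mathbf{r}^{\star}$; moreover any optimal $\mathbf{r}$ satisfies $\mathbf{r}\ge\mathbf{r}^{\star}$ and cannot strictly dominate $\mathbf{r}^{\star}$ in every coordinate (otherwise $\mathbf{r}^{\star}$ would be interior by comprehensiveness), so the relevant optimal rate point is exactly the ray--boundary intersection.

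For \textbf{OSPMP}: let $P_T^{\star}$ be the optimal (minimum) power, so $[\mathcal{I}_l^0]_l\in\mathcal{R}(P_T^{\star})$. It suffices to show $[\mathcal{I}_l^0]_l$ lies on the boundary of $\mathcal{R}(P_T^{\star})$, for then it is automatically the unique intersection of the ray (at $\alpha=1$) with that boundary, and by construction $P_T^{\star}$ is precisely the power making that intersection land on $[\mathcal{I}_l^0]_l$. If instead $[\mathcal{I}_l^0]_l$ were in the interior of $\mathcal{R}(P_T^{\star})$, then by the monotonicity and lower semicontinuity in property (iii) there would exist $P_T<P_T^{\star}$ with $[\mathcal{I}_l^0]_l\in\mathcal{R}(P_T)$, contradicting minimality of $P_T^{\star}$. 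This proves the claim.

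I expect the main obstacle to be the rigorous justification of property (iii) --- that the achievable region does not jump as $P_T\downarrow P_T^{\star}$, i.e., that an interior point at $P_T^{\star}$ stays achievable at slightly smaller power. A clean route is: take a decomposition $\mathbf{\Sigma}_{1:L}$ (and coupling matrix, and time-sharing weights) achieving a point of $\mathcal{R}(P_T^{\star})$ and scale every covariance by $t\in(0,1]$; each $\mathcal{I}_l$ then decreases continuously to $0$ as $t\to 0$, so the scaled-down rate vectors form a continuum joining that point to $\mathbf{0}$, and combining this with convexity and comprehensiveness yields the needed semicontinuity of $\mathcal{R}(\cdot)$. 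Properties (i) and (ii) are essentially built into the definitions, so once (iii) is in hand the two contradiction arguments above are routine.
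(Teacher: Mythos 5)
Your proof is correct and follows essentially the same route the paper takes: the paper disposes of this lemma in one sentence by invoking that all outer boundary points of $\mathcal{R}(P_T)$ are Pareto optimal (your comprehensiveness property) and arguing by contradiction, which is exactly your argument for \textbf{OFOP} and for placing $[\mathcal{I}_l^0]_l$ on the boundary in \textbf{OSPMP}. Your write-up merely supplies the details the paper leaves implicit, in particular the monotonicity/semicontinuity of $\mathcal{R}(\cdot)$ in $P_T$ needed for the \textbf{OSPMP} direction; the only nit is that the origin is interior to $\mathcal{R}(P_T)$ only relative to $\mathbb{R}_+^L$, which suffices for $\alpha_{\max}>0$.
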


\textcolor{black}{The following theorem }characterizes\textcolor{black}{{}
the optimal }encoding and decoding order\textcolor{black}{{} for those
boundary points of }$\mathcal{R}(P_{T})$\textcolor{black}{{} that can
be achieved without time sharing and by DPC and SIC. The sufficiency
part only holds for MAC and BC.}
\begin{thm}
\textcolor{black}{\label{thm:optorder}}Necessity: If the input covariance
matrices $\mathbf{\tilde{\Sigma}}_{1:L}$ and\textcolor{black}{{} the
encoding and decoding order} $\tilde{\pi}$ for a valid coupling matrix
achieves the\textcolor{black}{{} optimum of }\textbf{OFOP} and \textbf{OSPMP}\textcolor{black}{{}
without time sharing, they must satisfy the following }necessary\textcolor{black}{{}
conditions:}
\begin{enumerate}
\item $\mathbf{\tilde{\Sigma}}_{1:L}$\textcolor{black}{{} satisfies the optimality
conditions in Theorem \ref{thm:optimality}. }
\item If there exists a pseudo BC (pseudo MAC) in the B-MAC network, its
optimal encoding (decoding) order $\tilde{\pi}$ satisfies that, the
signal of the link with the $n^{\text{th}}$ largest \textcolor{black}{(}smallest\textcolor{black}{)
polite water-filling level} is the $n^{\text{th}}$ one to be encoded
(decoded).
\end{enumerate}
Sufficiency: In MAC or BC, if \textcolor{black}{certain $\mathbf{\tilde{\Sigma}}_{1:L}$
and }$\tilde{\pi}$\textcolor{black}{{} satisfy the above conditions,
then $\tilde{\mathbf{\Sigma}}_{1:L}$ and }$\tilde{\pi}$\textcolor{black}{{}
achieves the global optimum of }\textbf{\textcolor{black}{OFOP}}\textcolor{black}{{}
or }\textbf{\textcolor{black}{OSPMP}}\textcolor{black}{.}
\end{thm}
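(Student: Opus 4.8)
\emph{Proof plan.} The plan is to reduce the order question to an isolated BC (resp.\ MAC) and then to quote the encoding/decoding-order characterization already established for the weighted sum-rate problem (\textbf{WSRMP}) in \cite{Liu_IT10s_Duality_BMAC}, using Theorem \ref{thm:optimality} to match the \textbf{WSRMP} weights with the polite water-filling levels. Throughout I would invoke Lemma \ref{lem:optodr} so that \textbf{OFOP} and \textbf{OSPMP} are handled uniformly as reaching the boundary point of $\mathcal{R}(P_{T})$ along the ray $\alpha[\mathcal{I}_{1}^{0},\dots,\mathcal{I}_{L}^{0}]$, $\alpha>0$. For Condition~1, observe that if $(\tilde{\mathbf{\Sigma}}_{1:L},\tilde{\pi})$ attains that boundary point \emph{without time sharing}, the point lies in $\mathcal{R}_{\mathbf{\Phi}(\tilde{\pi})}(P_{T})\subseteq\mathcal{R}(P_{T})$ and, being on the boundary of the larger set, is a Pareto rate point of $\mathcal{R}_{\mathbf{\Phi}(\tilde{\pi})}(P_{T})$; hence $\tilde{\mathbf{\Sigma}}_{1:L}$ solves \textbf{FOPa}/\textbf{SPMP} for the fixed coupling matrix $\mathbf{\Phi}(\tilde{\pi})$ and Theorem \ref{thm:optimality} supplies exactly the claimed conditions, in particular the polite water-filling structure of Definition \ref{def:DefGWF} with some levels $\tilde{\nu}_{1},\dots,\tilde{\nu}_{L}$.

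\emph{The order (Condition~2, necessity).} This is the core step. Take a pseudo BC on the index set $\mathcal{L}_{\textrm{B}}$. Because the columns of $\mathbf{\Phi}$ indexed by $\mathcal{L}_{\textrm{B}}$, restricted to rows outside $\mathcal{L}_{\textrm{B}}$, are identical, fixing $\tilde{\mathbf{\Sigma}}_{k}$ for all $k\notin\mathcal{L}_{\textrm{B}}$ turns the sub-network on $\mathcal{L}_{\textrm{B}}$ into a standalone MIMO BC with the single sum-power constraint $\sum_{l\in\mathcal{L}_{\textrm{B}}}\textrm{Tr}(\mathbf{\Sigma}_{l})\le P_{T}-\sum_{k\notin\mathcal{L}_{\textrm{B}}}\textrm{Tr}(\tilde{\mathbf{\Sigma}}_{k})$ and colored noise covariances $\mathbf{W}_{l}=\mathbf{I}+\sum_{k\notin\mathcal{L}_{\textrm{B}}}\mathbf{\Phi}_{l,k}\mathbf{H}_{l,k}\tilde{\mathbf{\Sigma}}_{k}\mathbf{H}_{l,k}^{\dagger}$, i.e.\ a network of the form (\ref{eq:net-color-linear-constraint}); moreover every external link either sees all of $\mathcal{L}_{\textrm{B}}$ as interference or none of it, so varying the covariances inside $\mathcal{L}_{\textrm{B}}$ (with their power sum held fixed) changes neither the interference they create nor the interference they receive from the rest of the network. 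I would then argue that $\{\tilde{\mathbf{\Sigma}}_{l}\}_{l\in\mathcal{L}_{\textrm{B}}}$ must be Pareto optimal for this standalone BC: otherwise one could raise every $\mathcal{I}_{l}$, $l\in\mathcal{L}_{\textrm{B}}$, without lowering any link outside, and then (for \textbf{OFOP}) rescale all powers up, or (for \textbf{OSPMP}) rescale the $\mathcal{L}_{\textrm{B}}$ powers down, contradicting optimality along the ray. By Lemma~9 in \cite{Liu_IT10s_Duality_BMAC} (as used in Section \ref{sub:itree}), the restriction of the covariance transformation to $\mathcal{L}_{\textrm{B}}$ is the covariance transformation \emph{inside} this standalone BC, so $\{\tilde{\nu}_{l}\}_{l\in\mathcal{L}_{\textrm{B}}}$ are also its polite water-filling levels. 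Now the sufficiency clause of Theorem \ref{thm:optimality} identifies, for a standalone BC, these levels with the weights of a \textbf{WSRMP} whose optimum is the same Pareto point, and the individual-BC order characterization of \cite{Liu_IT10s_Duality_BMAC} forces the encoding order to be the one stated, the link with the $n$-th largest $\tilde{\nu}_{l}$ encoded $n$-th. The pseudo MAC case is symmetric: transpose $\mathbf{\Phi}$, use Lemma \ref{lem:RevInf} to reverse the interference relation, and read off the decoding order in increasing order of $\tilde{\nu}_{l}$.

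\emph{Sufficiency for MAC/BC.} Restricted to an actual MAC or BC, $\mathcal{R}(P_{T})$ is the DPC/SIC capacity region and is convex. Given $(\tilde{\mathbf{\Sigma}}_{1:L},\tilde{\pi})$ satisfying Conditions~1 and~2: for a MAC the weighted sum rate $\sum_{l=1}^{L}\tilde{\nu}_{l}\mathcal{I}_{l}$ is concave in the covariances, and for a BC it is concave in the dual-MAC covariances via Theorem \ref{thm:linear-color-dual}; hence the sufficiency clause of Theorem \ref{thm:optimality} makes $\tilde{\mathbf{\Sigma}}_{1:L}$ the global optimum of \textbf{FOPa}/\textbf{SPMP} for the coupling matrix $\mathbf{\Phi}(\tilde{\pi})$, i.e.\ on the boundary of $\mathcal{R}_{\mathbf{\Phi}(\tilde{\pi})}(P_{T})$ along the given ray. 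Condition~2, together with the sufficiency half of the individual-MAC/BC order result of \cite{Liu_IT10s_Duality_BMAC}, shows that $\mathbf{\Phi}(\tilde{\pi})$ is the order whose region contains the dominating boundary point; hence this point lies on the boundary of the whole $\mathcal{R}(P_{T})$, and by Lemma \ref{lem:optodr} it is the optimum of \textbf{OFOP}/\textbf{OSPMP}.

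\emph{Expected obstacle.} The delicate part is the isolation step of Condition~2: carefully checking that the pseudo-BC (pseudo-MAC) column (row) condition genuinely decouples the sub-network into a standalone BC (MAC) with colored noise and one linear constraint, that Pareto optimality is inherited by the restriction, and that the polite water-filling levels survive the restriction unchanged; this is then followed by the bookkeeping that transfers the \textbf{WSRMP} order characterization to the present levels. I would also assume the $\tilde{\nu}_{l}$ within a given pseudo BC/MAC are distinct so that ``$n$-th largest/smallest'' is well defined; ties merely enlarge the set of optimal orders.
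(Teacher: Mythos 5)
Your treatment of Condition~1 and of the sufficiency direction tracks the paper's proof closely and is fine. The genuine gap is in the necessity of Condition~2. You establish that the restriction of $\tilde{\mathbf{\Sigma}}_{1:L}$ to a pseudo BC is a Pareto point with polite water-filling levels $\tilde{\nu}_{l}$, and then invoke ``the sufficiency clause of Theorem \ref{thm:optimality}'' to conclude that this point is the optimum of the \textbf{WSRMP} with weights $\tilde{\nu}_{l}$, from which the order would follow. But that clause has the hypothesis that $\sum_{l}\tilde{\nu}_{l}\mathcal{I}_{l}\left(\mathbf{\Sigma}_{1:L},\mathbf{\Phi}\left(\tilde{\pi}\right)\right)$ is concave, and for a BC/MAC with a \emph{fixed} encoding/decoding order this concavity holds precisely when the order is already consistent with the weights --- which is the very statement you are trying to prove. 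The argument is circular. The paper avoids this by going the other way: by Lemma \ref{lem:optodr} the optimum is an outer boundary point of the \emph{convex closure} $\mathcal{R}(P_{T})$, hence (supporting hyperplane, no concavity of the rate functions needed) it is the solution of a \textbf{WSRMP} over \emph{all} valid orders for some weight vector $\left[w_{l}\right]$; the cited result in \cite{Liu_IT10s_Duality_BMAC} forces the polite water-filling levels to equal $\nu w_{l}$, and Theorem 9 there then dictates the pseudo-BC/pseudo-MAC order in terms of those weights. Note also that optimality over all orders, not merely over covariances for the fixed $\tilde{\pi}$, is what lets one conclude anything about $\tilde{\pi}$ at all.

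A secondary problem is the isolation step itself. For a pseudo BC all links in $\mathcal{L}_{\textrm{B}}$ share one physical transmitter, so an external link that is interfered by the group sees $\mathbf{H}_{k,T}\bigl(\sum_{l\in\mathcal{L}_{\textrm{B}}}\mathbf{\Sigma}_{l}\bigr)\mathbf{H}_{k,T}^{\dagger}$; holding only the power sum $\sum_{l\in\mathcal{L}_{\textrm{B}}}\textrm{Tr}\left(\mathbf{\Sigma}_{l}\right)$ fixed does \emph{not} keep this interference fixed. So your standalone BC with a single sum-power constraint is not an interference-preserving reduction, and the claim that one can improve all rates in $\mathcal{L}_{\textrm{B}}$ ``without lowering any link outside'' does not follow. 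To repair the proof, drop the isolation and follow the paper's route through the supporting-hyperplane characterization of the boundary of $\mathcal{R}(P_{T})$ together with Theorem 9 of \cite{Liu_IT10s_Duality_BMAC}, which handles the pseudo BC/MAC order inside the full network directly.
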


\begin{proof}
The first necessary condition follows %
{} from Theorem \textcolor{black}{\ref{thm:optimality}}. Then \textcolor{black}{the
second }necessary\textcolor{black}{{} condition} follows from the following
two facts and Lemma \ref{lem:optodr}. 1) Any outer boundary point
of $\mathcal{R}(P_{T})$ must be the solution of a \textbf{WSRMP}
with certain weight vector $\left[w_{l}\right]_{,l=1,...,L}$. It
is proved in \cite{Liu_IT10s_Duality_BMAC} \textcolor{black}{that
the optimal input covariance matrices }maximizing\textcolor{black}{{}
}the weighted sum-rate\textcolor{black}{{} must satisfy the polite water-filling
structure and the polite water-filling levels are given by $\nu w_{l}$'s
for some constant $\nu>0$; 2) By Theorem 9 in }\cite{Liu_IT10s_Duality_BMAC},
the weighted sum-rate optimal encoding and decoding order of each
\textcolor{black}{Pseudo }BC (\textcolor{black}{Pseudo }MAC) is that
the signal of the link with the $n^{\text{th}}$ largest (smallest)
\textcolor{black}{weight} is the $n^{\text{th}}$ one to be encoded
(decoded). For the sufficiency part, suppose\textcolor{black}{{} certain
$\mathbf{\tilde{\Sigma}}_{1:L}$ and }$\tilde{\pi}$ satisfy the two
conditions in \textcolor{black}{Theorem }\ref{thm:optorder}. For
MAC,\textcolor{black}{{} }satisfying the second condition implies that
the\textcolor{black}{{} weighted sum rate $\sum_{l}^{L}\tilde{\nu}_{l}\mathcal{I}_{l}\left(\mathbf{\Sigma}_{1:L},\mathbf{\Phi}\left(\pi\right)\right)$
is a concave function of $\mathbf{\Sigma}_{1:L}$, where $\tilde{\nu}_{l}$'s
are the polite water-filling levels of $\mathbf{\tilde{\Sigma}}_{1:L}$.
Then from the proof for the }sufficiency part of Theorem \textcolor{black}{\ref{thm:optimality},
$\mathbf{\tilde{\Sigma}}_{1:L}$ and }$\tilde{\pi}$ maximizes the
weighted sum-rate \textcolor{black}{$\sum_{l}^{L}\tilde{\nu}_{l}\mathcal{I}_{l}\left(\mathbf{\Sigma}_{1:L},\mathbf{\Phi}\left(\pi\right)\right)$
and thus achieves a }boundary point of the capacity region of MAC.
By Lemma \ref{lem:optodr}, \textcolor{black}{they achieve the global
optimality of }\textbf{\textcolor{black}{OFOP}}\textcolor{black}{{}
or }\textbf{\textcolor{black}{OSPMP}}\textcolor{black}{. The }sufficiency
part for BC follows from the rate duality.
\end{proof}

The proof of Theorem \ref{thm:optorder} suggests an algorithm to
improve the encoding/decoding order for each pseudo BC and pseudo
MAC by simply updating the encoding and decoding order according to
current polite water-filling levels. We refer to it as\emph{ }Algorithm
O and summarize it in Table \ref{tab:table8}.

\begin{table}
\caption{\label{tab:table8}Algorithm O (Improving the Encoding and Decoding
Order)}

\centering{}\begin{tabular}{l}
\hline
{\small Initialize the encoding and decoding order $\pi$ such that
$\mathbf{\Phi}\left(\pi\right)$ is valid.}\tabularnewline
{\small 1. Solve the }\textbf{\small FOP}{\small{} or }\textbf{\small SPMP}{\small{}
with fixed $\mathbf{\Phi}\left(\pi\right)$.}\tabularnewline
{\small 2. Calculate the polite water-filling levels }\textcolor{black}{\small from
the solution of }\textbf{\small FOP }{\small or}\tabularnewline
{\small $\;\;$$\;\;$}\textbf{\small SPMP}{\small{} obtained in step
1.}\tabularnewline
{\small 3. For each }\textcolor{black}{\small Pseudo }{\small BC }\textcolor{black}{\small and
Pseudo }{\small MAC with the polite water-filling }\tabularnewline
{\small $\;\;$$\;\;$levels}\textcolor{black}{\small{} obtained in
step 2, }{\small if $\pi$ satisfies Theorem \ref{thm:optorder},
output $\pi$ and}\tabularnewline
{\small $\;\;$$\;\;$stop. Otherwise, set $\pi$ to satisfy Theorem
\ref{thm:optorder} and return to step 1.}\tabularnewline
\end{tabular}
\end{table}

\begin{remrk}
For the special cases of MAC and BC, if Algorithm O converges, the
solution gives the optimal order by \textcolor{black}{Theorem }\ref{thm:optorder}.
\end{remrk}

\begin{remrk}
Optimal and sub-optimal algorithms for the special case of solving
\textbf{OSPMP} for SIMO MAC/MISO BC have been proposed in \cite{Yu_07TOC_MACorder}.
The optimal algorithm is much more complex than Algorithm O as it
involves several inner and outer iterations. The difference between
Algorithm O and the sub-optimal algorithm are as follows. 1) The sub-optimal
algorithm works for SIMO MAC, avoiding the calculation of beamforming
matrices, while Algorithm O works for MIMO cases; 2) To find the encoding/decoding
order, after obtain the optimal solution of \textbf{SPMP} with fixed
$\pi$ in step 1, the sub-optimal algorithm in \cite{Yu_07TOC_MACorder}
needs to solve an equation to obtain a weight vector $\left[w_{l}\right]_{,l=1,...,L}$
\textcolor{black}{such that $\left[\mathcal{I}_{1}^{0},\cdots,\mathcal{I}_{L}^{0}\right]$
is the optimal solution of the }\textbf{\textcolor{black}{WSRMP}}\textcolor{black}{{}
with the weights }$w_{l}$'s\textcolor{black}{{} and fixed }$\pi$\textcolor{black}{,
while in Algorithm O, the weight vector} is directly given by the
polite water-filling levels. Same as the algorithm in \cite{Yu_07TOC_MACorder},
it is possible for Algorithm O to cycle through a finite number of
orders. In this case, we can simply choose the best one from the finite
number of orders. For MAC and BC, it is observed that the reason of
non-convergence is that the corresponding boundary point can not be
achieved without time-sharing.
\end{remrk}

\begin{remrk}
Algorithms to solve \textbf{OFOP }and\textbf{ OSPMP }for MIMO BC have
been proposed in \cite{Nihar_06_ISIT_SymCapMIMODown} where the problems
are converted to the weighted sum-rate maximization problem. The algorithms
need to repeatedly solve a weighted sum-rate maximization problem
for weight vectors searched by the ellipsoid algorithm. The weighted
sum-rate maximization problem is solved by the steepest ascent algorithm.
For \textbf{OSPMP}, additional search for the total power is also
needed. In contrast, Algorithm O solves the problems directly in one
sequence of iterations, replaces the steepest ascent algorithm with
SINR based algorithms or polite water-filling, only searches weight
vector suggested by the polite water-filling level, resulting in much
lower complexity.
\end{remrk}

\begin{remrk}
\textcolor{black}{For BC and MAC, we can also use a simpler encoding/decoding
order in \cite{Liping_2007_Globecom_MEB} which is shown to be asymptotically
optimal for MAC (and thus also for BC by duality) when the target
rate of each user is the same and the number of users is large. At
each transmitter (receiver), the signal of the link with its channel
matrix having the $n^{\text{th}}$ smallest (largest) dominant singular
value is the $n^{\text{th}}$ one to be encoded (decoded). For convenience,
we refer to this order as MEB (maximum eigenmode beamforming) order.
It can be shown that MEB order is optimal for SISO MAC and BC.}
\end{remrk}

\section{\textcolor{black}{Simulation Results\label{sec:Simulation-Results}}}

Simulations are used to verify the performance of the proposed algorithms.\textcolor{black}{{}
Let each transmitter and receiver are equipped with $L_{T}$ and $L_{R}$
antennas respectively. }DPC and SIC are employed for interference
cancellation. Block fading channel is assumed and the channel matrices
are independently generated by $\mathbf{H}_{l,k}=\sqrt{g_{l,k}}\mathbf{H}_{l,k}^{\text{(W)}},\forall k,l$,
where $\mathbf{H}_{l,k}^{\text{(W)}}$ has zero-mean i.i.d. Gaussian
entries with unit variance; and $g_{l,k},\forall k,l$ is a positive
number and the value is set as $0\textrm{dB}$ except for Fig. \ref{fig:Fig3}
and Fig. \ref{fig:Fig4}. In \textcolor{black}{Fig. \ref{fig:Fig5}-\ref{fig:Fig6}},
each simulation is averaged over 100 random channel realizations\textcolor{black}{,
while in other figures, a single channel realization is considered.
In all simulations we conducted, Algorithm A, B, PR and PR1 are observed
to converge to a stationary point of the corresponding problem. We
call }\textit{\textcolor{black}{pseudo global optimum}}\textcolor{black}{{}
}the best solution among many solutions obtained by running the algorithm
for many times with different initial points and with the encoding/decoding
order obtained by Algorithm O. For the plots with iteration numbers,
we show rates or power after $x.5$ iterations/rounds, where the last
0.5 iteration/round is the forward link update.

\begin{figure}
\begin{centering}
\textsf{\includegraphics[clip,width=95mm]{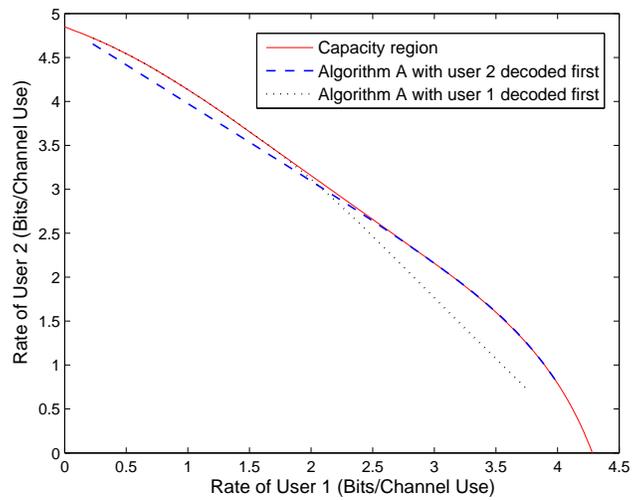}}
\par\end{centering}

\caption{\label{fig:Fig2}\textcolor{black}{Achieved rate region boundaries}
of a two-user MAC}

\end{figure}

\begin{figure}
\begin{centering}
\textsf{\includegraphics[clip,width=95mm]{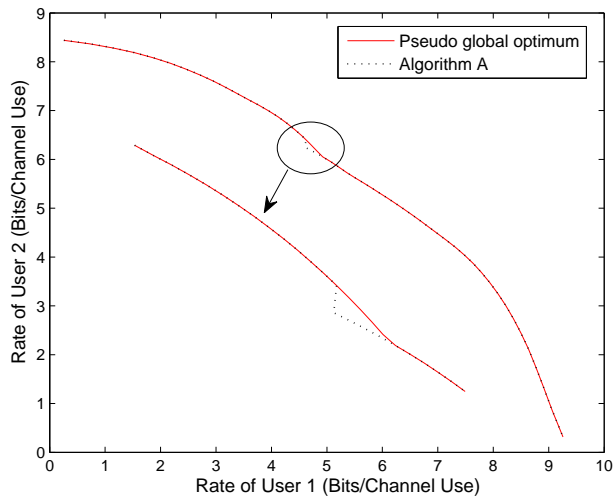}}
\par\end{centering}

\caption{\label{fig:Fig2a}\textcolor{black}{Achieved rate region} Boundary
of a two-user interference channel}

\end{figure}

\begin{figure}
\begin{centering}
\textsf{\includegraphics[clip,width=95mm]{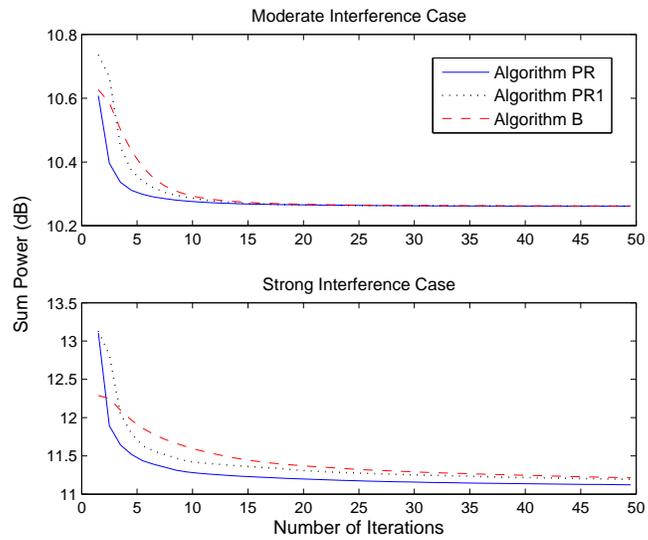}}
\par\end{centering}

\caption{\label{fig:Fig3}Convergence of the algorithms for the iTree network
\textcolor{black}{in Fig. \ref{fig:iTree}}}

\end{figure}

\begin{figure}
\begin{centering}
\textsf{\includegraphics[clip,width=95mm]{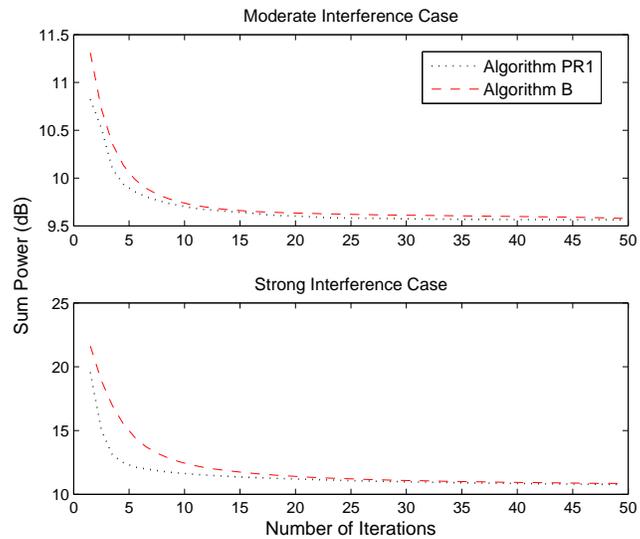}}
\par\end{centering}

\caption{\label{fig:Fig4}Convergence of the algorithms for a 3-user interference
channel}

\end{figure}

\begin{figure}
\begin{centering}
\textsf{\includegraphics[clip,width=95mm]{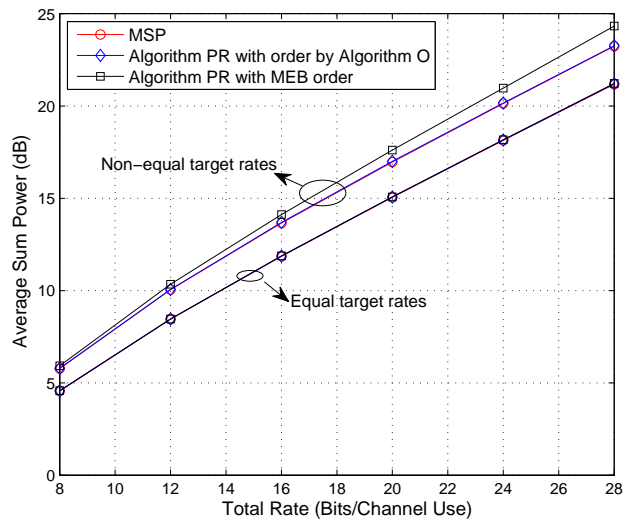}}
\par\end{centering}

\caption{\label{fig:Fig5}Average sum power vs. the total rate for a 4-user
MAC}

\end{figure}

\begin{figure}
\begin{centering}
\textsf{\includegraphics[clip,width=95mm]{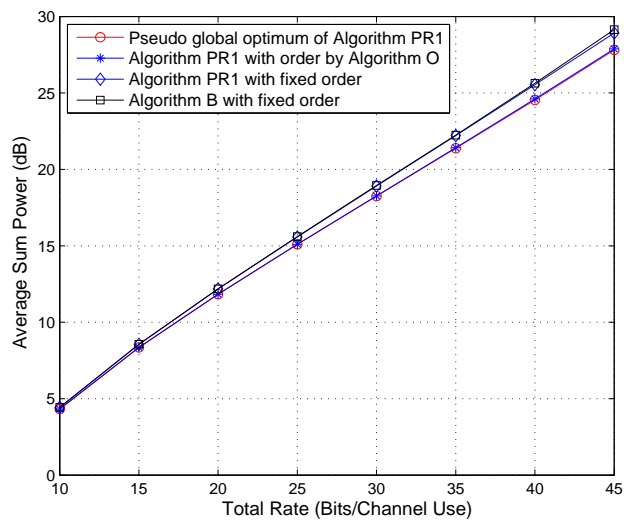}}
\par\end{centering}

\caption{\label{fig:Fig5a}Average sum power vs. the total rate for\textcolor{black}{{}
the B-MAC network in Fig. \ref{fig:sysFig1}}}

\end{figure}

\begin{figure}
\begin{centering}
\textsf{\includegraphics[clip,width=95mm]{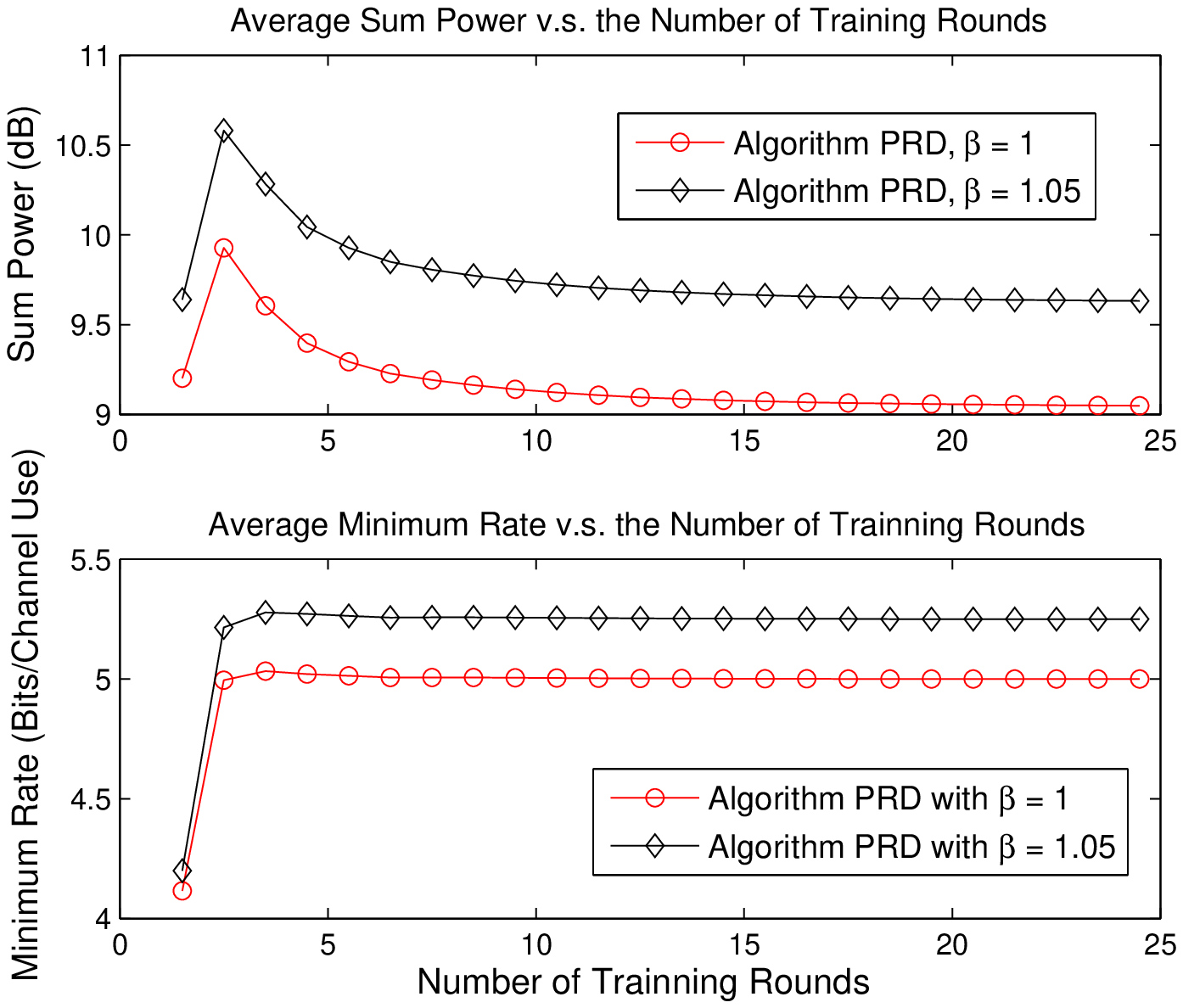}}
\par\end{centering}

\caption{\label{fig:Fig6}Convergence of the distributed algorithm for a 3-user
interference channel}

\end{figure}

\textcolor{black}{Algorithm A can be used to find the achievable rate
region boundary by varying the target rates $\mathcal{I}_{l}^{0}$'s.
It finds the point where the boundary is intersected by the ray $\alpha\left(\mathcal{I}_{1}^{0},\cdots,\mathcal{I}_{L}^{0}\right)$.
In Fig. \ref{fig:Fig2}, we plot the boundaries of the rate regions
achieved by Algorithm A with different decoding orders for a two-user
MAC with $L_{T}=2,\ L_{R}=4$. It can be observed that the convex
hull of the rate regions achieved by Algorithm A is the same as the
capacity region, which implies that Algorithm A does achieve the global
optimum for this case, and thus is a low complexity approach to calculate
the capacity region for MIMO MAC.}%
{}\textcolor{black}{{} In Fig. \ref{fig:Fig2a}, we plot the boundary
of the rate region achieved by Algorithm A for a two-user interference
channel with $L_{T}=2,\ L_{R}=4$. The }pseudo optimum boundary\textcolor{black}{{}
achieved by Algorithm A is also plotted for comparison. }In most places,
the two boundaries overlap with each other except for a small area.
It demonstrates that Algorithm A can find good solutions even with
a single initial point.

We demonstrate the superior convergence speed of Algorithm PR and
PR1. Sum power versus iteration number is shown in Fig. \ref{fig:Fig3}
for the iTree network of Fig. \ref{fig:iTree} and in Fig. \ref{fig:Fig4}
for a 3-user interference channel. Each node has four antennas. \textcolor{black}{The
target rate for each link is set as 5 bits/channel use.} In the upper
sub-plot of Fig. \ref{fig:Fig3}, we consider the moderate interference
case, where we set $g_{l,k}=0\textrm{dB},\ \forall k,l$. In the lower
sub-plot of Fig. \ref{fig:Fig3}, we consider strong interference
case, where we set $g_{l,3}=10\textrm{dB},\ l=1,2$ for the interfering
links, and $g_{l,k}=0\textrm{dB}$ for other $k,l$'s. It is not surprising
that Algorithms PR and PR1 have faster convergence speed because polite
water-filling exploits the structure of the problem. In the upper
sub-plot of Fig. \ref{fig:Fig4}, we set $g_{l,k}=0\textrm{dB},\forall k,l$.
In the lower sub-plot, we consider a strong interference channel,
i.e., $g_{l,k}=10\textrm{dB},\forall k\neq l$, and $g_{l,k}=0\textrm{dB},\forall k=l$.
Again, Algorithm PR1 converges faster than Algorithm B. Since the
problem is non-convex, the algorithms may converge to different stationary
points. But it can be observed that all the stationary points achieve
similar performance.

\textcolor{black}{In Fig. \ref{fig:Fig5}, we evaluate the performance
of Algorithm PR for a 4-user MAC with $L_{T}=2,\ L_{R}=4$. The }`\textcolor{black}{MSP}'\textcolor{black}{{}
is the optimal solution obtained by the }`\textcolor{black}{Algorithm
2}'\textcolor{black}{{} in \cite{Nihar_06_ISIT_SymCapMIMODown}, which
has much higher complexity as discussed in Section \ref{sub:Soltuions-for-MAC}.
For Algorithm PR, both the order obtained by Algorithm O and the MEB
order is considered. When the target rate for each user is the same,
Algorithm PR with both decoding orders achieves nearly the same sum
power as the MSP but with much lower complexity. When the target rates
of the users are different and are }set as $\left[R_{s},2R_{s},4R_{s},8R_{s}\right]/15$,
where $R_{s}$ is the total required rate, \textcolor{black}{Algorithm
PR} with \textcolor{black}{the order} \textcolor{black}{obtained by
Algorithm O} still achieves near-optimal performance, while \textcolor{black}{Algorithm
PR} with the MEB order performs worse than that. Not showing is that\textcolor{black}{{}
Algorithm B and PR1 also achieve the same performance as Algorithm
PR. In Fig. \ref{fig:Fig5a}, we evaluate the performance of Algorithms
B and PR1 for the B-MAC network in Fig. \ref{fig:sysFig1} with $L_{T}=L_{R}=4$.
}The target rates are set as $\left[R_{s},R_{s},2R_{s},4R_{s},8R_{s}\right]/16$.
The encoding/decoding order is partially fixed and is the same as
that in Example \ref{exa:pseudoBM} of Section \ref{sub:OrderOptimization}.
For the pseudo MAC formed by link 2 and link 3 and the pseudo BC formed
by link 4 and link 5, the fixed order that $\mathbf{x}_{3}$ is decoded
after $\mathbf{x}_{2}$ and $\mathbf{x}_{5}$ is decoded after $\mathbf{x}_{4}$,
and its improved order obtained by Algorithm O are applied. \textcolor{black}{With
the }improved order obtained by Algorithm O\textcolor{black}{, the
performance of Algorithm B is not shown because both algorithms achieve
nearly the same performance as the Pseudo global optimum of Algorithm
PR1, while the algorithms with the fixed order suffers a performance
loss.}

We illustrate the convergence behavior of the distributed optimization
with local CSI. Fig. \ref{fig:Fig6} plots the total transmit power
and the minimum rate of the users achieved by Algorithm PRD versus
the number of training rounds for a 3-user interference channel with
\textcolor{black}{$L_{T}=L_{R}=4$. In Algorithm PRD, the target rates
are set as $\beta\left(\mathcal{I}_{1}^{0},\mathcal{I}_{2}^{0},\mathcal{I}_{3}^{0}\right)$
with $\beta=1$ and $\beta=1.05$ respectively, where $\mathcal{I}_{l}^{0}=5$
(bits/channel use), $l=1,2,3$. It can be observed that after 2.5
rounds the rates are close to the targets and after $3.5$ rounds,
the powers are also close to the that of infinite rounds}. When\textcolor{black}{{}
$\beta=1$, the achieved minimum rate after $3.5$ rounds equals to
or exceeds 5 (bits/channel use) in 88 out of 100 simulations. }When\textcolor{black}{{}
$\beta=1.05$, achieved minimum rate after $3.5$ rounds equals to
or exceeds 5 (bits/channel use) in all 100 simulations, while the
total transmit power is about 0.7 dB larger. This suggests a trick
that use higher target rates than true targets in order to satisfy
the target rates in fewer number of iterations at the expense of more
power.}

\section{\textcolor{black}{Conclusion\label{sec:Conclusion}}}

The general MIMO one-hop interference networks named B-MAC networks
with Gaussian input and any valid coupling matrices are considered.
We design algorithms for maximizing the minimum of weighted rates
under sum power constraints and for minimizing sum power under rate
constraints. They can be used in \textcolor{black}{admission control
and }in \textcolor{black}{guaranteeing the quality of service.} Two
kinds of algorithms are designed. The first kind takes advantage of
existing SINR optimization algorithms by finding simple and optimal
mappings between the achievable rate region and the SINR region. The
mappings can be used for many other optimization problems. The second
kind takes advantage of the\textcolor{black}{{} polite water-filling
structure of the optimal input found in \cite{Liu_IT10s_Duality_BMAC}}.
Both centralized and distributed algorithms are designed.\textcolor{black}{{}
The proposed algorithms are either proved or shown by simulations
to converge to a stationary point, which may not be optimal for non-convex
cases, but is shown by simulations to be good solutions.}

\appendix

\subsection{\textcolor{black}{Proof for Theorem \ref{thm:EquSINRopt}\label{sub:Proof-for-EquSINRopt}}}

\textcolor{black}{Because each link is equivalent to a single-user
channel after whitening the interference plus noise, we only need
to prove that for any $\mathbf{\Sigma}$ achieving a rate $\mathcal{I}\left(\mathbf{\Sigma}\right)=\textrm{log}\left|\mathbf{I}+\mathbf{H\Sigma H^{\dagger}}\right|$
in a single-user channel $\mathbf{H}$, there exists a decomposition
of $\mathbf{\Sigma}=\dot{\mathbf{T}}\dot{\mathbf{T}}^{\dagger}$ leading
to $\left\{ \mathbf{T},\mathbf{R},\mathbf{p}\right\} $ which achieves
a set of SINRs $\gamma_{m}=e^{\mathcal{I}\left(\mathbf{\Sigma}\right)/M}-1,m=1,...,M$.}

\textcolor{black}{First, we show that considering unitary precoding
matrix $\mathbf{V}\in\mathbb{C}^{M\times M}$ will not loss generality.
Note that $\mathcal{I}\left(\mathbf{\Sigma}\right)=\textrm{log}\left|\mathbf{I}+\mathbf{H}\dot{\mathbf{T}}\mathbf{VV^{\dagger}}\dot{\mathbf{T}}^{\dagger}\mathbf{H^{\dagger}}\right|=\textrm{log}\left|\mathbf{I}+\mathbf{\bar{H}VV^{\dagger}\bar{H}^{\dagger}}\right|,$
where $\mathbf{\bar{H}}=\mathbf{H}\dot{\mathbf{T}}$ is the equivalent
channel with unitary precoding matrix $\mathbf{V}=\left[\mathbf{v}_{1},...,\mathbf{\mathbf{v}}_{M}\right]$.
Define \[
\bar{\mathbf{A}}_{m}=\bar{\mathbf{H}}^{\dagger}\left(\sum_{i=m+1}^{M}\mathbf{\bar{H}}\mathbf{v}_{i}\mathbf{v}_{i}^{\dagger}\mathbf{\bar{H}}^{\dagger}+\mathbf{I}\right)^{-1}\mathbf{\bar{H}}.\]
The SINR of the $m^{th}$ stream achieved by the MMSE-SIC receiver
is given by \cite{Varanasi_Asilomar97_MMSE_is_optimal} \[
\gamma_{m}=\mathbf{v}_{m}^{\dagger}\bar{\mathbf{A}}_{m}\mathbf{v}_{m}.\]
Hence, we only need to find a unitary precoding matrix $\mathbf{V}$
such that $\mathbf{v}_{m}^{\dagger}\bar{\mathbf{A}}_{m}\mathbf{v}_{m}=e^{\mathcal{I}\left(\mathbf{\Sigma}\right)/M}-1,m=1,...,M$.
Then the precoding matrix for the original channel is give by $\dot{\mathbf{T}}^{'}=\dot{\mathbf{T}}\mathbf{V}$. }

\textcolor{black}{We will use the method of induction.}%
{}\textcolor{black}{{} We first find a unit vector $\mathbf{v}_{M}$ such
that $\mathbf{v}_{M}^{\dagger}\bar{\mathbf{A}}_{M}\mathbf{v}_{M}=e^{\mathcal{I}\left(\mathbf{\Sigma}\right)/M}-1$.
Let $\lambda_{i}^{(M)}$ be the $i^{th}$ largest eigenvalue of $\bar{\mathbf{A}}_{M}$
and $\mathbf{u}_{i}^{(M)}$ be the corresponding eigenvector. Since
$\mathcal{I}\left(\mathbf{\Sigma}\right)=\textrm{log}\left|\mathbf{I}+\mathbf{\bar{H}VV^{\dagger}\bar{H}^{\dagger}}\right|=\textrm{log}\left|\mathbf{I}+\bar{\mathbf{A}}_{M}\right|=\textrm{log}\prod_{i=1}^{M}\left(1+\lambda_{i}^{(M)}\right)$,
we must have $\lambda_{1}^{(M)}\geq e^{\mathcal{I}\left(\mathbf{\Sigma}\right)/M}-1$
and $\lambda_{M}^{(M)}\leq e^{\mathcal{I}\left(\mathbf{\Sigma}\right)/M}-1$.
Note that $\mathbf{v}_{M}^{\dagger}\bar{\mathbf{A}}_{M}\mathbf{v}_{M}=\sum_{i=1}^{M}\left|\mathbf{v}_{M}^{\dagger}\mathbf{u}_{i}^{(M)}\right|^{2}\lambda_{i}^{(M)}.$
Because $\left\{ \mathbf{u}_{i}^{(m)},i=1,...,M\right\} $ form orthogonal
bases, there exists a $\mathbf{v}_{M}$ such that \begin{align*}
\left|\mathbf{v}_{M}^{\dagger}\mathbf{u}_{i}^{(M)}\right|^{2} & =0,\ i=2,...,M-1,\\
\left|\mathbf{v}_{M}^{\dagger}\mathbf{u}_{1}^{(M)}\right|^{2}\lambda_{1}^{(M)}+ & \left|\mathbf{v}_{M}^{\dagger}\mathbf{u}_{1}^{(M)}\right|^{2}\lambda_{M}^{(M)}=e^{\mathcal{I}\left(\mathbf{\Sigma}\right)/M}-1.\end{align*}
Then it follows $\mathbf{v}_{M}^{\dagger}\bar{\mathbf{A}}_{M}\mathbf{v}_{M}=e^{\mathcal{I}\left(\mathbf{\Sigma}\right)/M}-1$. }

\textcolor{black}{Assume we already found a set of mutual orthogonal
unit vectors $\mathbf{v}_{l},l=m+1,...,M$ such that $\mathbf{v}_{l}^{\dagger}\bar{\mathbf{A}}_{l}\mathbf{v}_{l}=e^{\mathcal{I}\left(\mathbf{\Sigma}\right)/M}-1,l=m+1,...,M$.
The rest is to prove that there exists a $\mathbf{v}_{m}$ such that
$\mathbf{v}_{m}^{\dagger}\bar{\mathbf{A}}_{m}\mathbf{v}_{m}=e^{\mathcal{I}\left(\mathbf{\Sigma}\right)/M}-1$
and $\mathbf{v}_{m}$ is orthogonal to $\mathbf{v}_{l},l=m+1,...,M$.
Perform SVD $\bar{\mathbf{A}}_{m}=\mathbf{U}_{m}\mathbf{D}_{m}\mathbf{U}_{m}^{\dagger}$.
Let $\lambda_{n}^{(m)}$ be the $n^{th}$ largest eigenvalue of $\bar{\mathbf{A}}_{m}$
and $\mathbf{u}_{n}^{(m)}$ be the corresponding eigenvector. Define
$\mathbf{\hat{u}}_{n}^{(m)}=\mathbf{u}_{n}^{(m)}-\sum_{j=m+1}^{M}\mathbf{v}_{j}^{\dagger}\mathbf{u}_{n}^{(m)}\mathbf{v}_{j},n=1,...,M$,
$\mathbf{\hat{U}}_{m}=\left[\mathbf{\hat{u}}_{1}^{(m)},...,\mathbf{\hat{u}}_{M}^{(m)}\right]$
and $\mathbf{\tilde{A}}_{m}=\mathbf{\hat{U}}_{m}\mathbf{D}_{m}\mathbf{\hat{U}}_{m}^{\dagger}$.
Then for $i,j=1,...,m$, we have \begin{align}
\mathbf{v}_{i}^{\dagger}\bar{\mathbf{A}}_{m}\mathbf{v}_{j}= & \mathbf{v}_{i}^{\dagger}\sum_{n=1}^{M}\mathbf{u}_{n}^{(m)}\lambda_{n}^{(m)}\left(\mathbf{u}_{n}^{(m)}\right)^{\dagger}\mathbf{v}_{j}\nonumber \\
= & \mathbf{v}_{i}^{\dagger}\sum_{n=1}^{M}\mathbf{\hat{u}}_{n}^{(m)}\lambda_{n}^{(m)}\left(\mathbf{\hat{u}}_{n}^{(m)}\right)^{\dagger}\mathbf{v}_{j}\label{eq:vmAMVMH}\\
= & \mathbf{v}_{i}^{\dagger}\mathbf{\tilde{A}}_{m}\mathbf{v}_{j},\nonumber \end{align}
where (\ref{eq:vmAMVMH}) follows from the definition of $\mathbf{\hat{u}}_{n}^{(m)}$
and the fact that $\mathbf{v}_{i}^{\dagger}\mathbf{v}_{k}=0,$ $i=1,...,m$,
$k=m+1,...,M$. Because $\mathbf{\tilde{A}}_{m}$ is positive semidefinite
and \begin{equation}
\mathbf{\tilde{A}}_{m}\mathbf{v}_{i}=\mathbf{0},i=m+1,...,M,\label{eq:Amvizero}\end{equation}
the rank of $\mathbf{\tilde{A}}_{m}$ must be less than $m+1$. Let
$\tilde{\lambda}_{i}^{(m)}$ be the $i^{th}$ largest eigenvalue of
$\mathbf{\tilde{A}}_{m}$ and $\mathbf{\tilde{u}}_{i}^{(m)}$ be the
corresponding eigenvector. Then we must have $\tilde{\lambda}_{i}^{(m)}=0,i=m+1,...,M$.
Define $\mathbf{V}_{m}=\left[\mathbf{v}_{1},...,\mathbf{\mathbf{v}}_{m}\right]$.
Note that the interference from the last $M-m$ streams is $\sum_{i=m+1}^{M}\mathbf{\bar{H}}\mathbf{v}_{i}$
. Then the sum rate of the first $m$ streams is given by\begin{align}
 & \textrm{log}\left|\mathbf{I}+\mathbf{\bar{H}}\mathbf{V}_{m}\mathbf{V}_{m}^{\dagger}\mathbf{\bar{H}^{\dagger}}\left(\sum_{i=m+1}^{M}\mathbf{\bar{H}}\mathbf{v}_{i}\mathbf{v}_{i}^{\dagger}\mathbf{\bar{H}}^{\dagger}+\mathbf{I}\right)^{-1}\right|\nonumber \\
= & \textrm{log}\left|\mathbf{I}+\mathbf{V}_{m}^{\dagger}\bar{\mathbf{A}}_{m}\mathbf{V}_{m}\right|\nonumber \\
= & \textrm{log}\left|\mathbf{I}+\mathbf{V}_{m}^{\dagger}\mathbf{\tilde{A}}_{m}\mathbf{V}_{m}\right|\label{eq:SRM1}\\
= & \textrm{log}\left|\mathbf{I}+\mathbf{V}^{\dagger}\mathbf{\tilde{A}}_{m}\mathbf{V}\right|\label{eq:SRM2}\\
= & \textrm{log}\prod_{i=1}^{m}\left(1+\tilde{\lambda}_{i}^{(m)}\right)=\frac{m}{M}\mathcal{I}\left(\mathbf{\Sigma}\right),\nonumber \end{align}
where (\ref{eq:SRM1}) and (\ref{eq:SRM2}) follows from (\ref{eq:vmAMVMH})
and (\ref{eq:Amvizero}) respectively. Therefore we must have $\tilde{\lambda}_{1}^{(m)}\geq e^{\mathcal{I}\left(\mathbf{\Sigma}\right)/M}-1$
and $\tilde{\lambda}_{m}^{(m)}\leq e^{\mathcal{I}\left(\mathbf{\Sigma}\right)/M}-1$.
Note that \begin{align*}
\mathbf{v}_{m}^{\dagger}\bar{\mathbf{A}}_{m}\mathbf{v}_{m} & =\mathbf{v}_{m}^{\dagger}\mathbf{\tilde{A}}_{m}\mathbf{v}_{m}\\
 & =\sum_{n=1}^{M}\left|\mathbf{v}_{m}^{\dagger}\mathbf{\tilde{u}}_{n}^{(m)}\right|^{2}\tilde{\lambda}_{n}^{(m)}.\end{align*}
Because $\left\{ \mathbf{\tilde{u}}_{i}^{(m)},i=1,...,m\right\} $
form orthogonal bases of the $m$-dimensional subspace orthogonal
to $\mathbf{v}_{l},l=m+1,...,M$, there exits a unit vector $\mathbf{v}_{m}$
in this subspace such that \begin{align*}
\left|\mathbf{v}_{m}^{\dagger}\tilde{\mathbf{u}}_{i}^{(m)}\right|^{2} & =0,i=2,...,m-1,m+1,...,M,\\
\left|\mathbf{v}_{m}^{\dagger}\mathbf{\tilde{u}}_{1}^{(m)}\right|^{2}\tilde{\lambda}_{1}^{(m)}+ & \left|\mathbf{v}_{m}^{\dagger}\mathbf{\tilde{u}}_{m}^{(m)}\right|^{2}\tilde{\lambda}_{m}^{(m)}=e^{\mathcal{I}\left(\mathbf{\Sigma}\right)/M}-1.\end{align*}
Then we have $\mathbf{v}_{m}^{\dagger}\bar{\mathbf{A}}_{m}\mathbf{v}_{m}=e^{\mathcal{I}\left(\mathbf{\Sigma}\right)/M}-1$.
This completes the proof.}

\subsection{\textcolor{black}{Proof for Theorem \ref{thm:equpoweropt}\label{sub:Proof-for-Theorem-Equpow}}}

\textcolor{black}{Note that $\mathbf{\Sigma}=\sum_{m=1}^{M}p_{m}\mathbf{t}_{m}\mathbf{t}_{m}^{\dagger}$
implies that $M\ge\text{Rank}(\mathbf{\Sigma})$. Define an $M\times M$
DFT matrix $\mathbf{F}$ where the element at the $k^{th}$ row and
$l^{th}$ column is $\mathbf{F}_{k,l}=e^{-\frac{2\pi kl}{M}j}/\sqrt{M}$.
If $M$ is chosen to be greater than or equal to $L_{T}$, let $\mathbf{F}_{0}\in\mathbb{C}^{L_{T}\times M}$
be the matrix comprised of the first $L_{T}$ rows of $\mathbf{F}$.
Otherwise, let $\mathbf{F}_{0}\in\mathbb{C}^{L_{T}\times M}$ be the
matrix such that the upper sub matrix are $\mathbf{F}$, and other
elements are zero. Perform SVD $\mathbf{\Sigma}=\mathbf{U}\mathbf{D}\mathbf{U}^{\dagger}$,
where the diagonal elements of $\mathbf{D}$ are positive and in descending
order. Let $\dot{\mathbf{T}}=\mathbf{U}\mathbf{D}^{1/2}\mathbf{F}_{0}$.
It can be verified that $\dot{\mathbf{T}}\dot{\mathbf{T}}^{\dagger}=\mathbf{\Sigma}$.
The norms of the columns of $\dot{\mathbf{T}}$ are the diagonal elements
of $\dot{\mathbf{T}}^{\dagger}\dot{\mathbf{T}}=\mathbf{F}_{0}^{\dagger}\mathbf{D}\mathbf{F}_{0}$
and they are equal to $\frac{\sum_{i=1}^{L_{T}}\mathbf{D}_{i,i}}{M}$.
Then the corresponding transmit powers satisfy $p_{m}=\textrm{Tr}\left(\mathbf{\mathbf{\Sigma}}\right)/M,\forall m$.}

\subsection{\textcolor{black}{Proof of the Theorem \ref{thm:optimality}\label{sub:Proof-of-optcon}}}

\textcolor{black}{First, we show that any $\tilde{\mathbf{\Sigma}}_{1:L}$
satisfying the optimality conditions for }\textbf{\textcolor{black}{FOPa}}\textcolor{black}{{}
must satisfy the KKT conditions of }\textbf{\textcolor{black}{FOPa}}\textcolor{black}{.
The Lagrangian of }\textbf{\textcolor{black}{FOPa}}\textcolor{black}{{}
(\ref{eq:P1a}) is\begin{eqnarray}
 & L\left(\mathbf{\Sigma}_{1:L},\nu_{l},\mathbf{\Theta}_{1:L}\right)\label{eq:LagSPMP}\\
= & \frac{\left(1-\sum_{l\neq1}\nu_{l}\right)}{\mathcal{I}_{1}^{0}}\mathcal{I}_{1}\left(\mathbf{\Sigma}_{1:L},\mathbf{\Phi}\right)+\frac{\sum_{l\neq1}\nu_{l}}{\mathcal{I}_{l}^{0}}\mathcal{I}_{l}\left(\mathbf{\Sigma}_{1:L},\mathbf{\Phi}\right)\nonumber \\
 & +\mu\left(P_{T}-\sum_{l=1}^{L}\textrm{Tr}\left(\mathbf{\Sigma}_{l}\right)\right)+\sum_{l=1}^{L}\textrm{Tr}\left(\mathbf{\Sigma}_{l}\mathbf{\Theta}_{l}\right).\nonumber \end{eqnarray}
The KKT conditions are\begin{eqnarray}
 & \nabla_{\mathbf{\Sigma}_{l}}L=0,\ \textrm{Tr}\left(\mathbf{\Sigma}_{l}\mathbf{\Theta}_{l}\right)=0,\ \mathbf{\Sigma}_{l},\mathbf{\Theta}_{l}\succeq0,\ \forall l;\nonumber \\
 & \nu_{l}\geq0,\ \mathcal{I}_{1}^{0}\mathcal{I}_{l}\left(\mathbf{\Sigma}_{1:L},\mathbf{\Phi}\right)=\mathcal{I}_{l}^{0}\mathcal{I}_{1}\left(\mathbf{\Sigma}_{1:L},\mathbf{\Phi}\right),\ \forall l\neq1;\label{eq:KKT1}\\
 & \mu\geq0,\ P_{T}=\sum_{l=1}^{L}\textrm{Tr}\left(\mathbf{\Sigma}_{l}\right).\nonumber \end{eqnarray}
Recall $\tilde{\nu}_{l}$ is the polite water-filling level in the
optimality condition. Let $\mu=1/\sum_{l=1}^{L}\mathcal{I}_{l}^{0}\tilde{\nu}_{l}$
and $\nu_{l}=\mu\mathcal{I}_{l}^{0}\tilde{\nu}_{l}$. Then the condition
$\nabla_{\mathbf{\Sigma}_{l}}L=0$ can be expressed as \begin{align}
 & \sum_{k\neq l}\tilde{\nu}_{k}\mathbf{\Phi}_{k,l}\mathbf{H}_{k,l}^{\dagger}\left(\mathbf{\Omega}_{k}^{-1}-\left(\mathbf{\Omega}_{k}+\mathbf{H}_{k,k}\mathbf{\Sigma}_{k}\mathbf{H}_{k,k}^{\dagger}\right)^{-1}\right)\mathbf{H}_{k,l}+\mathbf{I}\nonumber \\
 & =\tilde{\nu}_{l}\mathbf{H}_{l,l}^{\dagger}\left(\mathbf{\Omega}_{l}+\mathbf{H}_{l,l}\mathbf{\Sigma}_{l}\mathbf{H}_{l,l}^{\dagger}\right)^{-1}\mathbf{H}_{l,l}+\frac{1}{\mu}\mathbf{\Theta}_{l}.\label{eq:DLequ}\end{align}
By Theorem \ref{thm:FequRGWF}, $\mathbf{\tilde{\hat{\mathbf{\Sigma}}}}_{1:L}$
can be expressed as\begin{equation}
\mathbf{\tilde{\hat{\mathbf{\Sigma}}}}_{l}=\tilde{\nu}_{l}\left(\tilde{\mathbf{\Omega}}_{l}^{-1}-\left(\tilde{\mathbf{\Omega}}_{l}+\mathbf{H}_{l,l}\tilde{\mathbf{\Sigma}}_{l}\mathbf{H}_{l}^{\dagger}\right)^{-1}\right),\forall l.\label{eq:SigmahM}\end{equation}
Substitute $\mathbf{\tilde{\mathbf{\Sigma}}}_{1:L}$ into condition
(\ref{eq:DLequ}) to obtain \begin{equation}
\tilde{\hat{\mathbf{\Omega}}}_{l}=\tilde{\nu}_{l}\mathbf{H}_{l,l}^{\dagger}\left(\tilde{\mathbf{\Omega}}_{l}+\mathbf{H}_{l,l}\tilde{\mathbf{\Sigma}}_{l}\mathbf{H}_{l,l}^{\dagger}\right)^{-1}\mathbf{H}_{l,l}+\frac{1}{\mu}\mathbf{\Theta}_{l},\ \forall l.\label{eq:KeyKKT}\end{equation}
Because the KKT condition (\ref{eq:KeyKKT}) is also that of the single
user polite water-filling problem over the channel $\tilde{\mathbf{\Omega}}_{l}^{-1/2}\mathbf{H}_{l,l}\tilde{\hat{\mathbf{\Omega}}}_{l}^{-1/2}$
and by the optimality condition, $\mathbf{\tilde{\Sigma}}_{l}$ has
polite water-filling structure over this channel, $\mathbf{\tilde{\mathbf{\Sigma}}}_{1:L}$
satisfies condition (\ref{eq:KeyKKT}). It can be verified that $\mathbf{\tilde{\mathbf{\Sigma}}}_{1:L}$
also satisfies all other KKT conditions in (\ref{eq:KKT1}).}

\textcolor{black}{With a similar proof as above, one can show that
for }\textbf{\textcolor{black}{SPMP}}\textcolor{black}{, the necessary
optimality conditions also implies that the KKT conditions hold.}

\textcolor{black}{The sufficient part for }\textbf{\textcolor{black}{FOPa}}\textcolor{black}{{}
is proved by showing that the optimum of }\textbf{\textcolor{black}{FOPa}}\textcolor{black}{{}
is equal to the optimum of some weighted sum rate maximization problem
(}\textbf{\textcolor{black}{WSRMP}}\textcolor{black}{) and the KKT
conditions are sufficient for global optimality when the weighted
sum rate $\sum_{l}^{L}\tilde{\nu}_{l}\mathcal{I}_{l}\left(\mathbf{\mathbf{\Sigma}}_{1:L},\mathbf{\Phi}\right)$
is a convex function. Suppose certain $\tilde{\mathbf{\Sigma}}_{1:L}$
satisfies the optimality conditions for }\textbf{\textcolor{black}{FOPa}}\textcolor{black}{.
Then following similar steps from (\ref{eq:LagSPMP}) to (\ref{eq:KeyKKT}),
it can be shown that $\tilde{\mathbf{\Sigma}}_{1:L}$ satisfies the
KKT conditions of the following }\textbf{\textcolor{black}{WSRMP}}\textcolor{black}{.}

\textcolor{black}{\begin{align}
\textrm{\textbf{WSRMP}:}\underset{\Sigma_{1:L}}{\textrm{max}} & \sum_{l=1}^{L}\tilde{\nu}_{l}\mathcal{I}_{l}\left(\mathbf{\Sigma}_{1:L},\mathbf{\Phi}\right)\label{eq:WSRMP}\\
\textrm{s.t.} & \mathbf{\Sigma}_{l}\succeq0,l=1,\cdots,L\:\textrm{and}\:\sum_{l=1}^{L}\textrm{Tr}\left(\mathbf{\Sigma}_{l}\right)\leq P_{T},\nonumber \end{align}
where $\left\{ \tilde{\nu}_{l}\right\} $ are the polite water-filling
levels corresponding to $\tilde{\mathbf{\Sigma}}_{1:L}$.}

\textcolor{black}{Because $\sum_{l=1}^{L}\tilde{\nu}_{l}\mathcal{I}_{l}\left(\mathbf{\Sigma}_{1:L},\mathbf{\Phi}\right)$
is a concave function of $\mathbf{\Sigma}_{1:L}$, the KKT conditions
are sufficient for the global optimality of }\textbf{\textcolor{black}{WSRMP}}\textcolor{black}{.
Noting that $\mathcal{I}_{l}\left(\tilde{\mathbf{\Sigma}}_{1:L},\mathbf{\Phi}\right)=\alpha\mathcal{I}_{l}^{0},\forall l$.
If $\mathbf{\tilde{\mathbf{\Sigma}}}_{1:L}$ is not an optimum of
}\textbf{\textcolor{black}{FOPa}}\textcolor{black}{, there exists
a $\mathbf{\tilde{\mathbf{\Sigma}}}_{1:L}^{'}$ satisfying all the
constraints, such that $\mathcal{I}_{1}\left(\mathbf{\tilde{\mathbf{\Sigma}}}_{1:L}^{'},\mathbf{\Phi}\right)/\mathcal{I}_{1}^{0}>\alpha$,
from which it follows that $\mathcal{I}_{l}\left(\mathbf{\tilde{\mathbf{\Sigma}}}_{1:L}^{'},\mathbf{\Phi}\right)/\mathcal{I}_{l}^{0}>\alpha,\forall l$
since $\mathbf{\tilde{\mathbf{\Sigma}}}_{1:L}^{'}$ satisfy the first
constraint in }\textbf{\textcolor{black}{FOPa}}\textcolor{black}{.
Then we must have $\sum_{l=1}^{L}\tilde{\nu}_{l}\mathcal{I}_{l}\left(\mathbf{\tilde{\mathbf{\Sigma}}}_{l}^{'},\mathbf{\Phi}\right)>\sum_{l=1}^{L}\tilde{\nu}_{l}\alpha\mathcal{I}_{l}^{0}=\sum_{l=1}^{L}\tilde{\nu}_{l}\mathcal{I}_{l}\left(\tilde{\mathbf{\Sigma}}_{1:L},\mathbf{\Phi}\right)$,
which contradicts the fact that $\mathbf{\tilde{\mathbf{\Sigma}}}_{1:L}$
is an optimum of }\textbf{\textcolor{black}{WSRMP}}\textcolor{black}{.}

\textcolor{black}{Similarly, using KKT conditions, the sufficient
part for }\textbf{\textcolor{black}{SPMP}}\textcolor{black}{{} can be
proved by showing that $\tilde{\mathbf{\Sigma}}_{1:L}$ is the global
optimum of the following convex optimization problem }

\textcolor{black}{\begin{align*}
\underset{\Sigma_{1:L}}{\textrm{min}} & \sum_{l=1}^{L}\textrm{Tr}\left(\mathbf{\Sigma}_{l}\right)\\
\textrm{s.t.} & \mathbf{\Sigma}_{l}\succeq0,l=1,\cdots,L\:\textrm{and}\:\sum_{l=1}^{L}\tilde{\nu}_{l}\mathcal{I}_{l}\left(\mathbf{\Sigma}_{1:L},\mathbf{\Phi}\right)\geq\sum_{l=1}^{L}\tilde{\nu}_{l}\mathcal{I}_{l}^{0}.\end{align*}
}


\end{document}